\documentclass{article}

\pdfoutput=1

\usepackage{amssymb,amsmath,commath,natbib,graphicx,subfigure,amsfonts,amstext,amsthm}
\usepackage{latexsym,indentfirst,enumerate}
\usepackage{xr,color}

\topmargin -1.5cm \oddsidemargin -0.04cm \evensidemargin -0.04cm \textwidth
16.59cm \textheight 21.94cm
\parskip 7.2pt

\newcommand{\ghadd}[1]{\textcolor{black}{#1}}
%%%%%%%%%%%%%%%%%%%%%%%%%%%%%%%%%%%%%%%%%%%%%%%%%%%%%%%%%%%%%%%%%%%%%%%%%%%%%%%%
%PART I. GENERAL COMMANDS

%%%%%%%%%%%%%%%%%%%%%%%%%%%%%%%%%%%%%%%%%%%%%%%%%%%%%%%%%%%%%%%%%
% I. NUMBERING AND ENVIRONMENTS FOR THEOREMS, PROPOSITIONS, LEMMAS, AND EQUATIONS

%A. Actual Theorems

\newtheorem{lem}{Lemma}

\newtheorem{thm}{Theorem}
\newtheorem{cor}{Corollary}

 %use \textbf{} to set it off

%B. Special Referencing

%B1. Built in ones

%1. \eqref (equations, in standard package)
%%2. in commath:
%a)\thmref
%b)\exref (example)
%c) \defnref
%d) \lemref
%e)\propref
%f)\remref
%g)\assref
%h)\colref - corollary
%also figures, section, appendix

%B2. I try to define
%1. fact KEY need dollar signs for it work $\factref{fact:varW}$

%%%%%%%%%%%%%%%%%%%%%%%%%%%%%%%%%%%%%%%%%%%%%%%%%%%%%%%%%%%%%%%%%

%%%%%%%%%%%%%%%%%%%%%%%%%%%%%%%%%%%%%%%%%%%%%%%%%%%%%%%%%%%%%%%%%
% II. DEFS and COMMANDS FOR COMMON MATHEMATICAL SYMBOLS

\def\real{\mathop{\mathbb R}}      % real numbers

      % rational numbers

       %natural numbers

\def\argmin{\mathop{\rm{argmin}}}

\def\argmax{\mathop{\rm{argmax}}}

 %supp = support

%1. norm command, commented out bc in commath
%\newcommand{\norm}[2]{\mathop{\left\|#2 \right\|_{#1}} } %L_p norm first argument specifies p, second what goes inside
% eg. \norm{1}{h}

%%%%%%%%%%%%%%%%%%%%%%%%%%%%%%%%%%%%%%%%%%%%%%%%%%%%%%%%%%%%%%%%%

%%%%%%%%%%%%%%%%%%%%%%%%%%%%%%%%%%%%%%%%%%%%%%%%%%%%%%%%%%%%%%%%%
% III. SHORTCUTS

%A. Greek symbols
\newcommand{\be}{\begin{eqnarray*}}
\newcommand{\ee}{\end{eqnarray*}}

\newcommand{\de}{\delta}

\newcommand{\Ta}{\Theta}
\newcommand{\ta}{\theta}

\newcommand{\beq}{\begin{eqnarray}}
\newcommand{\eeq}{\end{eqnarray}}

\newcommand{\convd}{\stackrel{d}{\rightarrow}}

\newcommand{\convas}{\stackrel{a.s.}{\longrightarrow}}
\newcommand{\betabold}    {\mbox{\boldmath${\beta}$}}

%%%%%%%%%%%%%%%%%%%%%%%%%%%%%%%
        % (\Line is better!)

\def\argmax{\operatornamewithlimits{arg\,max}}
\def\argmin{\operatornamewithlimits{arg\,min}}

\title{Bayesian Model Robustness via Disparities}
\author{Giles Hooker and Anand Vidyashankar}

\date{}

\begin{document}

\maketitle

%\thanks{Giles Hooker's research was supported by NSF grant
%DEB-0813734 and the Cornell University Agricultural Experiment Station federal
%formula funds Project No. 150446. Anand Vidyashankar's research was supported
%in part by a grant from NSF DMS 000-03-07057 and also by grants from the
%NDCHealth Corporation. }
%
%\pagebreak

\begin{abstract}
This paper develops a methodology for robust Bayesian inference through the use
of disparities.  Metrics such as Hellinger distance and negative exponential
disparity have a long history in robust estimation in frequentist inference. We
demonstrate that an equivalent robustification may be made in Bayesian
inference by substituting an appropriately scaled disparity for the log
likelihood to which standard Monte Carlo Markov Chain methods may be applied. A
particularly appealing property of minimum-disparity methods is that while they
yield robustness with a breakdown point of 1/2,
the resulting parameter estimates are also efficient when the
posited probabilistic model is correct. We demonstrate that a similar property
holds for disparity-based Bayesian inference.  We further show that in the
Bayesian setting, it is also possible to extend these methods to robustify
regression models, random effects distributions and other hierarchical models.
The methods are demonstrated on real world data.
\end{abstract}

\section{Introduction}

In this paper we develop a new methodology for providing robust inference in a Bayesian
context.  When the data at hand are suspected of being contaminated with large
outliers it is standard practice to account for these either (1) by postulating a
heavy-tailed distribution, (2) by viewing the data as a mixture,
with the contamination explicitly occurring as a mixture component
 or (3) by employing priors that penalize large values of a
parameter \citep[see][]{Berger94,Albert09,AndradeOhagan06}. In the context of frequentist inference, these issues are investigated using methods such as M-estimation, R-estimation etc.  and are part of  standard robustness literature \citep[see][]{Hampel86, Maronna06,Jureckova06}.  As is the case for Huberized
loss functions in frequentist
inference, even though these approaches provide robustness they lead to a loss
of precision when contamination is not present in (1) and (2) above or to a distortion of prior
knowledge in (3). \ghadd{Explicit modeling of outliers as in (2) also requires knoweldge of outlier configurations -- how many mixture components to use and what distributions to use them in, for example -- and may not be robust if these are incorrect.}
This paper develops an alternative systematic Bayesian approach, based on disparity theory,
that is shown to provide robust inference without
loss of efficiency for large samples.

In parametric frequentist inference using independent and identically distributed (i.i.d.) data, several authors
\citep{Beran77,TamuraBoos86,Simpson87,Simpson89,ChenVidyashankar06}
have demonstrated that the dual goal of efficiency and robustness is achievable
by using the minimum Hellinger distance estimator (MHDE). In the i.i.d. context,
MHDE estimators are defined by minimizing the Hellinger distance between a
postulated parametric density $f_\ta(\cdot)$ and a non-parametric estimate
$g_n(\cdot)$ over the $p$-dimensional parameter space $\Theta$; that is,
\begin{equation} \label{HD}
\hat{\ta}_{HD} = \arg \inf_{\ta \in \Theta} \int \left(g_n^{1/2}(x) -
f^{1/2}_\ta(x) \right)^2 dx.
\end{equation}
Typically, for continuous data,
$g_n(\cdot)$ is taken to be a kernel density estimate; if the probability model is supported
on discrete values, the empirical distribution is used.  More generally, \citet{Lindsay94}
introduced the concept of a minimum disparity procedure; developing a class of
divergence measures that have similar properties to minimum Hellinger distance estimates.
These have been further developed in \citet{BSV97} and \citet{ParkBasu04}.
\citet{HookerVidyashankar09b} have extended these methods to a regression framework.

A remarkable property of disparity-based estimates is that while they confer
robustness, they are also first-order efficient. That is, they obtain the
information bound when the postulated density $f_{\ta}(\cdot)$ is correct. In
this paper we develop robust Bayesian inference using disparities. We show that
appropriately scaled disparities  approximate $n$ times the negative
log-likelihood near the true parameter values. We use this as a motivation to
replace the log likelihood in Bayes rule  with a disparity to create  what we
refer to as the ``D-posterior''. We demonstrate that this technique is readily
amenable to Markov Chain Monte Carlo (MCMC) estimation methods. Finally, we
establish that the expectation of the D-posterior is asymptotically efficient
and the resulting credible intervals provide asymptotically accurate coverage
when the proposed parametric model is correct.

Disparity-based robustification in Bayesian inference can be naturally extended
to a regression framework through the use of conditional density estimation as
discussed in \citet{HookerVidyashankar09b}. We pursue this extension to
hierarchical models and replace various terms in the hierarchy with
disparities. This creates a novel ``plug-in procedure'' -- allowing the
robustification of inference with respect to particular distributional
assumptions in complex models. We develop this principle and demonstrate its
utility on a number of examples. The use of a disparity within a Bayesian
context imposes an additional computational burden through the estimation of a
kernel density estimate and the need to run MCMC methods. Our analysis and
simulations demonstrate that while the use of MCMC significantly increases
computational costs, the additional cost of the use of disparities is on the
order of a factor between 2 and 10, remaining implementable for many applications.
These methods require marginalization of an exponentiated disparity with respect to
the random effects distribution; a task that can be achieved through MCMC methods, but
would otherwise be numerically challenging.

The use of divergence measures for outlier analysis in a Bayesian context has
been considered in \citet{dey} and \citet{PD}. Most of this work is concerned
with the use of divergence measures to study Bayesian robustness when the
priors are contaminated and to diagnose the effect of outliers. These divergence
measures are computed using MCMC techniques. More recently, \citet{zhanx} and \citet{SRL10} have developed analogues of R-estimates and Bayesian Sandwich estimators. These methods can be viewed to be extensions of robust frequentist methods to Bayesian context. By contrast, our paper is based
on explicitly replacing the likelihood with a disparity in order to provide
a systematic approach  to obtain inherently robust and efficient inference.

Within the context of Bayesian analysis, robustness has been studied with respect to the specification of both prior and data distributions. Robustness to outliers as studied in the frequentist literature is referred to as ``outlier-rejection'' in Bayesian analysis and is studied for example in  \citet{dawid73}, \citet{Ohagan79}, \citet{Ohagan90}, \citet{choy97} and \citet{desgagne07}. Here, outlier rejection indicates that as some group of data is moved to infinity, the posterior reverts to the posterior without those observations. This corresponds to a breakdown point of 1; a rather extreme value for frequentist robustness. We also obtain this breakdown point, but additionally develop a notion of an asymptotic breakdown point in which we examine the worst-case displacement as sample-size increases. We are able to show that this notion effectively describes robustness and distinguishes Bayesian methods along with regularized versions of robust estimators from estimators that are trivially made robust by, for example, threshholding their estimates.

The remainder of the paper is structured as follows: we provide a formal
definition of the disparities in Section \ref{sec:MDE}. Disparity-based
Bayesian inference are developed in Section \ref{sec:dposterior}. Robustness
and efficiency of these estimates are demonstrated theoretically and through a
simulation for i.i.d. data in Section \ref{sec:simple}. The methodology is
extended to regression models in Section \ref{sec:conditional}. The plug-in
procedure is presented in Section \ref{sec:plugin} through an application to a
one-way random-effects model.
%Some techniques in dimension reduction for
%regression problems are given in Section \ref{sec:reduction}.
Section
\ref{sec:examples} is devoted to two real-world data sets where we apply these
methods to generalized linear mixed models and a random-slope random-intercept
models for longitudinal data.  Proofs of  technical results and details of
simulation studies are relegated to an online appendix.

\section{Disparities and Their Numerical Approximations} \label{sec:MDE}

In this section we describe a class of disparities and numerical
procedures for evaluating them.  These disparities compare a
proposed parametric family of densities to a non-parametric density estimate.
We assume that we have i.i.d. observations $X_i$ for $i = 1,\ldots,n$ from some
density $h(\cdot)$. We let $g_n$ be the kernel density estimate:
\begin{equation} \label{iid.gn}
g_n(x) = \frac{1}{n c_n} \sum_{i=1}^n K \left( \frac{x-X_i}{c_n} \right)
\end{equation}
where the kernel $K$ density and $c_n$ is a bandwidth for the kernel. If $c_n
\rightarrow 0$ and $n c_n \rightarrow \infty$ it is known that $g_n(\cdot)$ is
an $L_1$-consistent estimator of $h(\cdot)$ \citep{DevroyeGyorfi85}. In
practice, a number of plug-in bandwidth choices are available for $c_n$
\citep[e.g.][]{Silverman82, SheatherJones91, EJS94}. \ghadd{For non-i.i.d. data examined in Sections \ref{sec:conditional} and \ref{sec:plugin},  plug-in bandwidths can be calculated from method of moments estimates. We have found our results to be insensitive to the choice of plug-in bandwidth selector.}

We begin by reviewing the class of disparities described in \citet{Lindsay94}.
The definition of disparities involves the residual function,
\begin{eqnarray}
\delta_{\ta,g}(x)= \frac{g(x)- f_\ta(x)}{f_\ta(x)},
\end{eqnarray}
defined on the support of $f_\ta(x)$ and a function $G:[-1,\ \infty) \rightarrow \mathcal{R}$. $G(\cdot)$ is assumed to be
strictly convex and thrice
differentiable with $G(0) = 0$, $G'(0) = 0$ and $G''(0) = 1$.  The disparity between $f_\ta$ and $g_n$ is
defined to be
\begin{eqnarray}{\label{DEEB0}}
D(g_n, f_\ta) = \int_{{\cal R}}G(\de_{\ta, g_n}(x))f_\ta(x) dx.
\end{eqnarray}
An estimate of $\ta$ obtained by minimizing (\ref{DEEB0}) is called
a {\em minimum disparity estimator}. Under differentiability assumptions, this is equivalent to solving the equation
\[
\nabla_\ta D(g_n,f_\ta) = \int A(\delta_\ta(x)) \nabla_\ta f_\ta(x) dx = 0,
\]
where $A(\delta) = G(\delta) - (1+\delta) G'(\delta)$ and $\nabla_\ta$
indicates the derivative with respect to $\ta$.

This framework contains
Kullback-Leibler divergence as approximation to the likelihood:
\[
KL(g_n,f_\ta) = -\int \left( \log f_\ta(x) \right) g_n(x) dx \approx
-\frac{1}{n} \sum_{i=1}^n \log f_\ta(x_i)
\]
for the choice $G(\delta) = (\delta+1)\log(\delta+1)$ \ghadd{up to a constant}.
% for any constant $a$.
%We note that the choice of $a$ is arbitrary. In particular, we will assume $a =1$ so that $G(0) = 1$.
The squared Hellinger disparity (HD) corresponds to the
choice $G(x)=[(x+1)^{1/2}-1]^2-1$. While robust statistics is typically concerned
with the impact of outliers, the alternate problem of {\em inliers} -- defined as
nominally-dense regions that lack empirical data and consequently small values
of $\de_{\ta, g_n}(x)$ -- can also cause instability. It has been illustrated
in the literature that HD down weighs the effect of large values of $\de_{\ta,
g_n}(x)$ (outliers) relative to the likelihood but magnifies the effect of
inliers. An alternative, the negative exponential disparity, based on the
choice $G(x)= e^{-x}-1$ down weighs the effect of both outliers and inliers.

The integrals involved in (\ref{DEEB0}) are not analytically
tractable and the use of Monte Carlo integration to approximate the objective function has been
suggested in \citet{ChenVidyashankar06}. More specifically, if $z_1,\ldots,z_N$ are i.i.d.
random samples generated from $g_n(\cdot)$, one can approximate $D(g_n, f_\ta)$ by
\begin{equation} \label{mcestimate}
\hat{D}(g_n, f_\ta) = \frac{1}{N} \sum_{i=1}^N G(\de_{\ta,
g_n}(z_i))\frac{f_\ta(z_i)}{g_n(z_i)}.
\end{equation}
The $z_i$ can be efficiently generated in the form $z_i = c_n W_i + X_{N_i}$ for
$W_i$ a random variable generated according to $K$ and $N_i$ sampled uniformly from
the integers $1,\ldots,N$. In the specific case of Hellinger distance
approximation, the above reduces to
\[
\widehat{HD}^2(g_n,f_\ta) = 2 - \frac{2}{N} \sum_{i=1}^N
\frac{f^{1/2}_\ta(z_i)}{g_n^{1/2}(z_i)}.
\]
The use of a fixed set of Monte Carlo samples from $g_n(\cdot)$ when optimizing
for $\ta$ provides a stochastic approximation to an objective function that
remains a smooth function of $\ta$ and hence avoids the need for complex
stochastic optimization. Similarly, in the present paper, we hold the $z_i$
constant when applying MCMC methods to generate samples from the posterior
distribution in order to improve their mixing properties. If $f_\ta$ is
Gaussian with $\ta = (\mu,\sigma)$, Gauss-Hermite quadrature rules can be used
to avoid Monte Carlo integration, leading to improved computational efficiency
in some circumstances. In this case we have
\begin{equation} \label{GHestimate}
\tilde{D}(g_n,f_\ta) = \sum_{i=1}^M w_i(\ta) G(\delta_{\ta,n}(\xi_i(\ta))),
\end{equation}
where the $\xi_i(\ta)$ and $w_i(\ta)$ are the points and weights for a
Gauss-Hermite quadrature scheme for parameters $\ta = (\mu,\sigma)$. The choice
between \eqref{mcestimate} and \eqref{GHestimate} depends on the disparity and
the problem under investigation. When $g_n(\cdot)$ has many local modes,
\eqref{GHestimate} can result in choosing parameters for which some quadrature
point coincides with a local modes. However, \eqref{mcestimate} can be rendered
unstable by the factor $f_\ta(z_i)/g_n(z_i)$ for $\ta$ far from the maximizing value of $D(g_n,f_\ta)$. In
general, we have found \eqref{mcestimate} preferable when using Hellinger
distance, but that \eqref{GHestimate} performs better with negative exponential
disparity.  The relative computational cost of using $\hat{D}(g_n, f_\ta)$ versus
$\tilde{D}(g_n,f_\ta)$ in various circumstances is discussed in Online Appendix
\ref{computation}.

\section{The D-Posterior and MCMC Methods} \label{sec:dposterior}

We begin this section by a heuristic description of the second-order approximation of
$KL(f_\ta,g_n)$ by $D(f_\ta,g_n)$.  A Taylor expansion of
$KL(f_\ta,g_n)$ about $\ta$ has the following first two terms:
\begin{align}
%\nabla_\ta KL(g_n,f_\ta) & = \int \left[ \nabla_\ta f_\ta(x)\right]
%(\delta_{\ta,g_n}(x)+1) dx  \label{KLd1} \\
\nabla^2_\ta KL(g_n,f_\ta) & = \int \left[
\frac{1}{f_\ta(x)}\left(\nabla_\ta f_\ta(x)\right)\left(\nabla_\ta
f_\ta(x)\right)^T - \nabla^2_\ta f_\ta(x) \right] (\delta_{\ta,g_n}(x)+1) dx. \label{KLd2} \\
& = \int \left[ \left(\frac{\nabla_\ta
f_\ta(x)}{f_\ta(x)}\right)\left(\frac{\nabla_\ta f_\ta(x)}{f_\ta(x)}\right)^T
-  \frac{\nabla^2_\ta f_\ta(x)}{f_\ta(x)} \right] g_n(x)  dx \nonumber
\end{align}
where the second term approximates the observed Fisher Information when the
bandwidth is small.  The equivalent terms for $D(g_n,f_\ta)$ are:
\begin{align}
%\nabla_\ta D(g_n,f_\ta) & =  \int \left[\nabla_\ta f_\ta(x) \right] A(\delta_{\ta,g_n}(x))dx
%\nonumber \\
\nabla^2_\ta D(g_n,f_\ta) & = \int \nabla^2_\ta f_\ta(x)
A(\delta_{\ta,g_n}(x))dx
\label{Dd1} \\
& \hspace{1cm}
- \int \frac{1}{f_\ta(x)}\left(\nabla_\ta f_\ta(x)\right) \left(\nabla_\ta
f_\ta(x)\right)^T (\delta_{\ta,g_n}(x)+1) A'(\delta_{\ta,g_n}(x)) dx. \nonumber
\end{align}
Now, if $g_n$ is consistent, $\delta_{\ta,g_n}(x) \rightarrow 0$ almost surely (a.s.). Observing
that $A(0)=0$, $A'(0) = -1$ from the conditions on $G$ and observing $\int \nabla^2_\ta f_\ta(x)dx=0$, we obtain the equality
of \eqref{KLd2} and \eqref{Dd1}. The fact that these heuristics yield
efficiency was first noticed by \citet{Beran77} (eq. 1.1).

In the context of Bayesian methods, inference is based on the posterior
\begin{equation} \label{lposterior}
P(\ta|x) = \frac{ P(x|\ta) \pi(\ta)}{ \int P(x|\ta) \pi(\ta) d\ta},
\end{equation}
where  $P(x|\ta) = \exp(\sum_{i=1}^n \log f_\ta(x_i))$ and $\pi$ a prior density which we assume has a first moment.
Following the heuristics above, in this paper
we propose the simple expedient of replacing the log likelihood,
$\log P(x|\ta)$, in \eqref{lposterior} with a disparity:
\begin{equation}
\label{eq:dposterior} P_D(\ta|g_n) = \frac{ e^{-n D(g_n,f_\ta)} \pi(\ta)}{ \int
e^{-n D(g_n,f_\ta)} \pi(\ta) d\ta}.
\end{equation}
In the case of Hellinger distance, the appropriate disparity is
$2HD^2(g_n,f_\ta)$ and we refer to the resulting quantity as the {\em
H-posterior}. When $D(g_n, f_{\ta})$ is based on Negative Exponential disparity, we refer to it as {\em
N-posterior}, and  {\em D-posterior} more generally.  These choices are
illustrated in Figure \ref{likapprox} where we show the approximation of the
log likelihood by Hellinger and negative exponential disparities and the effect
of adding an outlier to these in a simple normal-mean example.

Throughout the examples below, we employ a Metropolis algorithm based on
a symmetric random walk to draw samples from $P_D(\ta|g_n)$. While the cost of
evaluating $D(g_n,f_\ta)$ is greater than the cost of evaluating the
likelihood at each Metropolis step, we have found these algorithms to be
computationally feasible and numerically stable. Furthermore, the burn-in period
for sampling from $P_D(\ta|g_n)$ and the posterior are approximately the same,
although the acceptance rate of the former is around ten percent higher.

After substituting $-n D(g_n,f_\ta)$ for the log likelihood, it will be useful
to define summary statistics of the $D$-posterior in order to demonstrate their
asymptotic properties. Since the $D$-posterior \eqref{eq:dposterior} is a
proper probability distribution, the Expected D-{\em a posteriori} (EDAP)
estimates exist and are given by
\[
\ta^*_n = \int_\Theta \ta P_D(\ta|g_n) d\ta.
\]
and credible intervals for $\ta$ can be based on the quantiles of
$P_D(\ta|g_n)$. These quantities are calculated via Monte Carlo integration
using the output from the Metropolis algorithm.  We similarly define the Maximum
D-{\em a posteriori} (MDAP) estimates by
\[
{\ta}^+_n = \argmax_{\ta \in \Theta} P_D(\ta|g_n).
\]

In the next section we describe the asymptotic properties of EDAP and MDAP
estimators. In particular, we establish the posterior consistency, posterior
asymptotic normality and efficiency of these estimators and their robustness
properties. Differences between  $P_D(\ta,g_n)$ and the posterior do exist and
are described below:
\begin{enumerate}
\item The disparities $D(g_n,f_\ta)$  have
strict upper bounds; in the case of Hellinger distance $0 \leq HD^2(g_n,f_\ta)
\leq 2$, the upper bound for negative exponential disparity is $e$. This implies
that the likelihood part of the D-posterior, $\exp(-n D(g_n,f_\ta))$, is
bounded away from zero. Consequently, a proper prior $\pi(\theta)$ is required
in order to normalize $P_D(\ta|g_n)$. \ghadd{A random $\ta$ from $\pi(\ta)$ must also have finite expectation in order for the EDAP to be defined}.  In particular, uniform priors on unbounded ranges, along
with most reference priors, cannot be employed here.
Further, the tails of $P_D(\ta|g_n)$
are proportional to that of $\pi(\theta)$.  As a consequence, the breakdown point for the EDAP,
as traditionally defined, is 1. Although note that in Section \ref{sec:robustness} we propose an modified definition
of breakdown which is appropriate for regularized and Bayesian estimators under which EDAP has a breakdown
of 1/2.

\ghadd{These results do not affect the
asymptotic behavior of $P_D(\ta|g_n)$ since the lower bounds decrease with
$n$. Modified D-posteriors based on a transformation $m(D(g_n,f_\ta))$ that removes the upper bound can be defined without affecting either the efficiency or robustness of the resulting EDAP estimates. However, appropriate transformations $m$ will depend on the the parametric family $f_\ta$ and are beyond the scope of this paper.}

\item In Bayesian inference for i.i.d. random variables, the log likelihood is a sum of $n$
terms. This implies that if new data $X_{n+1},\ldots,X_{n^*}$ are
obtained, the posterior for the combined data $X_1,\ldots,X_{n^*}$ can be
obtained by using posterior after $n$ observations, $P(\ta|X_1,\ldots,X_n)$ as
a prior $\ta$:
\[
P(\ta|X_1,\ldots,X_{n^*}) \propto
P(X_{n+1},\ldots,X_{n^*}|\ta)P(\ta|X_1,\ldots,X_n).
\]
By contrast, $D(g_n,f_\ta)$ is generally not additive in $g_n$; hence
$P_D(\ta|g_n)$ cannot be factored as above. Extending arguments in \citet{ParkBasu04},
we conjecture that no disparity that is additive in $g_n$ will yield both robust and
efficient posteriors.

\item While we have found that the same Metropolis algorithms can be effectively
used for the D-posterior as would be used for the posterior, it is not possible to use
conjugate priors with disparities. This removes the possibility of using conjugacy
to provide efficient sampling methods within a Gibbs sampler, although these could
be approximated by combining  sampling from a conditional distribution with a rejection
step.
% In that respect, disparity-based methods can incur additional computational cost.
\end{enumerate}
The idea of replacing log likelihood in the posterior with an alternative
criterion occurs in other settings. See \citet{Sollich02}, for example, in
developing Bayesian methods for support vector machines.  However, we replace the
log likelihood with an approximation that is explicitly designed to be both
robust and efficient, rather than as a convenient sampling tool for a
non-probabilistic model.

\begin{figure}
\begin{center}
\begin{tabular}{ccc}
\includegraphics[height=4.4cm]{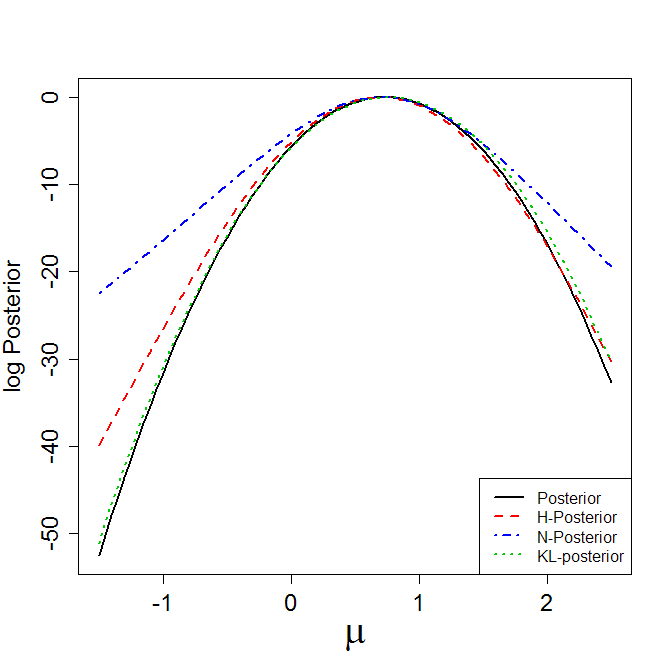} &
\includegraphics[height=4.4cm]{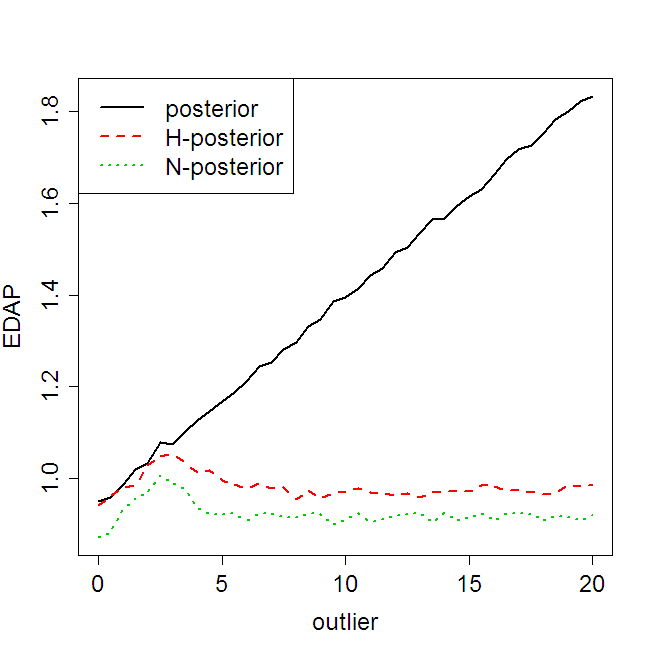} &
\includegraphics[height=4.4cm]{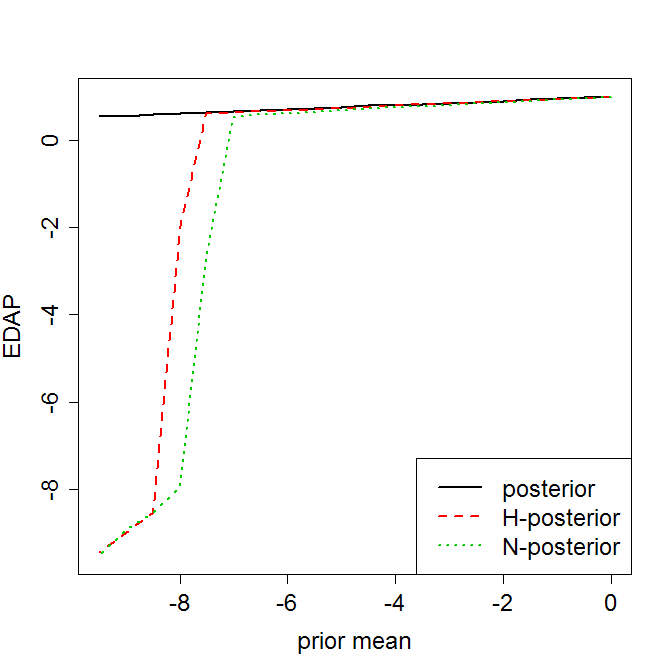}
\end{tabular}
\end{center}
\caption{Left: A comparison of log posteriors for $\mu$ with data generated
from $N(\mu,1)$ with $\mu=1$ using an $N(0,1)$ prior for $\mu$.  Middle:
influence of an outlier on expected {\em D-a posteriori} (EDAP) estimates of
$\mu$ as the value of the outlier is changed from 0 to 20. Right: influence of
the prior as the prior mean is is changed from 0 to -10. } \label{likapprox}
\end{figure}

\section{Robustness} \label{sec:robustness}

The appeal of disparity-based methods is that in addition to the statistical efficiency of the estimators defined above when the parametric model is correctly specified, these estimators are also robust to contamination by data taking large values. As may be expected from the results above, EDAP estimators behave similarly to their minimum-disparity counterparts at finite levels of contamination at large but finite values. However, classical measures of robustness -- influence functions and breakdown points -- are based at limiting values, either of infinitesimal contamination levels or contaminating values at infinity where the convergence between EDAP and minimum-disparity estimators fails. This is due to a lack of uniformity and we argue that the direct application of the robustness measures listed above do not provide an accurate description of the behavior of EDAP estimates. Instead, robustness  should be measured by the properties of the pointwise limit of $\alpha$-level influence functions.

As noted in the introduction, robustness to outliers is treated under title of ``outlier rejection'' in Bayesian analysis and generally corresponds to a breakdown point of 1. As we show below, the analysis of robustness we propose also reconciles Bayesian and other regularized estimates with the traditional description of a robust estimator as having breakdown point of 1/2.  A Bayesian analysis of outlier rejection for our methods can be undertaken using the analysis techniques developed here; it is omitted for the sake of brevity.

To describe robustness, we view our estimates as functionals $T_n(h)$ \ghadd{mapping the space of densities to $\real^p$}. In particular, we examine the EDAP estimate
\begin{equation} \label{edap_functional}
T_n(h) = \frac{\int \ta e^{-nD(h,f_\ta)} \pi(\ta) d\ta}{\int e^{-nD(h,f_\ta)}
\pi(\ta) d\ta}
\end{equation}
and note that in contrast to classical approaches to analyzing robustness the
interaction between the disparity and the prior requires us to make the
dependence of $T_n$ on $n$ explicit. This dependence is shared by any estimator that incorporates priors --  including all classical Bayesian methods -- and affects the traditional measures of robustness as examined below. Note that here, $T_n$ is taken to be a deterministic sequence of maps for the space of densities to $\Theta$. %Below we consider a sequence of fixed densities for $h$. Incorporating variability due to the use of density %estimates will render the results below probabilistic, but this is beyond the scope of the paper.

We analyze the behavior of $T_n(h)$ under the sequence of perturbations $h_{z,\alpha}(x) = (1-\alpha)g(x)+\alpha t_z(x)$ for a sequence of densities $t_z(\cdot)$ and $0 \leq \alpha \leq 1$. Here we assume that $t_z(\cdot)$ is a contaminating sequence defined so that it becomes orthogonal to both $g$ and the parametric family for large $z$. Note that unlike our examination of efficiency below, in these analyzes we do not require that $g$ belongs to the parametric family; thus $h_{\alpha,z}$ describes the effect of adding outliers to a fixed kernel density estimate. We also assume that $f_\ta$ and $g$ become orthogonal at large values of $\ta$.
\begin{align}
& \lim_{z \rightarrow \infty} \int t_z(x) g(x) dx = 0 \label{tgortho} \\
& \lim_{z \rightarrow \infty} \int t_z(x) f_\ta(x) dx = 0, \ \forall \ta \in \Theta \label{tfortho}\\
& \lim_{\ta^* \rightarrow \infty} \sup_{\|\ta\| > \ta^*} \int g(x) f_{\ta}(x) dx  = 0. \label{fgortho}
\end{align}
Typically, $t_z(\cdot)$ is taken to be a uniform distribution on a small neighborhood centered at $z$;
\ghadd{but these conditions are clearly more general. They extend those given in \citet{ParkBasu04} in not requiring $g$ to be a member of the parametric family $f_\ta$. } The $\alpha$-level influence function is then defined analogously to \citet{Beran77} by
\begin{equation} \label{alpha.influence}
\mbox{IF}_{\alpha,n}(z) = \alpha^{-1} \left[ T_n(h_{z,\alpha}) - T_n(h) \right]
\end{equation}
where we again note that the dependence of $\mbox{IF}_{\alpha,n}(z)$ on $n$ is induced by the prior.

\eqref{alpha.influence} represents a complete description of the behavior of our estimator in the presence of contamination, up to the shape of the contaminating density. While for EDAP estimators it contains an explicit dependence on $n$, we begin by observing its limit for large $n$. Firstly, as with more classical Bayesian estimates, EDAP estimators approach their frequentist counterparts at a $n^{-1}$ rate.
\begin{thm} \label{EDAP-MDE.thm}
Assume that $G$ has four continuous derivatives, that $f_\ta$ is four times
continuously differentiable in $\ta$ and that the third derivatives of $\pi$ are bounded. Define $T_n(h)$ as in \eqref{edap_functional} and the minimum disparity estimator (MDE) as
\begin{equation} \label{MDE}
\hat{\ta}(h) = \argmin_{\ta \in \Theta} D(h,f_\ta)
\end{equation}
then $T_n(h) - \hat{\ta}(h) = o_p(n^{-1})$.
\end{thm}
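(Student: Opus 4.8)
The plan is to analyze $T_n(h)$ from \eqref{edap_functional} as the mean of the probability density $p_n(\ta)\propto e^{-nD(h,f_\ta)}\pi(\ta)$ and to compare it to the minimizer $\hat{\ta}(h)$ from \eqref{MDE} by a local Laplace / Bernstein--von Mises analysis centered at $\hat{\ta}(h)$. For fixed $h$ these quantities are deterministic, so the conclusion is really a deterministic $o(n^{-1})$ bound, the stochastic order $o_p$ accommodating the intended application with the random kernel estimate $h=g_n$. Writing $J=\nabla^2_\ta D(h,f_{\hat{\ta}(h)})$, the stationarity $\nabla_\ta D(h,f_{\hat{\ta}(h)})=0$ together with the strict convexity of $G$ makes $J$ positive definite. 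First I would establish posterior concentration: since $nD(h,f_\ta)\approx \tfrac n2 (\ta-\hat{\ta}(h))^\top J(\ta-\hat{\ta}(h))$ near $\hat{\ta}(h)$ and grows linearly in $n$ elsewhere, while $\pi$ is a fixed proper density with finite mean, the mass that $p_n$ places outside a ball of radius $n^{-1/2}\log n$ about $\hat{\ta}(h)$ is $o(n^{-k})$ for every $k$.

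Second, I would set $u=\sqrt n(\ta-\hat{\ta}(h))$ and Taylor-expand $nD(h,f_\ta)$ to fourth order, which the assumed four derivatives of $G$ and of $f_\ta$ justify, obtaining a rescaled density proportional to $e^{-\frac12 u^\top J u}$ times explicit $n^{-1/2}$ and $n^{-1}$ corrections built from the third- and fourth-derivative tensors of $D$ and from $\pi(\hat{\ta}(h)+u/\sqrt n)$. Matching moments gives, under $p_n$, a variable $u$ with mean $O(n^{-1/2})$ and covariance $J^{-1}+o(1)$. The cleanest route to the conclusion is the integration-by-parts identity $\int \nabla_\ta ( e^{-nD(h,f_\ta)}\pi(\ta) )\, d\ta=0$, which rearranges to $n\,E_{p_n}[\nabla_\ta D(h,f_\ta)]=E_{p_n}[\nabla_\ta\log\pi(\ta)]$, so that the posterior mean nearly solves the same stationarity condition that characterizes $\hat{\ta}(h)$. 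Expanding $\nabla_\ta D(h,f_\ta)=J(\ta-\hat{\ta}(h))+\tfrac12 T_3[(\ta-\hat{\ta}(h))^{\otimes2}]+R$, with $T_3=\nabla^3_\ta D(h,f_{\hat{\ta}(h)})$, and taking $E_{p_n}$ then lets me solve for $T_n(h)-\hat{\ta}(h)$ in terms of the posterior mean and covariance already computed.

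The step I expect to be the main obstacle is the passage from this local Gaussian picture to a statement about the genuine mean $T_n(h)$, which is an integral over all of $\Theta$: one must verify that the non-Gaussian tails and the Taylor remainder $R$ contribute only below the target order, which is precisely where positive-definiteness of $J$, the concentration bound, and the boundedness of the third derivatives of $\pi$ enter. I would handle the required uniform integrability by splitting $\Theta$ into the shrinking ball about $\hat{\ta}(h)$, where the rescaled expansion applies, and its complement, where the exponential concentration bound together with the finite prior mean controls the contribution to $\int \ta\, p_n(\ta)\,d\ta$. The genuinely delicate point is establishing that the prior-gradient term $n^{-1}J^{-1}\nabla_\ta\log\pi(\hat{\ta}(h))$ and the skewness term $\tfrac12 n^{-1}J^{-1}T_3[J^{-1}]$ arising from the two sides of the integration-by-parts identity are controlled at the resolution required for the $o(n^{-1})$ conclusion; making this negligibility quantitative, uniformly over the support, is the crux of the argument.
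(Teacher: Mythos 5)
Your proposal follows essentially the same route as the paper's proof: a Laplace-type expansion of $p_n(\ta)\propto e^{-nD(h,f_\ta)}\pi(\ta)$ centered at the MDE $\hat{\ta}(h)$, using stationarity to kill the linear term, expanding the prior to second order and the disparity to third/fourth order, rescaling by $t=\sqrt{n}(\ta-\hat{\ta}(h))$, and controlling the tails by concentration plus the finite prior mean. The only substantive difference is bookkeeping: the paper expands the ratio of integrals directly, writing the integrand as $\pi(\hat{\ta}(h))e^{-t'I^D(\hat{\ta}(h))t/2}\left[1+c_1(t)/\sqrt{n}+c_2(t)/n+o_p(n^{-1})\right]$ and integrating the odd polynomial $c_1$ against the Gaussian, whereas you extract the mean through the Stein identity $n\,E_{p_n}[\nabla_\ta D(h,f_\ta)]=E_{p_n}[\nabla_\ta\log\pi(\ta)]$; both routes produce exactly the same two corrections, a skewness term built from $\nabla^3_\ta D$ and a prior-gradient term $\nabla_\ta\pi/\pi$. (One small caution: positive definiteness of $J=\nabla^2_\ta D(h,f_{\hat{\ta}(h)})$ does not follow from strict convexity of $G$ alone; it is a separate regularity assumption, as in condition (A\ref{ass.smooth}) of the paper.)

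Regarding the point you flag as the crux: you are right to be suspicious, but the resolution is not to prove those two terms negligible --- generically they are not. The paper's own proof terminates at the display \eqref{expansion.3rdterm},
\[
T_n(h)_i = \hat{\ta}(h)_i + n^{-1}\left[I^D(h)^{-1}\right]_{ii}\left(\sum_j \frac{a_{3,jji}\left[I^D(h)^{-1}\right]_{jj}}{2} + \frac{\nabla_\ta \pi(\hat{\ta}(h))_i}{\pi(\hat{\ta}(h))}\right) + o_p\left(n^{-1}\right),
\]
which is precisely your skewness-plus-prior-gradient correction and is of exact order $n^{-1}$ whenever that coefficient is nonzero. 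What this argument actually establishes is $T_n(h)-\hat{\ta}(h)=O_p(n^{-1})$ --- consistent with the sentence introducing Theorem \ref{EDAP-MDE.thm}, which speaks of convergence ``at a $n^{-1}$ rate'' --- rather than the literal $o_p(n^{-1})$ of the statement. So your proof is complete once you have the explicit expansion and the tail/remainder control; the remaining discrepancy lies in the stated rate, not in your argument, and attempting to force the $o(n^{-1})$ conclusion uniformly would be chasing something the paper's own proof does not deliver.
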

\noindent As an immediate corollary, the $\alpha$-level influence functions converge at the same rate
\begin{cor} \label{alpha.influence.cor}
Under the conditions of Theorem \ref{EDAP-MDE.thm}, define
\[
\mbox{IF}_{\alpha,\infty}(z) = \alpha^{-1} \left[ \hat{\ta}(h_{z,\alpha})-\hat{\ta}(h) \right]
\]
then for every $\alpha$ and $z$,
\begin{equation} \label{influence.convergence}
\mbox{IF}_{\alpha,n}(z) - \mbox{IF}_{\alpha,\infty}(z) = o_p(n^{-1}).
\end{equation}
\end{cor}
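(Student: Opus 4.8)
The plan is to derive the corollary directly from Theorem~\ref{EDAP-MDE.thm} by linearity, since the only difference between $\mbox{IF}_{\alpha,n}(z)$ and $\mbox{IF}_{\alpha,\infty}(z)$ is the replacement of the EDAP functional $T_n$ by the MDE functional $\hta$, each evaluated at the two fixed densities $h$ and $h_{z,\alpha}$. First I would regroup the difference as
\begin{align*}
\mbox{IF}_{\alpha,n}(z) - \mbox{IF}_{\alpha,\infty}(z)
&= \alpha^{-1}\left[ \left(T_n(h_{z,\alpha}) - \hta(h_{z,\alpha})\right) - \left(T_n(h) - \hta(h)\right) \right],
\end{align*}
which isolates two instances of the quantity $T_n(\cdot) - \hta(\cdot)$ that is controlled by the theorem.

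The second step is to apply Theorem~\ref{EDAP-MDE.thm} separately to each bracketed term. For fixed $\alpha \in (0,1]$ and fixed $z$, the contaminated density $h_{z,\alpha} = (1-\alpha)g + \alpha t_z$ is itself a bona fide density, and the hypotheses of the theorem constrain only $G$, $f_\ta$ and $\pi$ -- not the density at which the functionals are evaluated. Hence the theorem yields $T_n(h_{z,\alpha}) - \hta(h_{z,\alpha}) = o_p(n^{-1})$ and $T_n(h) - \hta(h) = o_p(n^{-1})$. Because $\alpha$ is a fixed positive constant, multiplying the difference of two $o_p(n^{-1})$ terms by $\alpha^{-1}$ leaves the rate unchanged, giving \eqref{influence.convergence}.

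The only point that requires genuine care -- and what I would treat as the main obstacle -- is confirming that the conclusion of Theorem~\ref{EDAP-MDE.thm} really transfers to the perturbed density $h_{z,\alpha}$. The statement is phrased for a generic $h$, but its proof presumably requires the MDE $\hta(h)$ to exist, be unique, and lie in the interior of $\Theta$, together with a nondegenerate second-order behaviour of $D(h,f_\ta)$ near the minimizer. I would verify that these properties survive the mixture perturbation: for fixed $\alpha$ and $z$ the orthogonality conditions \eqref{tgortho}--\eqref{fgortho} keep the minimizer of $D(h_{z,\alpha},f_\ta)$ well-separated and interior, so the same Laplace-type expansion underlying the theorem applies at $h_{z,\alpha}$ with an $o_p(n^{-1})$ control of the same form. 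Once this is established the corollary is immediate, and no estimates beyond those already supplied by the theorem are needed.
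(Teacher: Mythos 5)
Your proposal is correct and matches the paper's (implicit) argument: the paper presents this as an immediate corollary obtained by applying Theorem \ref{EDAP-MDE.thm} separately at $h$ and at $h_{z,\alpha}$ and differencing, exactly as you do. Your additional remark about verifying that the theorem's Laplace-type expansion remains valid at the perturbed density $h_{z,\alpha}$ is a sensible point of care that the paper glosses over, but it does not constitute a different route.
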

That is, the influence function for $\hat{\theta}(\cdot)$ represents a reasonable description of the behavior
of $T_n(\cdot)$.  In particular,  in the case of Hellinger distance methods, Theorems 5 and 6 in \citet{Beran77} have direct analogues for EDAP and MDAP estimators respectively.

While Corollary \ref{alpha.influence.cor} motivates using the properties of $\mbox{IF}_{\alpha,\infty}(z)$ to describe the robustness of of $T_n(\cdot)$, we note that the convergence in \eqref{influence.convergence} need not be uniform in $z$. The classical summaries of robustness properties investigated below are focussed on extremal values of $\mbox{IF}_{\alpha,n}(z)$; the breakdown point at large $z$ and the classical influence function at small $\alpha$. The lack of uniformity in \eqref{influence.convergence} means that these summaries when applied at finite values of $n$ need not reflect the asymptotic properties as described in $\mbox{IF}_{\alpha,\infty}(\theta)$. We explore this discrepancy below and argue that the asymptotic influence measure is more appropriate in the sense of representing a minimax approach to robustness.  The arguments employed here are broadly applicable to  robust estimators that depend on $n$; in particular, regularized versions of robust estimators are susceptible to the same discrepancies and our analysis provides a framework for describing robustness in this context as well.

\subsection{Breakdown Point}

We begin by motivating a version of the breakdown point for $n$-dependent estimators. Classically, the breakdown point is defined to be
\begin{equation} \label{bpdef}
B(T_n) = \sup \left\{ \alpha: \sup_z | \mbox{IF}_{\alpha,n}(z) | < \infty \right\},
\end{equation}
(see  \citet{Huber81}). \citet{Beran77} and \citet{ParkBasu04} demonstrated that the MDE has a breakdown point of $1/2$ in the case of Hellinger distance and when $G(\cdot)$ and $G'(\cdot)$ are bounded respectively. In contrast, we show in Theorem \ref{zinfinite.thm} and Corollary \ref{breakdown.theorem} below that for each fixed $n$, $B(T_n) = 1$. The distinction between these cases motivates an alternative measure that captures the intuition of the classical breakdown point when applied to a {\em sequence} of estimators that changes over $n$. We define the {\em asymptotic breakdown point} of the sequence $\{T_n\}_{n=1}^{\infty}$, as follows
\begin{equation} \label{abpoint}
B^*\left(\{T_k\}\right) = \sup \left\{ \alpha: \limsup_n \sup_{z} |\mbox{IF}_{\alpha,n}(z)| < \infty \right\}.
\end{equation}
That is, for each $n$ we consider the maximal displacement under $\alpha$-level contamination and declare a breakdown if the limit of these displacements
is unbounded.

It is easy to see that MDAP estimators have asymptotic breakdown 1/2 if their MDE counterparts do and $\log \pi(\ta)$ convex. Writing the MDAP estimator as
\begin{equation} \label{mdap}
\tilde{T}_n(h) = \argmax_{\ta \in \Theta} \left( n D(h,f_\ta) - \log \pi(\ta) \right)
\end{equation}
it is readily seen that if $\ta^*$ maximizes $\pi(\ta)$ then $|\tilde{T}_n(h) - \ta^*| < |\hat{\ta}(h) - \ta^*|$ for $\hat{\ta}(h)$ sufficiently large and thus if $\mbox{IF}_{\alpha,\infty}(z)$ is uniformly bounded, so is the influence function of $\tilde{T}_n$. Nonetheless, the convergence of $\tilde{T}_n(h)$ to $\hat{\ta}(h)$ means that if $\hat{\ta}(h_{\alpha,z_k}) \rightarrow \infty$ there is a sequence $n_k$ so that $T_{n_k}(h_{\alpha,z_{k}}) \rightarrow \infty$ as $k \rightarrow \infty$.

For EDAP estimators a uniform identifiability condition is required. We also impose boundedness on $G$ and $G'$, but note these conditions do not hold for Hellinger distance, however, a direct argument can be given and hence in Theorem \ref{ass.breakdown.thm} below and in other results we will sometimes state that ``The results also hold for Hellinger Distance''.

\begin{thm} \label{ass.breakdown.thm}
Under the conditions of Theorem \ref{EDAP-MDE.thm}, additionally assume $G(\cdot)$ and $G'(\cdot)$ are bounded and let $\int \| \ta \|_2 \pi(\ta) d\ta < \infty$,
then under conditions (\ref{tgortho}-\ref{fgortho}) if,
\begin{equation} \label{BreakdownCondition}
\inf_z \inf_{\ta \in \Theta} D(\alpha t_z,f_\ta) >  \inf_{\ta \in \Theta} D((1-\alpha)g,f_\ta) + \delta,
\end{equation}
for $\alpha \leq 1/2$, the asymptotic breakdown point of $T_1,T_2,\ldots$ is equal to the breakdown point of $\hat{\ta}(h)$. The result continues to hold when $D$ is given by Hellinger distance.
\end{thm}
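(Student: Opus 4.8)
The plan is to prove the two inequalities $B^*(\{T_k\}) \ge B(\hat\ta)$ and $B^*(\{T_k\}) \le B(\hat\ta)$ separately, exploiting the fact that $e^{-nD(h_{z,\alpha},f_\ta)}$ concentrates, as $n\to\infty$, around the minimum disparity estimator, so that $T_n$ inherits the boundedness behaviour of $\hat\ta$. Throughout I abbreviate $M_1=\inf_\ta D((1-\alpha)g,f_\ta)$ and $M_2=\inf_z\inf_\ta D(\alpha t_z,f_\ta)$, so that the breakdown condition \eqref{BreakdownCondition} reads $M_2>M_1+\delta$, and I recall that under these hypotheses the MDE has classical breakdown point $B(\hat\ta)=1/2$ (Beran; Park--Basu).

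For the lower bound $B^*\ge B(\hat\ta)$, fix $\alpha\le 1/2$ and control $\sup_z|T_n(h_{z,\alpha})|$ uniformly in $z$ for large $n$. Splitting $\Theta$ at a radius $R$ and bounding the inner part trivially by $R$,
\[
|T_n(h_{z,\alpha})| \le R + \frac{\int_{\|\ta\|> R}\|\ta\|\, e^{-nD(h_{z,\alpha},f_\ta)}\pi(\ta)\,d\ta}{\int e^{-nD(h_{z,\alpha},f_\ta)}\pi(\ta)\,d\ta}.
\]
To lower bound the denominator I restrict to a small ball $\Theta_0$ about the minimizer of $D((1-\alpha)g,\cdot)$; conditions \eqref{tgortho}--\eqref{tfortho} together with the boundedness of $G$ and $G'$ give $D(h_{z,\alpha},f_\ta)\le M_1+\delta/3$ on $\Theta_0$ uniformly in large $z$, whence the denominator is at least $e^{-n(M_1+\delta/3)}\int_{\Theta_0}\pi(\ta)d\ta$. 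For the tail I choose $R$ so large that, by \eqref{fgortho}, $g$ is orthogonal to $f_\ta$ for $\|\ta\|>R$; comparing $D(h_{z,\alpha},f_\ta)$ with $D(\alpha t_z,f_\ta)$ and invoking \eqref{BreakdownCondition} yields $D(h_{z,\alpha},f_\ta)\ge M_2-\delta/3 > M_1+2\delta/3$ uniformly in $z$. The tail ratio is therefore at most $e^{-n\delta/3}\big(\int_{\Theta_0}\pi\big)^{-1}\int\|\ta\|_2\pi(\ta)d\ta\to 0$, where finiteness of $\int\|\ta\|_2\pi(\ta)d\ta$ is exactly the moment hypothesis. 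Hence $\limsup_n\sup_z|T_n(h_{z,\alpha})|\le R<\infty$, so $\alpha\le B^*$ and $B^*\ge 1/2$; values of $z$ in a compact set are absorbed by enlarging $R$ and using continuity of $D$ in $z$.

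For the upper bound $B^*\le B(\hat\ta)$, fix $\alpha>1/2$, at which the MDE breaks down: there is $z_k\to\infty$ with $\|\hat\ta(h_{z_k,\alpha})\|\to\infty$. By Theorem \ref{EDAP-MDE.thm}, $T_n(h_{z_k,\alpha})\to\hat\ta(h_{z_k,\alpha})$ as $n\to\infty$ for each fixed $k$, so one may select an increasing sequence $n_k$ with $\|T_{n_k}(h_{z_k,\alpha})\|\ge\tfrac12\|\hat\ta(h_{z_k,\alpha})\|\to\infty$. Since $T_n(h)$ converges and is thus bounded in $n$, this forces $\limsup_n\sup_z|\mbox{IF}_{\alpha,n}(z)|=\infty$, so $\alpha$ is excluded from the supremum defining $B^*$ and $B^*\le 1/2$. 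Combined with the previous paragraph, $B^*(\{T_k\})=1/2=B(\hat\ta)$.

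The main obstacle is the uniform-in-$z$ disparity estimate in the tail, namely $D(h_{z,\alpha},f_\ta)\ge M_2-\delta/3$ for all $\|\ta\|>R$ simultaneously over large $z$. The difficulty is that the naive Lipschitz bound on $G(\delta_{\ta,h_{z,\alpha}})-G(\delta_{\ta,\alpha t_z})$ involves $(1-\alpha)g/f_\ta$, which diverges exactly where $f_\ta$ is small; controlling the error requires partitioning the domain into the region where $g/f_\ta$ is large (using $|G|\le G_{\max}$ and the escape of mass encoded in \eqref{fgortho}) and its complement (using the Lipschitz bound from bounded $G'$), and verifying that both pieces are $o(1)$ uniformly in $z$. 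For Hellinger distance $G$ is unbounded, so this step is replaced by a direct argument on the affinity $\int\sqrt{h_{z,\alpha}f_\ta}$: the inequalities $\sqrt{a+b}\le\sqrt a+\sqrt b$ and $\int\sqrt{t_z f_\ta}\le 1$ reduce the required separation to the Beran-type gap $2-2\sqrt\alpha$, which again distinguishes the contaminated from the uncontaminated minimum precisely for $\alpha\le 1/2$.
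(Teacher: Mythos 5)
Your proposal is correct and takes essentially the same route as the paper: your uniform-in-$z$ separation of the disparity between a neighbourhood of the uncontaminated minimizer and the tail $\{\|\ta\|>R\}$ is exactly the content of the paper's Lemmas \ref{orthogonality.lem} and \ref{identifiability.lem}, your exponential bound on the tail contribution to the EDAP mirrors the paper's estimate $T_n(h_{\alpha,z})\le\hat{\ta}(h_{\alpha,z})+e^{-n\eta/2}K(\hat{\ta}(h_{\alpha,z}))$, and your converse via Theorem \ref{EDAP-MDE.thm} and a diagonal subsequence is the paper's argument almost verbatim. One imprecision: the intermediate bounds should carry the offsets $\alpha G'(\infty)$ (near the minimizer, from the escaped mass of $t_z$) and $(1-\alpha)G'(\infty)$ (in the tail, from the escaped mass of $g$), and it is precisely the hypothesis $\alpha\le 1/2$ that guarantees $(1-\alpha)G'(\infty)\ge\alpha G'(\infty)$ so that the gap $\delta$ survives --- as written your lower-bound argument never actually invokes $\alpha\le 1/2$, which signals that these constants have been dropped.
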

The identifiability condition imposed above ensures that $t_z$ does not more closely resemble the family $f_\ta$ than $g$. In the analysis of \citet{ParkBasu04}, $g = f_\ta$ is assumed; under these conditions $B^*(\{T_n\}) = 1/2$. If we replace $g$ by the estimate $g_n$ it could happen that  the inequality in \eqref{BreakdownCondition} is reversed \ghadd{(ie for every $z$ there is a $f_{\ta_z}$ closer to $t_z$ than any member of $f_\ta$ is to $g_n$)}. In this case, the breakdown point of $\hat{\ta}(h)$ could be strictly less than 1/2.

These results are in contrast to the treatment of $T_n$ for fixed $n$. Here we follow \citet{Beran77} in evaluating $\lim_{z \rightarrow \infty} T_n(h_{\alpha,z})$ for each $z$.
\begin{thm} \label{zinfinite.thm}
Under the contions of Theorem \ref{ass.breakdown.thm},
\[
\lim_{z \rightarrow \infty} T_n(h_{\alpha,z}) =  T_n( (1-\alpha) g ).
\]
This result also holds for Hellinger distance.
\end{thm}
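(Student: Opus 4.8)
The plan is to reduce the statement to a pointwise-in-$\ta$ limit of the disparity and then pass that limit through the two integrals defining the EDAP functional by dominated convergence. Writing
\[
T_n(h_{\alpha,z}) = \frac{\int_\Theta \ta\, e^{-nD(h_{\alpha,z},f_\ta)}\pi(\ta)\,d\ta}{\int_\Theta e^{-nD(h_{\alpha,z},f_\ta)}\pi(\ta)\,d\ta},
\]
the key observation is that it suffices to show, for each fixed $\ta$,
\[
\lim_{z\to\infty} D(h_{\alpha,z},f_\ta) = D((1-\alpha)g,f_\ta) + c_\alpha,
\]
where $c_\alpha$ does not depend on $\ta$. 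Any such $\ta$-free additive constant contributes a common factor $e^{-nc_\alpha}$ to numerator and denominator and hence cancels in the ratio, so that the pointwise limit of the integrands is exactly the integrand of $T_n((1-\alpha)g)$. The canonical choice of $t_z$ (uniform on a fixed-width neighbourhood of $z$) makes condition \eqref{tfortho} equivalent to $\int_{\supp(t_z)} f_\ta \to 0$, and this ``escape of support'' is the fact driving both cases below.

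First I would establish the disparity limit for bounded $G$. Since $h_{\alpha,z}$ and $(1-\alpha)g$ differ only by $\alpha t_z$, and $t_z$ vanishes off its escaping support, the integrand of $D(h_{\alpha,z},f_\ta)-D((1-\alpha)g,f_\ta)$ is nonzero only where $t_z>0$. A mean value expansion gives
\[
D(h_{\alpha,z},f_\ta)-D((1-\alpha)g,f_\ta) = \alpha\int G'(\xi_z(x))\,t_z(x)\,dx,
\]
with $\xi_z(x)$ between $\delta_{\ta,(1-\alpha)g}(x)$ and $\delta_{\ta,h_{\alpha,z}}(x)$. Splitting on $\{f_\ta\ge\eta\}$ and $\{f_\ta<\eta\}$, the first piece is at most $\eta^{-1}\alpha\sup|G'|\int t_z f_\ta$, which vanishes by \eqref{tfortho}; on the second piece the integrand is bounded by $2\sup|G|\,f_\ta$ and, being supported on $\{t_z>0\}$, contributes at most $2\sup|G|\int_{\supp(t_z)}f_\ta\to0$. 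Because $G$ is convex with bounded derivative, $G'$ increases to a finite limit, and boundedness of $G$ forces that limit to be $0$, confirming there is no residual constant, so $c_\alpha=0$ here. This is the analogue of the decomposition in \citet{ParkBasu04}, now without assuming $g=f_\ta$.

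For Hellinger distance the derivative $G'$ is unbounded, so I would argue directly from the affinity representation $D(h,f_\ta)=2HD^2(h,f_\ta)=4-4\int\sqrt{h f_\ta}\,dx$. The inequalities $\sqrt{(1-\alpha)g}\le\sqrt{h_{\alpha,z}}\le\sqrt{(1-\alpha)g}+\sqrt{\alpha t_z}$ give $|\sqrt{h_{\alpha,z}}-\sqrt{(1-\alpha)g}|\le\sqrt{\alpha t_z}$, whence by Cauchy--Schwarz and escape of support
\[
\Big|\int\sqrt{h_{\alpha,z}f_\ta}\,dx-\int\sqrt{(1-\alpha)g\,f_\ta}\,dx\Big|\le\sqrt{\alpha}\int\sqrt{t_z f_\ta}\,dx\le\sqrt{\alpha}\Big(\int_{\supp(t_z)}f_\ta\,dx\Big)^{1/2}\to0.
\]
Carrying out the bookkeeping (using $\int h_{\alpha,z}=1$ while $\int(1-\alpha)g=1-\alpha$) yields $\lim_z D(h_{\alpha,z},f_\ta)=D((1-\alpha)g,f_\ta)+2\alpha$, so $c_\alpha=2\alpha$ is again $\ta$-free and cancels in the ratio.

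Finally I would justify the interchange of limit and integration. Since $G$ is strictly convex with $G(0)=G'(0)=0$ we have $G\ge0$, so $D\ge0$ and $e^{-nD}\le1$ (for Hellinger, $D\in[0,4]$ directly). Hence $|\ta\,e^{-nD(h_{\alpha,z},f_\ta)}\pi(\ta)|\le\|\ta\|_2\,\pi(\ta)$, which is integrable because $\int\|\ta\|_2\pi(\ta)\,d\ta<\infty$, while the denominator integrands are dominated by $\pi(\ta)$ and bounded below by $e^{-n\sup G}\int\pi>0$ uniformly in $z$. Dominated convergence then sends numerator and denominator to those of $T_n((1-\alpha)g)$, and the ratio converges as claimed. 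The main obstacle is the pointwise disparity limit, and within it the region where $f_\ta$ is small yet the contamination is still present: the non-uniformity flagged after Corollary \ref{alpha.influence.cor} lives precisely here, and it is the escape of the support of $t_z$, encoded by \eqref{tgortho}--\eqref{tfortho}, that renders this contribution negligible.
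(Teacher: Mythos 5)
Your proposal is correct in its essential strategy and it matches the paper's: both arguments reduce the claim to showing that $D(h_{\alpha,z},f_\ta)$ converges to $D((1-\alpha)g,f_\ta)$ plus a $\ta$-free constant (the paper's $\alpha G'(\infty)$, your $c_\alpha$), which then cancels between numerator and denominator of the EDAP ratio. Where you genuinely differ is in how the $\ta$-integral is treated: the paper truncates to $\{\|\ta\|\le M_k\}$ using condition \eqref{fgortho}, because its Lemma \ref{orthogonality.lem} delivers convergence of the disparity only uniformly over bounded $\ta$-sets, whereas you use pointwise-in-$\ta$ convergence plus dominated convergence, dominating the numerator by a constant times $\|\ta\|_2\pi(\ta)$ and bounding the denominator below uniformly in $z$ via the boundedness of $D$. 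At the fixed $n$ of the theorem this is a legitimate simplification; it dispenses with \eqref{fgortho} and with any uniformity in $\ta$. Two caveats are worth recording. First, your pointwise limit for general bounded $G$ leans on $\int_{\supp(t_z)}f_\ta\,dx\to 0$, which you obtain only by specializing $t_z$ to a uniform density on an escaping window; the conditions (\ref{tgortho})--(\ref{tfortho}) are integral conditions intended to cover more general contaminating sequences, and it is precisely the $\{f_\ta<\eta\}$ piece of your split (not the $\{f_\ta\ge\eta\}$ piece, which needs only \eqref{tfortho}) that requires this extra structure; Lemma \ref{orthogonality.lem} is the paper's device for that region. Second, your identification of the constant ($c_\alpha=0$ for bounded $G$, $2\alpha$ for Hellinger) versus the paper's $\alpha G'(\infty)$ reflects only the affine normalization of $G$ (adding $c\delta$ to $G$ leaves the disparity unchanged for densities, but $h_{\alpha,z}$ and $(1-\alpha)g$ have different total mass); this is immaterial since any $\ta$-free constant cancels, though your deduction that boundedness of $G$ forces $G'(\infty)=0$ sits uneasily with the normalization $G''(0)=1$, under which a strictly convex $G$ cannot in fact be bounded above, so the boundedness hypothesis must be read for the unnormalized representative.
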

The condition that $D(g,f_\ta)$ be bounded holds if $|G(\cdot)|$ is bounded; this is assumed in \citet{ParkBasu04} and holds for the negative exponential disparity and Hellinger distance $(0 \leq 2HD(g,f_\ta)\leq 4)$.

For MDE's, taking $g = f_{\ta_0}$ yields $T_n( (1-\alpha) f_{\ta_0}) = \ta_0$. For EDAP estimators, the $(1-\alpha)$ factor generally results in a reduction in strength in the disparity relative to the prior. For Hellinger distance
\[
n 2HD( (1-\alpha) g,f_\ta) = 4 - 4 \sqrt{1-\alpha} \int \sqrt{g(x) f_{\ta}(x)}dx = n \sqrt{1-\alpha} 2HD(g,f) - 4(1-\sqrt{1-\alpha})
\]
since the second term in canceled in normalizing the D-posterior this is equivalent to  reducing $n$ by a factor $\sqrt{1-\alpha}$.

We note here that while the above discussion examines the behavior of $T_n(h_{\alpha,z})$ for small $\alpha$, it can readily be extended to the following corollary
\begin{cor} \label{breakdown.theorem}
Let $D(g,f_\ta)$ be bounded  for all $\ta$ and all densities $g$ and let $\int \| \ta \|_2 \pi(\ta) d\ta < \infty$, then the breakdown point of the EDAP is 1.
\end{cor}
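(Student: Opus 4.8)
The plan is to show directly that, for each fixed $n$ and every $\alpha \in (0,1]$, the functional value $T_n(h_{z,\alpha})$ is bounded in norm uniformly over $z$. Since $\mbox{IF}_{\alpha,n}(z) = \alpha^{-1}[T_n(h_{z,\alpha}) - T_n(h)]$ and $T_n(h)$ is a fixed vector not depending on $z$, a uniform-in-$z$ bound on $T_n(h_{z,\alpha})$ immediately yields $\sup_z |\mbox{IF}_{\alpha,n}(z)| < \infty$ for every $\alpha \leq 1$, so that the supremum in \eqref{bpdef} equals $1$ and hence $B(T_n) = 1$.

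The single idea driving the argument is that a bounded disparity keeps the ``likelihood'' factor $e^{-nD(h_{z,\alpha},f_\ta)}$ bounded away from $0$. Writing $0 \leq D(g,f_\ta) \leq M$ for the assumed bound --- which holds for all $\ta$ and all densities $g$, in particular for the contaminated density $g = h_{z,\alpha}$ --- I would first bound the normalizing constant (the denominator of \eqref{edap_functional}) below by $e^{-nM}$, using that $\pi$ is a proper prior; this bound is independent of both $z$ and $\alpha$. I would then bound the numerator in norm by $\int \|\ta\|_2\, \pi(\ta)\, d\ta$ via the triangle inequality together with the elementary estimate $e^{-nD(\cdot)} \leq 1$; this quantity is finite by hypothesis and again free of $z$ and $\alpha$. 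Dividing the numerator bound by the denominator bound gives the uniform estimate
\[
\sup_z \left\| T_n(h_{z,\alpha}) \right\|_2 \leq e^{nM} \int \|\ta\|_2\, \pi(\ta)\, d\ta < \infty.
\]
Applying the same estimate with $h$ in place of $h_{z,\alpha}$ confirms that $T_n(h)$ is finite, so that $\sup_z |\mbox{IF}_{\alpha,n}(z)|$ is finite for each $\alpha \in (0,1]$, which completes the argument.

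I expect no substantive obstacle here: the content is simply that a bounded disparity can never let the D-posterior concentrate at infinity, however the data are contaminated, and integrability of the prior then keeps the posterior mean finite. The only subtlety worth flagging is that the breakdown definition \eqref{bpdef} demands control that is \emph{uniform} in $z$, which is strictly stronger than the pointwise limit established in Theorem \ref{zinfinite.thm}; it is precisely the hypothesis $\int \|\ta\|_2\, \pi(\ta)\, d\ta < \infty$ that upgrades the pointwise statement to the uniform bound required here. This also clarifies the division of labour between the two hypotheses --- boundedness of $D$ secures the denominator while integrability of $\pi$ secures the numerator --- and indicates why both are needed.
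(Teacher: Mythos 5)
Your argument is correct and is essentially the paper's own proof: both bound the factor $e^{-nD(h_{z,\alpha},f_\ta)}$ above and below using the boundedness of $D$ (the paper uses $r=\inf D$ and $R=\sup D$ where you use $0$ and $M$), and then sandwich the EDAP between constant multiples of $\int \|\ta\|_2\,\pi(\ta)\,d\ta$, uniformly in $z$ and $\alpha$. The only cosmetic difference is that the paper keeps a general lower bound $r>-\infty$ rather than assuming $D\ge 0$, which is harmless here since $G\ge 0$ for the disparities considered.
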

A simple direct proof is given for this in Online Appendix \ref{robust.proofs}. We observe that $D(0,f_\ta) = G(-1)$ is independent of $\ta$, yielding $T_n(0) = \int \ta \pi(\ta) d\ta$: the prior mean.

The results at fixed $n$ indicate an extreme form of robustness that results from the fact that the disparity approximation to the likelihood is weak in its tails. This produces a lack of equivariance in the resulting estimator that appears in the third term of the asymptotic expansion as shown in Equation \ref{expansion.3rdterm} of the online appendix. The fixed $n$ result does not distinguish our estimator from alternative estimators that are clearly problematic.  In particular, the threshold estimator of the mean defined by
\[
m_n(f) = \left\{ \begin{array}{ll} \int x dF(x) & \left| \int x dF(x) \right| < 1000 n \\
1000n * \mbox{sign}\left( \int x dF(x) \right) & \mbox{ otherwise } \end{array} \right.
\]
is also efficient and has breakdown point 1. However, by considering contamination with $t_{z_n}(x)$ taken to be uniform on $[1000n-1, \ 1000n]$ it is readily seen that the asymptotic breakdown point is $B^*(\{m_n\}) = 0$.  We might also contemplate a mean estimate based on a penalized Huber loss:
\[
h_n(f) = \argmin_{\mu} n \int H(x-\mu) dF(x) + \lambda_n \mu^2
\]
where $H$ is the Huber loss function (see \citet{Huber81}) and $\lambda_n \rightarrow 0$. Since $H(z)$ increases linearly for $z$ sufficiently large, the breakdown point of $h(f)$ is also 1, but $B^*(\{h_n\}) = 1/2$. Of course in this case, $h_n(f)$ will not be efficient.

While we have suggested that the distinction between the finite $n$ and asymptotic breakdown point is  a reflection more on the definition \eqref{bpdef} than the properties of EDAP, it does leave considerable room for the design of unbounded disparities that are nonetheless robust and which would therefore also allow the use of improper priors while still obtaining proper D-{\em a posteriori} distributions.

\subsection{Influence Function}

An alternative measure of robustness is given by the influence function \citep{Hampel74}:
\begin{equation} \label{ifdef}
\mbox{IF}_{0,n}(z) = \lim_{\alpha \rightarrow 0} \mbox{IF}_{\alpha,n}(z)
\end{equation}
That this function need not always provide a useful guide to the behavior of $T_n(h_{\alpha,z})$ was observed in \citet{Beran77} and further expanded in \citet{Lindsay94} who demonstrated that all MDE's that yield efficiency share the same influence function as the MLE whatever their behavior at gross levels of contamination. The analysis in \citet{Lindsay94} implicitly assumes an equivariant estimator so that $T_n(f_{\theta}) = \theta$ for any $\theta$. When the effect of a prior is included in the analysis a different result is obtained at finite samples, but an equivalent limiting result can be derived.

To examine the influence function for EDAP estimators, we assume that the limit may be taken inside
all integrals in (\ref{ifdef}) and obtain
\begin{align*}
\mbox{IF}_{0,n}(z) & = n E_{P_D(\ta|g)}\left[  \ta  C_{z}(\ta,g) \right] -
 n \left[ E_{P_D(\ta|g)} \ta \right] \left[ E_{P_D(\ta|g)} C_{z}(\ta,g) \right] \\
 & = n \mbox{Cov}_{P_D(\ta|g)}\left(\ta, C_{z}(\ta,g)\right).
\end{align*}
where $E_{P_D(\ta|g)}$ indicates expectation with respect to the D-posterior with
fixed density $g$ and
\begin{align*}
C_{z}(\ta,g) & = \left. \frac{d}{d\epsilon}  \int G\left(
\frac{h_{z,\epsilon}(x)}{f_\ta(x)}-1 \right) f_\ta(x) dx \right|_{\epsilon = 0} \\
& =  \int G'\left(\frac{g(x)}{f_\ta(x)}-1 \right)(t_z(x) - g(x)) dx.
\end{align*}
Here we observe that $\mbox{IF}_{0,n}(z)$ depends on the prior $\pi$. This is the case for any {\em a posteriori} estimate. $\mbox{IF}_{0,n}(z)$ also depends on the disparity employed as demonstrated in
\begin{thm} \label{influence.theorem}
Let $D(g,f_\ta)$ be bounded and assume that
\begin{equation} \label{influence1}
e_0 = \sup_x \int \left| G'\left(\frac{g(x)}{f_\ta(x)}-1\right) \pi(\ta)
\right| d\ta  < \infty \mbox{ and } e_1 = \sup_x \int \left| \ta
G'\left(\frac{g(x)}{f_\ta(x)}-1\right) \pi(\ta) \right| d\ta  < \infty
\end{equation}
then $|IF(\ta;g,t_z)| < \infty$.
\end{thm}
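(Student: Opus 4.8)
The plan is to show that each of the three terms in the covariance representation of the influence function,
\[
\mbox{IF}_{0,n}(z) = n\left( E_{P_D(\ta|g)}[\ta\, C_z(\ta,g)] - E_{P_D(\ta|g)}[\ta]\, E_{P_D(\ta|g)}[C_z(\ta,g)] \right),
\]
is finite; since $n$ is held fixed, finiteness of each factor gives finiteness of the whole. The first step is to record what boundedness of $D$ buys us: because $D(g,f_\ta)$ is bounded (uniformly in $\ta$), the weight $e^{-nD(g,f_\ta)}$ lies in an interval $[c_1,c_2]$ of positive constants depending only on $n$. Consequently the normalizing constant $Z=\int e^{-nD(g,f_\ta)}\pi(\ta)\,d\ta$ satisfies $c_1 \le Z \le c_2$, so $P_D(\ta|g)$ is a bona fide probability density and dividing by $Z$ can never produce a blow-up.

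For the factor $E_{P_D(\ta|g)}[\ta]$ I would simply bound $e^{-nD}\le c_2$ to obtain $\left|\int \ta\, e^{-nD}\pi(\ta)\,d\ta\right| \le c_2\int \|\ta\|\,\pi(\ta)\,d\ta$, which is finite by the standing assumption that $\pi$ has a finite first moment; dividing by $Z\ge c_1$ keeps it finite. The two cross terms are where the hypotheses $e_0,e_1<\infty$ do the real work. Substituting $C_z(\ta,g) = \int G'(g(x)/f_\ta(x)-1)\,(t_z(x)-g(x))\,dx$ into $\int C_z(\ta,g)\,e^{-nD}\pi(\ta)\,d\ta$ and interchanging the order of integration gives
\[
\int C_z(\ta,g)\,e^{-nD(g,f_\ta)}\pi(\ta)\,d\ta = \int (t_z(x)-g(x))\left[\int G'\!\left(\frac{g(x)}{f_\ta(x)}-1\right) e^{-nD(g,f_\ta)}\pi(\ta)\,d\ta\right]dx.
\]
Using $e^{-nD}\le c_2$, the bracketed inner $\ta$-integral is bounded in absolute value, uniformly in $x$, by $c_2 e_0$; and since $t_z$ and $g$ are densities, $\int |t_z(x)-g(x)|\,dx \le 2$. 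Hence the term is at most $2c_2 e_0$ and, after dividing by $Z$, $|E_{P_D(\ta|g)}[C_z(\ta,g)]| \le 2c_2 e_0/c_1 < \infty$. The identical computation carrying the extra factor $\ta$ inside replaces $e_0$ by $e_1$ and yields $|E_{P_D(\ta|g)}[\ta\, C_z(\ta,g)]| \le 2c_2 e_1/c_1 < \infty$. The precise role of the suprema over $x$ in the definitions of $e_0$ and $e_1$ is exactly to turn the bracketed $\ta$-integral into a single constant that can be pulled outside the $x$-integration.

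The main obstacle is justifying the Fubini interchange, which must be established before the swap is legitimate. This requires absolute integrability of $G'(g(x)/f_\ta(x)-1)\,(t_z(x)-g(x))\,e^{-nD}\pi(\ta)$ on the product space; but the very same uniform-in-$x$ bounds give $\iint |G'(\cdot)|\,|t_z-g|\,e^{-nD}\pi\,dx\,d\ta \le 2c_2 e_0 < \infty$, so the hypotheses $e_0,e_1<\infty$ both license the interchange and supply the final bounds. Assembling the three pieces shows that $|\mbox{IF}_{0,n}(z)|$ is bounded by a finite constant, completing the argument; I note that the resulting bound does not depend on $z$, so in fact one obtains uniform finiteness over $z$, which is stronger than the pointwise statement asserted.
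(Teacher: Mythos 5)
Your proposal is correct and follows essentially the same route as the paper's proof: bound the D-posterior weight $e^{-nD(g,f_\ta)}$ between positive constants using the boundedness of $D$, reduce to showing $E_{P_D(\ta|g)}[C_z(\ta,g)]$ and $E_{P_D(\ta|g)}[\ta\,C_z(\ta,g)]$ are finite, and control these via $e_0$, $e_1$ and the bound $\int|g(x)-t_z(x)|\,dx\leq 2$. Your explicit justification of the Fubini interchange and the observation that the bound is uniform in $z$ are minor refinements the paper leaves implicit.
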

In the case of Hellinger distance the conditions of Theorem
\ref{influence.theorem} require the boundedness of $r(x) = \int
(\sqrt{f_\ta(x)}/\sqrt{g(x)}) \pi(\theta) d\theta$ which may not always hold \citep[e.g.][]{Beran77}.

Despite this strong result, in an asymptotic sense the choice of disparity and prior is indestiguishable when $g$ is assumed to lie within the model class. Expanding $C_{z}(\ta,g)$ about $\bar{\ta} = E_{P_D(\ta|g)}\ta$ provides
\begin{align*}
\mbox{IF}_{0,n}(z) & = n E_{P_D(\ta|g)}\left(\ta - \bar{\ta}\right)^2 C_{z}'(\bar{\ta}) + \frac{n}{2} E_{P_D(\ta|g)} \left(\ta - \bar{\ta}\right)^3 C_{z}''(\ta^*) \\
& = I_{D}(\theta_g)^{-1} C_z'(\ta_g) + (\bar{\ta}-\ta_g)C_z'(\ta^+) +  o(n^{-1/2}) \\
& =   I_{D}(\theta_g)^{-1} C_z'(\ta_g) +  o(n^{-1/2})
\end{align*}
where $\ta^*$ is lies between $\ta$ and $\bar{\ta}$ and $\ta^+$ between $\bar{\ta}$ and $\ta_g$, since $(\ta - \ta_g)$ and $(\bar{\ta}-\ta_g)$ are $O_p(n^{-1/2})$ and $o(n^{-1/2})$ respectively. We now observe that when $g = f_{\ta_0}$ is in the model class,
\[
 C_z'(\ta_0) = \int \frac{\nabla_\ta f_{\ta_0}(x)}{f_{\ta_0}(x)}(t_z(x)-g(x)) dx
\]
is independent of both the disparity and the prior, as is $I_D(\theta_0)$ and the limiting value of $\mbox{IF}_{0,n}(z)$ coincides with the influence function of the MLE.  We also note that the next leading term in the expansion above is $C_z''(\ta)$, which co-incides with the second-order approximation in \citet[][Eqn. 7]{Lindsay94}. Unlike the case of the MDE, however, here the second order term does affect the influence function at finite $n$.

\section{Efficiency and Numerical Results} \label{sec:simple}

While there is a large literature on robust estimation methods, disparity-based estimation methods also achieve statistical
efficiency when $g$ is a member of the parametric family $f_\ta$.  In this section, we present theoretical results for i.i.d. data to demonstrate that
inference based on the D-posterior is also asympotically efficient. We also conduct a simulation study to demonstrate the finite-sample
performance of these estimators.

\subsection{Efficiency}

We recall that under suitable regularity conditions, expected {\em a
posteriori} estimators are strongly consistent, asymptotically normal and are
statistically efficient; \citep[see][Theorems 4.2-4.3]{GDS06}. Our results in
this section show that this property continues to hold for EDAP estimators
under regularity conditions on $G(\cdot)$ when the model $\{f_\ta: \ta \in
\Theta\}$ contains the true distribution.  We define
\[
I^D(\ta) = \nabla^2_\ta D(g,f_{\ta}), \mbox{ and } \hat{I}_n^D(\ta)
= \nabla^2_\ta D(g_n,f_{\ta})
\]
as the disparity information and $\ta_g$ the parameter that minimizes $D(g,f_\ta)$ (note
that $\ta_g$ here depends on $g$).
We note that if $g = f_{\ta_g}$, $I^D(\ta_g)$ is
exactly equal to the Fisher information for $\ta_g$.

The proofs of our asymptotic results rely on the assumptions listed below.
Among these are that minimum disparity estimators are strongly consistent and
efficient; this in turn relies on further assumptions, some of which make those
listed below redundant. They are given here to maximize the mathematical
clarity of our arguments.  We assume that $X_1,\ldots,X_n$ are i.i.d. generated
from some distribution $g(x)$ and that a parametric family, $f_{\ta}(x)$ has
been proposed for $g(x)$ where $\ta$ has distribution $\pi$.   To demonstrate
efficiency, we assume
\begin{enumerate}[({A}1)]
\item $g(x) = f_{\ta_g}(x)$; i.e. $g$ is a member of the parametric family.
\label{G.param}

\item $G$ has three continuous derivatives with $G'(0) = 0$, $G''(0) = 1$ and
$|G'''(0)| \leq \infty$. \label{G.ass}

\item \ghadd{ There exists $C > 0$ such that for all $g$ and $h$
\[
\sup_{\theta \in \Theta} | D(g,f_\ta) - D(h,f_\ta) | \leq C \int |g(x)-h(x)| dx.
\]
} \label{l1continuity}

\item  $\nabla^2_\ta D(g,f_{\ta})$ is positive definite and continuous in
$\ta$ at $\ta_g$ and continuous in $g$ with respect to the $L_1$ metric. \label{ass.smooth}

\item For any $\delta > 0$, there exists $\epsilon>0$ such that
\[
\sup_{|\ta - \ta_g|> \delta}( D(g,f_{\ta})-D(g,f_{\ta_g})) > \epsilon
\] \label{identifiability}

% \item The parameter space $\Theta$ is compact.

\item  The minimum disparity estimator, $\hat{\ta}_n$, satisfies
$\hat{\ta}_n \rightarrow \ta_g$ almost surely and $\sqrt{n}(\hat{\ta}_n -
\ta_g) \convd N(0,I^D(\ta)^{-1})$. \label{MDE.efficiency}
\end{enumerate}

Our first result concerns the limit distribution for the posterior density of
$\sqrt{n}(\ta-\hat{\ta}_n)$, which demonstrates that the D-posterior
converges in $L_1$ to a Gaussian density centered on the minimum disparity
estimator $\hat{\ta}_n$ with variance $\left[nI^D(\hat{\ta}_n)\right]^{-1}$. This establishes that credible intervals based on
either $P_D(\ta|x_1,\ldots,x_n)$ or from
$N(\hat{\ta}_n,I^D_n(\hat{\ta}_n)^{-1})$ will be asymptotically accurate.

\begin{thm} \label{l1convergence}
Let $\hat{\ta}_n$ be the minimum disparity estimator of $\ta_g$, $\pi(\ta)$ be
any prior that is continuous and positive at $\ta_g$ with $\int_\Theta
\|\ta\|_2 \pi(\ta) d\ta < \infty$ where $\|\cdot\|_2$ is the usual 2-norm, and
$\pi^{*D}_n(t)$ be the D-posterior density of $t = (t_1,\ldots,t_p) =
\sqrt{n}(\ta-\hat{\ta}_n)$. Then, under conditions
(A\ref{G.ass})-(A\ref{MDE.efficiency}),
\begin{equation} \label{eq:l1convergence}
\lim_{n \rightarrow \infty} \int \left| \pi^{*D}_n(t) -
\left(\frac{|I^D(\ta_g)|}{2 \pi}\right)^{p/2} e^{-\frac{1}{2}t' I^D(\ta_g) t} \right|
dt \convas 0.
\end{equation}
Furthermore, \eqref{eq:l1convergence} also holds with $I^D(\ta_g)$
replaced with $\hat{I}^D_n(\hat{\ta}_n)$.
\end{thm}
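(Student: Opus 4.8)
The plan is to establish a Bernstein--von Mises statement by localizing the D-posterior at the minimum disparity estimator $\hat\ta_n$ and showing that, after rescaling, the unnormalized integrand converges in $L_1$ to a Gaussian. First I would factor the common constant $e^{-nD(g_n,f_{\hat\ta_n})}$ out of the numerator and denominator of \eqref{eq:dposterior}; writing $Q_n(\ta) = D(g_n,f_\ta) - D(g_n,f_{\hat\ta_n}) \ge 0$ and changing variables to $t = \sqrt n(\ta - \hat\ta_n)$ (the Jacobian $n^{-p/2}$ cancelling) gives
\[
\pi^{*D}_n(t) = \frac{\psi_n(t)}{\int \psi_n(s)\,ds}, \qquad \psi_n(t) = e^{-nQ_n(\hat\ta_n + t/\sqrt n)}\,\pi(\hat\ta_n + t/\sqrt n).
\]
Since $\hat\ta_n$ minimizes $D(g_n,f_\ta)$, the gradient of $Q_n$ vanishes at $\hat\ta_n$, and a third-order Taylor expansion justified by (A\ref{G.ass}) yields $nQ_n(\hat\ta_n + t/\sqrt n) = \tfrac12 t'\hat I^D_n(\hat\ta_n)\,t + R_n(t)$ with remainder $R_n(t) = O(|t|^3/\sqrt n)$ (the third derivative of $D$ being bounded near $\ta_g$). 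For each fixed $t$, the consistency $\hat\ta_n \to \ta_g$ from (A\ref{MDE.efficiency}), the $L_1$-consistency of $g_n$, and the continuity in (A\ref{ass.smooth}) give $\hat I^D_n(\hat\ta_n) \to I^D(\ta_g)$ and $R_n(t) \to 0$ almost surely, while $\pi(\hat\ta_n + t/\sqrt n) \to \pi(\ta_g)$ by continuity; hence $\psi_n(t) \to \psi_\infty(t) := \pi(\ta_g)\,e^{-\frac12 t' I^D(\ta_g) t}$ pointwise, a.s.

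The core step is to upgrade this to $\int |\psi_n - \psi_\infty|\,dt \to 0$ a.s., which I would obtain by splitting $\real^p$ at $|t| = \delta\sqrt n$. On the inner region the quadratic term dominates: (A\ref{ass.smooth}) makes $\hat I^D_n(\hat\ta_n)$ eventually positive definite with smallest eigenvalue bounded below, and for $|t| \le \delta\sqrt n$ the remainder satisfies $|R_n(t)| \le C\delta|t|^2$, so for $\delta$ small enough $nQ_n \ge c|t|^2$ and thus $\psi_n(t)\,\mathbf 1_{\{|t|\le\delta\sqrt n\}} \le B\,e^{-c|t|^2}$, with $B$ a local bound on $\pi$ near $\ta_g$; dominated convergence then gives $\int_{|t|\le\delta\sqrt n}|\psi_n - \psi_\infty| \to 0$. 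On the outer region I would invoke identifiability: combining (A\ref{identifiability}) with the $L_1$-continuity (A\ref{l1continuity}) and $\|g_n - g\|_1 \to 0$ transfers the separation from $g$ to $g_n$, giving $Q_n(\ta) \ge \epsilon/2$ uniformly on $\{|\ta - \hat\ta_n| > \delta\}$ for large $n$; reverting to $\ta$-coordinates, $\int_{|t|>\delta\sqrt n}\psi_n = n^{p/2}\int_{|\ta - \hat\ta_n|>\delta} e^{-nQ_n(\ta)}\pi(\ta)\,d\ta \le n^{p/2} e^{-n\epsilon/2} \to 0$, and $\int_{|t|>\delta\sqrt n}\psi_\infty \to 0$ is a Gaussian tail bound, so the outer region is controlled.

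Given $\int|\psi_n - \psi_\infty|\to 0$, the normalizers satisfy $C_n := \int\psi_n \to \int\psi_\infty = \pi(\ta_g)(2\pi)^{p/2}|I^D(\ta_g)|^{-1/2} =: C_\infty > 0$, and since the target density is exactly $\phi = \psi_\infty/C_\infty$, the bound
\[
\int\left|\pi^{*D}_n - \phi\right| \le \frac{1}{C_n}\int|\psi_n - \psi_\infty| + \left|\frac{1}{C_n} - \frac{1}{C_\infty}\right| C_\infty \longrightarrow 0
\]
gives \eqref{eq:l1convergence}. For the final assertion, $\hat I^D_n(\hat\ta_n) \to I^D(\ta_g)$ a.s. makes the corresponding centered Gaussian densities converge in $L_1$ (continuity of the Gaussian density in its precision matrix), so one more triangle inequality lets $\hat I^D_n(\hat\ta_n)$ replace $I^D(\ta_g)$. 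I expect the main obstacle to be the outer-region estimate: producing a lower bound on $Q_n$ that is both uniform in $\ta$ and large enough to overcome the $n^{p/2}$ Jacobian, which is precisely where the $L_1$-continuity (A\ref{l1continuity}) is needed to push the fixed-$g$ identifiability condition (A\ref{identifiability}) onto the random estimate $g_n$.
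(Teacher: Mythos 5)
Your proposal is correct and follows essentially the same route as the paper's proof (Lemma \ref{wlemma} plus the normalizer argument): localize at $\hat\ta_n$, rescale by $\sqrt n$, split at $\|t\|=\delta\sqrt n$, dominate the inner region by a Gaussian via positive-definiteness of the Hessian (the paper uses the exact second-order mean-value form $\tfrac12 t'\hat I^D_n(\ta_n')t$ where you use a third-order expansion with remainder, a cosmetic difference), control the outer region by transferring identifiability (A\ref{identifiability}) from $g$ to $g_n$ through $L_1$-continuity, and finish with convergence of the normalizing constants and the triangle inequality. Your explicit tracking of the $n^{p/2}$ Jacobian against $e^{-n\epsilon/2}$ on the outer region is in fact slightly more careful than the paper's write-up.
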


Our next theorem is concerned with  the efficiency
and asymptotic normality of EDAP estimates.

\begin{thm} \label{posteriorefficiency}
Assume conditions (A\ref{G.ass})-(A\ref{MDE.efficiency}) and  $\int_\Theta
\|\ta\|_2 \pi(\ta) d\ta < \infty$, then \\ $\sqrt{n} \left( \ta^*_n - \hat{\ta}_n
\right) \convas 0$ where $\ta^*_n$ is the EDAP estimate. Further,
$\sqrt{n}\left(\ta^*_n - \ta_g \right) \convd N\left(0,I^D(\ta_g)\right)$.
\end{thm}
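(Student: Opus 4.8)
The plan is to deduce both assertions from the $L_1$ convergence already established in Theorem \ref{l1convergence}, so that the only genuinely new work concerns the control of a first moment. The second claim is the easy consequence of the first: writing
\[
\sqrt{n}(\ta^*_n - \ta_g) = \sqrt{n}(\ta^*_n - \hat{\ta}_n) + \sqrt{n}(\hat{\ta}_n - \ta_g),
\]
the first summand vanishes a.s. by the first assertion and the second converges in distribution to the stated Gaussian by (A\ref{MDE.efficiency}); Slutsky's theorem then delivers the limit law. Thus everything reduces to proving $\sqrt{n}(\ta^*_n - \hat{\ta}_n) \convas 0$.

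First I would change variables to $t = \sqrt{n}(\ta - \hat{\ta}_n)$ in the definition of the EDAP. Since $\ta^*_n = \int \ta\, P_D(\ta|g_n)\, d\ta$, this gives $\sqrt{n}(\ta^*_n - \hat{\ta}_n) = \int t\, \pi^{*D}_n(t)\, dt$, with $\pi^{*D}_n$ the rescaled D-posterior density of Theorem \ref{l1convergence}. Writing $\phi(t) = (|I^D(\ta_g)|/2\pi)^{p/2} e^{-t' I^D(\ta_g) t/2}$ for the limiting Gaussian density, which is centered at zero, the target is bounded by
\[
\left| \int t\, \pi^{*D}_n(t)\, dt \right| = \left| \int t\, \left(\pi^{*D}_n(t) - \phi(t)\right) dt \right| \leq \int \|t\|_2 \, \left|\pi^{*D}_n(t) - \phi(t)\right| dt.
\]
Hence it suffices to upgrade the $L_1$ convergence of Theorem \ref{l1convergence} to convergence in the \emph{first-moment-weighted} $L_1$ norm, i.e. to show that the right-hand side tends to $0$ almost surely.

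I would establish the weighted convergence by truncation. Fix $M > 0$ and split each integral at $\|t\|_2 = M$. On the central region the weight is bounded by $M$, so $\int_{\|t\|_2 \leq M} \|t\|_2 |\pi^{*D}_n - \phi|\, dt \leq M \int |\pi^{*D}_n - \phi|\, dt \to 0$ a.s. by Theorem \ref{l1convergence}. The Gaussian tail $\int_{\|t\|_2 > M} \|t\|_2\, \phi(t)\, dt$ does not depend on $n$ and vanishes as $M \to \infty$ by integrability of the Gaussian. The crux is the posterior tail $\int_{\|t\|_2 > M} \|t\|_2\, \pi^{*D}_n(t)\, dt$, which must be made small uniformly in large $n$. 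Reverting to $\ta$ and normalizing the disparity by $m_n = D(g_n, f_{\hat{\ta}_n})$, I would split this tail into a moderate band $M/\sqrt{n} < \|\ta - \hat{\ta}_n\|_2 \leq \delta$ and a far region $\|\ta - \ta_g\|_2 > \delta$. On the far region, identifiability (A\ref{identifiability}) together with $L_1$-continuity (A\ref{l1continuity}) and the a.s. $L_1$-consistency of $g_n$ forces $D(g_n, f_\ta) - m_n \geq \epsilon'$ for large $n$; the contribution is then of order $n^{(p+1)/2} e^{-n\epsilon'} \int \|\ta\|_2\, \pi(\ta)\, d\ta$ after dividing by the normalizing constant $Z_n \asymp n^{-p/2}$, and here the prior moment hypothesis $\int \|\ta\|_2 \pi\, d\ta < \infty$ is exactly what keeps the bounding integral finite while the exponential overwhelms the polynomial factor. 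On the moderate band, positive-definiteness and continuity of $\nabla^2_\ta D$ at $\ta_g$ (A\ref{ass.smooth}) give a quadratic lower bound $D(g_n, f_\ta) - m_n \geq c\|t\|_2^2/n$, so the integrand is dominated by $\|t\|_2 e^{-c\|t\|_2^2}$ times a locally bounded prior factor, whose tail beyond $M$ vanishes uniformly in $n$.

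The main obstacle is precisely this uniform-in-$n$ tail bound: Theorem \ref{l1convergence} controls no moments, and because the disparity is bounded the likelihood surrogate $e^{-nD}$ does not decay in the extreme tails, so there the posterior mass is governed entirely by the prior. Balancing the polynomial factor $n^{p/2}$ from $Z_n^{-1}$ against the exponential smallness $e^{-n\epsilon'}$ in the far region, while matching the Gaussian quadratic decay in the moderate band, is the delicate part. Once this uniform tail estimate is in hand, taking $n \to \infty$ for fixed $M$ and then letting $M \to \infty$ yields $\int \|t\|_2 |\pi^{*D}_n - \phi|\, dt \convas 0$, hence $\sqrt{n}(\ta^*_n - \hat{\ta}_n) \convas 0$, and the limit law follows as in the first paragraph.
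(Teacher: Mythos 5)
Your proposal is correct and follows essentially the same route as the paper: the paper reduces both claims to the first-moment-weighted $L_1$ convergence $\int \|t\|_2\,|\pi^{*D}_n(t)-\phi(t)|\,dt \convas 0$, which it proves in Lemma \ref{wlemma} by the same split you describe (identifiability forcing $D(g_n,f_\ta)-D(g_n,f_{\hat\ta_n})>\epsilon$ on $\{\|t\|_2>\delta\sqrt n\}$ so that $e^{-n\epsilon}$ beats the polynomial normalization and the prior's finite first moment controls the far tail, plus a quadratic lower bound from (A\ref{ass.smooth}) giving Gaussian domination on $\{\|t\|_2\le\delta\sqrt n\}$), and then concludes via Slutsky exactly as you do. Your truncation at $\|t\|_2=M$ is just a cosmetic variant of the paper's dominated-convergence step.
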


The proofs of these theorems are deferred to the online appendix
\ref{efficiency.proofs}, but the following remarks concerning the assumptions (A1)-(A6) are in order:
\begin{enumerate}
\item Assumption A\ref{G.param} states that $g$ is a member of the parametric
family. When this does not hold, a central limit theorem can be derived for
$\hat{\ta}_n$ but the variance takes a sandwich-type form; see \citet{Beran77}
in the case of Hellinger distance. For  brevity, we have followed \citet{BSV97}
and \citet{ParkBasu04} in restricting to the parametric case.

\item Assumptions A\ref{G.ass}-A\ref{identifiability} are required for the regularity and
identifiability of the parametric family $f_{\ta}$ in the disparity $D$. \ghadd{Note that A\ref{l1continuity} holds for Hellinger distance and if $G'(\cdot)$ is bounded from arguments in \citet{ParkBasu04}; other disparities may require specialized demonstrations.}
Specific conditions for A\ref{MDE.efficiency} to hold are given in various
forms in \citet{Beran77,BSV97,ParkBasu04} and \citet{ChenVidyashankar06}, see
conditions in Online Appendix \ref{efficiency.proofs}.

%\item We assume that that the parameter space, $\Theta$, is compact; this result
%is also used in conditions that guarantee A\ref{MDE.efficiency}. As noted in
%\citet{Beran77}, as well as others, the result continues to hold if $\Theta$
%can be appropriately embedded in a compact space.  Alternatively,
%\citet{ChenVidyashankar06} assume local compactness. \label{compactnessremarks}

\item The proofs of these results
employ the same strategies as those for posterior asymptotic efficiency
\citep[see][for example]{GDS06}. However, here we rely on the second-order convergence
of the disparity to the likelihood at appropriate rates and the consequent asymptotic
efficiency of minimum-disparity estimators, which in turn is based on a careful analysis of non-parametric density estimates.

\item Since the structure of the proof only requires  second-order
properties and appropriate rates of convergence, we can replace $D(g_n,f_{\ta})$ for
i.i.d. data with an appropriate disparity-based term for more complex
models as long as A\ref{MDE.efficiency}
can be shown hold.  In particular, the
results in \citet{HookerVidyashankar09}
%and \citet{HookerVidyashankar09b}
suggest
that the disparity methods for regression problems detailed in Section
\ref{sec:conditional}
%and \ref{sec:reduction}
will also yield efficient estimates.
\end{enumerate}

\subsection{Simulation Studies}

To illustrate the small sample performance of D-posteriors, we undertook a
simulation study for i.i.d. data from Gaussian distribution. 1,000 sample data sets of size 20 from
a $N(5,1)$ population were generated. For each sample data set, a random walk
Metropolis algorithm was run for 20,000 steps using a $N(0,0.5)$ proposal
distribution and a $N(0,25)$ prior, placing the true mean one prior standard deviation above the prior mean. The kernel bandwidth was selected by the
bandwidth selection in \citet{SheatherJones91}. H- and N-posteriors were easily
calculated by combining the \texttt{KernSmooth} \citep{KernSmooth} and
\texttt{LearnBayes} \citep{LearnBayes} packages in \texttt{R}. We also report an experiment
in which the normal log likelihood is replaced in the posterior with Tukey's biweight objective function using a cut-point of
4.685 as a  comparison to alternative robust estimators. \ghadd{In order to compare computational cost, we have run an MCMC chain for the normal log likelihood and report these below, even though analytic posteriors are available.}

Expected {\em a
posteriori} estimates for the sample mean were obtained along with 95\%
credible intervals from every second sample in the second half of the MCMC chain.
Outlier contamination was investigated by reducing the last one, two or five
elements in the data set by 3, 5 or 10. This choice was made so that both outliers and
prior influence the EDAP in the same direction. The analytic
posterior without the outliers is normal with mean 4.99 (equivalently, bias
of -0.01) and standard deviation 0.223.

The results of this simulation are
summarized in Tables \ref{simtable1.paper} (uncontaminated data) and \ref{simtable2.paper} (contaminated data).
As can be expected, the standard Bayesian
posterior suffers from sensitivity to large negative values whereas the disparity-based
methods remain nearly unchanged. Tukey's biweight also ignored large outliers, but was more sensitive
than the disparity methods to larger amounts of contamination.  Near-outliers at the smaller value of -3
resulted in similar biases across all methods.   We observe that all robust estimates have slightly larger standard
deviations than the EAP corresponding to a loss of efficiency of 2\% for the H-posterior and 5\% for the N-posterior and Tukey estimates.
We speculate the increased variance from the N-posterior is due to its relatively Heavier tails (the maximal value of NED is $e^{-1}$ compared
to 4 for 2HD).  A comparison of CPU time
indicates that the use of disparity methods required a little more than twice
the computational effort as compared to using the likelihood within an MCMC
method. Further details from this simulation, including comparisons with Huber estimators are given in Online Appendix \ref{sim:iid}.

The influence of the prior is investigated in the right-hand plot of Figure
\ref{likapprox} where we observe that the EAP and EDAP estimates are
essentially identical until the prior is about 9 standard deviations from the
mean of the data: at this point the prior dominates. However, we note that this
picture will depend strongly on the prior chosen; a less informative prior will
have a smaller range of dominance.

Because the normal distribution is symmetric, estimating its mean is relatively easy. We therefore also conducted a
simulation to estimate both shape and scale parameters in an \ghadd{exponential}-Gamma distribution \ghadd{(i.e. $\exp(X_i)$ has a Gamma distribution)}. The details and results of this simulation are
reserved to Online Appendix \ref{sim:iid}. We observed the expected behavior: EDAP estimates remained insensitive to outliers, whereas
they significantly distorted the EAP. However in this case, the H-posterior demonstrated larger variance than the N-posterior which we
explain as being due to the tendency of nonparametric density estimates from exponential-Gamma data to become bimodal: producing inliers where a large value of the parametric density is compared to a relatively small value of the nonparametric estimate.

\begin{table}
\caption{\label{simtable1.paper} A simulation study for a normal mean using the usual posterior, the
Hellinger posterior and the Negative Exponential posterior. Columns  give the bias and variance of the posterior mean, coverage and
average CPU time of the central 95\% credible interval based on 1,000
simulations. These are recorded for the posterior, Hellinger
distance (HD), negative exponential disparity (NED) and Tukey's biweight objective used in place
of the log likelihood (Biweight).}
\centering
\fbox{
\begin{tabular}{l|ccccc}
                     &   Bias   & SD    & Coverage & Length & CPU Time \\ \hline
Posterior            & -0.015   & 0.222 &   0.956  & 0.873  & 3.393 \\
Hellinger            & -0.015   & 0.225 &   0.954  & 0.920  & 7.669 \\
Negative Exponential & -0.018   & 0.229 &   0.973  & 1.022  & 7.731 \\
Biweight     & -0.017 & 0.228 &   0.977  & 1.007  & 3.523 \\ \hline
\end{tabular}}
\end{table}

\begin{table}
\caption{\label{simtable2.paper} Results for contaminating the data sets used in Table \ref{simtable1.paper} with outliers.
1, 2, and 5 outliers (large
columns) are added at locations -3, -5 and -10 (column Loc) for the posterior, Hellinger
distance (HD), negative exponential disparity (NED) and Tukey's biweight objective used in place
of the log likelihood (Biweight).}
\centering
\fbox{
\begin{tabular}{l||ccc|ccc|ccc}
&  \multicolumn{3}{c}{1 Outlier} &  \multicolumn{3}{c}{2 Outliers} &
\multicolumn{3}{c}{5 Outliers} \\ \hline
Loc  & Bias    & SD    & Coverage & Bias   & SD    & Coverage & Bias   & SD    & Coverage \\ \hline \hline
Posterior & & & & & & & & & \\
-3     & -0.164  & 0.219  & 0.883   & -0.300 & 0.206 & 0.722    & -0.637 & 0.182 & 0.100 \\
-5     & -0.264  & 0.219  & 0.778   & -0.490 & 0.206 & 0.375    & -1.053 & 0.182 & 0.001 \\
-10    & -0.513  & 0.219  & 0.360   & -0.965 & 0.207 & 0.004    & -2.093 & 0.182 & 0.000 \\ \hline
HD & & & & & & & & & \\
-3     &  -0.109 & 0.246 & 0.920    & -0.194 & 0.275 & 0.859    & -0.237 & 0.299  & 0.770 \\
-5     &  -0.027 & 0.238 & 0.942    & -0.040 & 0.257 & 0.928    & -0.024 & 0.305  & 0.865 \\
-10    &  -0.014 & 0.234 & 0.948    & -0.019 & 0.249 & 0.935    &  0.018 & 0.286  & 0.883 \\ \hline
NED & & & & & & & & & \\
-3     &  -0.080 & 0.256 & 0.959    & -0.133 & 0.279 & 0.933    & -0.166 & 0.308  & 0.893 \\
-5     &  -0.020 & 0.238 & 0.977    & -0.025 & 0.243 & 0.968    & -0.015 & 0.264  & 0.948 \\
-10    &  -0.017 & 0.237 & 0.973    & -0.020 & 0.241 & 0.970    & -0.007 & 0.260  & 0.952 \\ \hline
Biweight & & & & & & & & & \\
-3     & -0.091 & 0.246 & 0.954    & -0.175 & 0.252 & 0.915     & -0.443 & 0.275 & 0.645 \\
-5     & -0.018 & 0.237 & 0.974    & -0.019 & 0.236 & 0.972     & -0.022 & 0.243 & 0.967 \\
-10    & -0.017 & 0.236 & 0.977    & -0.018 & 0.234 & 0.971     & -0.018 & 0.236 & 0.969 \\
\end{tabular} }
\end{table}

\section{Disparities based on Conditional Density for Regression Models}
\label{sec:conditional}

The discussion above, along with most of the literature on disparity
estimation, has focussed on i.i.d. data in which a kernel density estimate may
be calculated. The restriction to i.i.d. contexts severely limits the
applicability of disparity-based methods. We extend these methods to non-i.i.d.
data settings via the use of conditional density estimates. This extension is
studied in the frequentist context in the case of minimum-disparity estimates
for parameters in non-linear regression in \citet{HookerVidyashankar09b}.

Consider the classical regression framework with data $(Y_1,X_1),\ldots,(Y_n,X_n)$ is a
collection of i.i.d. random variables where
inference is made conditionally on $X_i$.  For continuous $X_i$, a
non-parametric estimate of the conditional density of $y|x$ is given by \citet{Hansen04} and \citet{LiRacine07}:
\begin{equation} \label{conditional.kernel}
g_n^{(c)}(y|x) = \frac{ \frac{1}{n c_{n1} c_{n2}} \sum_{i=1}^n K\left( \frac{y -
Y_i}{c_{n1}} \right) K \left( \frac{\|x-X_i\|}{c_{n2}} \right) }{ \frac{1}{n
c_{n2}} \sum_{i=1}^n K \left( \frac{\|x-X_i\|}{c_{n2}} \right) }.
\end{equation}
Under a parametric model $f_\ta(y|X_i)$ assumed  for the conditional distribution of $Y_i$
given $X_i$, we define a disparity between $g_n^{(c)}$ and $f_\ta$ as follows:
\begin{equation} \label{cdisp}
D^{(c)}(g_n^{(c)},f_{\ta}) = \sum_{i=1}^n D\left( g_n^{(c)}(\cdot|X_i),f_{\ta}(\cdot|X_i)\right).
\end{equation}
As before, for Bayesian inference we replace the log likelihood by negative of the conditional
disparity \eqref{cdisp}; that is,
\[
e^{l(Y_|X_i,\ta)} \pi(\ta) \approx e^{-D^{(c)}(g_n^{(c)},f_{\ta})} \pi(\ta).
\]
In the case of simple linear regression, $Y_i = \beta_0 + \beta_1 X_i +
\epsilon_i$, $\ta = (\beta_0,\beta_1,\sigma^2)$ and  $f_{\ta}(\cdot|X_i) =
\phi_{\beta_0+\beta_1 X_i,\sigma^2} (\cdot)$ where $\phi_{\mu,\sigma^2}$ is
Gaussian density with mean $\mu$ and variance $\sigma^2$.

The use of a conditional formulation, involving a density estimate over a
multidimensional space, produces an asymptotic bias in MDAP and EDAP estimates similar to that found in
\citet{TamuraBoos86}, who also note that this bias is generally small. Online Appendix
\ref{sec:reduction} proposes two alternative formulations that reduce the
dimension of the density estimate and the bias.

When the $X_i$ are discrete, (\ref{conditional.kernel})
reduces to a distinct conditional density for each level of $X_i$. For
example, in a one-way ANOVA model $Y_{ij} = X_i + \epsilon_{ij}$, $j =
1,\ldots,n_i$, $i = 1,\ldots,N$, this reduces to
\[
g_n^{(c)}(y|X_i) = \frac{1}{n_i c_n} \sum_{i=1}^{n_i} K \left( \frac{y - Y_{ij}}{c_n} \right).
\]
We note that in this case the bias noted above does not appear. However When
the $n_i$ are small, or for high-dimensional covariate spaces the
non-parametric estimate $g_n(y|X_i)$ can become inaccurate. The marginal
methods discussed in Online Appendix \ref{sec:reduction} can also be employed in this
case.

Online Appendix \ref{sim:linreg} gives details of a simulation study of this method as well
as those described in Online Appendix \ref{sec:reduction} for a regression
problem with a three-dimensional covariate.  All disparity-based methods
perform similarly to using the posterior with the exception of the conditional
form in Section \ref{sec:conditional} when Hellinger distance is used which
demonstrates a substantial increase in variance. We speculate that this is due
to the sparsity of the data in high dimensions creating inliers; negative
exponential disparity is less sensitive to this problem \citep{BSV97}.

\section{Disparity Metrics and the Plug-In Procedure} \label{sec:plugin}

The disparity-based techniques developed above can be extended to hierarchical models.
In particular, consider the following structure
for an observed data vector $Y$ along with an unobserved latent effect vector
$Z$ of length $n$:
\begin{equation} \label{hierarchicalmodel}
P(Y,Z,\ta) = P_1(Y|Z,\ta)P_2(Z|\ta)P_3(\ta)
\end{equation}
where $P_1$, $P_2$ and $P_3$ are the conditional distributions of $Y$ given $Z$
and $\theta$ the distribution of $Z$ given $\theta$ and the prior distribution
of $\theta$.  Any term in this factorization that can be expressed as the
product of densities of i.i.d. random variables can now be replaced by a
suitably chosen disparity. This creates a {\em plug-in procedure} in which
particular terms of a complete data log likelihood are replaced by disparities.
For example, if the middle term is assumed to be a product:
\[ P(Z|\ta) = \prod_{i=1}^n p(Z_i|\ta), \]
inference can be robustified for the distribution of the $Z_i$ by replacing
(\ref{hierarchicalmodel}) with
\[ P_{D_1}(Y,Z,\ta) = P(Y|Z,\ta)e^{-2D(g_{n}(\cdot;Z),P_2(\cdot|\ta))} P_3(\ta)
\]
where
\[
g_{n}(z;Z) = \frac{1}{nc_n} \sum_{i=1}^n K\left(\frac{z-Z_i}{c_n}\right).
\]
In an MCMC scheme, the $Z_i$ will be imputed at each iteration and the estimate
$g_n(\cdot;Z)$ will change accordingly. If the integral is evaluated using
Monte Carlo samples from $g_n$, these will also need to be updated. The
evaluation of $D(g_{n}(\cdot;Z),P_2(\cdot|\ta))$ creates additional
computational overhead, but we have found this to remain feasible for moderate
$n$. A similar substitution may also be made for the first term using the
conditional approach suggested above.

To illustrate this principle in a concrete example, consider  a one-way random-effects model:
\[ Y_{ij} = Z_i + \epsilon_{ij}, \ i = 1,\ldots,n, \ j = 1,\ldots,n_i \]
under the assumptions
\[ \epsilon_{ij} \sim N(0,\sigma^2), \ Z_i \sim N(\mu,\tau^2) \]
where the interest is in the value of $\mu$. Let $\pi(\mu,\sigma^2,\tau^2)$ be
the prior for the parameters in the model; an MCMC scheme may be conducted with
respect to the probability distribution
\begin{equation} \label{rand.effects.density}
P(Y,Z,\mu,\sigma^2,\tau^2) = \prod_{i=1}^n \left(\prod_{j=1}^{n_i}
\phi_{0,\sigma^2} (Y_{ij}-Z_i) \right) \prod_{i=1}^n \phi_{\mu,\tau^2}(Z_i)
\pi(\mu,\sigma^2,\tau^2)
\end{equation}
where $\phi_{\mu,\sigma^2}$ is the $N(\mu,\sigma^2)$ density. There are now two
potential sources of distributional errors: either in individual observed
$Y_{ij}$, or in the unobserved $Z_i$.  Either (or both) possibilities can be
dealt with via the plug-in procedure described above.

If there are concerns that the distributional assumptions on the $\epsilon_{ij}$ are
not correct, we observe that the statistics $Y_{ij}-Z_i$ are assumed to be
i.i.d. $N(0,\sigma^2)$. We may then form the conditional kernel density
estimate:
\[
g_{n}^{(c)}(t|Z_i;Z) = \frac{1}{n c_{n1}} \sum_{j=1}^{n_i} K\left(
\frac{t - (Y_{ij}-Z_i)}{c_{n1}} \right)
\]
and replace (\ref{rand.effects.density}) with
\begin{equation} \label{rand.effects.disparity1}
P_{D_2}(Y,Z,\mu,\sigma^2,\tau^2) = e^{-\sum_{i=1}^n n_i
D(g_{n}^{(c)}(t|Z_i;Z),\phi_{0,\sigma^2}(\cdot))} \prod_{i=1}^n
\phi_{\mu,\tau^2}(Z_i)  \pi(\mu,\sigma^2,\tau^2).
\end{equation}

On the other hand, if the distribution of the $Z_i$ is miss-specified, we form the estimate
\[
g_{n}(z;Z) = \frac{1}{n c_{n2}} \sum_{i=1}^n K\left( \frac{z-Z_i}{c_{n2}} \right)
\]
and use
\begin{equation} \label{rand.effects.disparity2}
P_{D_1}(X,Y,\mu,\sigma^2,\tau^2) = \prod_{i=1}^n \left(\prod_{j=1}^{n_i}
\phi_{0,\sigma^2} (Y_i-Z_i) \right) e^{-n
D(g_{n}(\cdot;Z),\phi_{\mu,\tau^2}(\cdot))} \pi(\mu,\sigma^2,\tau^2)
\end{equation}
as the D-posterior.
For inference using this posterior, both $\mu$ and the $Z_i$ will be included as
parameters in every iteration, necessitating the update of $g_n(\cdot;Z)$ or
$g_n^{(c)}(\cdot|z;Z)$. Naturally, it is also possible to
substitute a disparity in both places:
\begin{equation} \label{rand.effects.disparity3}
P_{D_{12}}(Z,Y,\mu,\sigma^2,\tau^2) = e^{- \sum_{i=1}^n n_i   D(g_{n}^{(c)}(\cdot|Z_i;Z),\phi_{0,\sigma^2}(\cdot))}
 e^{-n  D(g_{n}(\cdot;Z),\phi_{\mu,\tau^2}(\cdot))} \pi(\mu,\sigma^2,\tau^2).
\end{equation}

A simulation study considering all these approaches with Hellinger distance
chosen as the disparity is described in  Online Appendix \ref{sim:randeffect}.
Our results indicate that all replacements with disparities perform well,
although some additional bias is observed in the estimation of variance
parameters which we speculate to be due to the interaction of the small sample
size with the kernel bandwidth. Methods that replace the random effect
likelihood with a disparity remain largely unaffected by the addition of an
outlying random effect while for those that do not the estimation of both the
random effect mean and variance is substantially biased.

While a formal analysis of this method is beyond the scope of this paper we
remark that the use of density estimates of latent variables requires
significant theoretical development in both Bayesian and frequentist contexts.
In particular, in the context of using $P_{D_1}$ appropriate inference on
$\theta$ will require local agreement in the integrated likelihoods
\begin{align*}
& \int\cdots\int \prod_{i=1}^n \left(\prod_{j=1}^{n_i} \phi_{0,\sigma^2}
(Y_i-Z_i) \right) e^{-n D(g_{n}(\cdot;Z),\phi_{\mu,\tau^2}(\cdot))}
dZ_1,\ldots,dZ_n \\ & \hspace{2cm} \approx \int\cdots\int \prod_{i=1}^n
\left(\prod_{j=1}^{n_i} \phi_{0,\sigma^2} (Y_{ij}-Z_i) \right) \prod_{i=1}^n
\phi_{\mu,\tau^2}(Z_i) dZ_1,\ldots,dZ_n.
\end{align*}
This can be demonstrated if the $n_i \rightarrow \infty$ and hence the
conditional variance of the $Z_i$ is made to shrink at an appropriate rate.

We note here that the Bayesian methods developed in this paper are particularly relevant in allowing the use of MCMC
for these problems. A frequentist analysis could be obtained by marginalizing over the $Z_i$ in \eqref{rand.effects.disparity1}, \eqref{rand.effects.disparity2}, or \eqref{rand.effects.disparity3}. However, this marginalization is numerically challenging while it can be
very readily obtained in a Bayesian context via MCMC methods.

\section{Real Data Examples} \label{sec:examples}

\subsection{Parasite Data} \label{sec:parasite}

We begin with a one-way random effect model for binomial data.  These data come
from one equine farm participating in a parasite control study in Denmark in
2008. Fecal counts of eggs of the Equine Strongyle parasites were taken pre-
and post- treatment with the drug Pyrantol; the full study is presented in
\citet{NVHPK10}.  The data used in this example are reported in Online Appendix
\ref{sec:data}.

For our purposes, we model the post-treatment data from each horse as binomial
with probabilities drawn from a logit normal distribution. Specifically, we
consider the following model:
\[
k_i \sim \mbox{Bin}(N_i,p_i), \ \mbox{logit}(p_i) \sim N(\mu,\sigma^2), \ i =
1,\ldots,n,
\]
where $N_i$ are the pre-treatment egg counts and $k_i$ are the post-treatment
egg counts.
%The logit normal was chosen due to the very small empirical
%probabilities in the data.
We observe the data $(k_i,N_i)$ and desire an
estimate of $\mu$ and $\sigma$. The likelihood for these data are
\[
l(\mu,\sigma|k,N) = -\sum_{i=1}^n \left[ k_i \log p_i + (N_i-k_i) \log(1-p_i)
\right] - \frac{1}{2\sigma^2} \sum_{i=1}^n \left( \log(p_i) - \mu \right)^2.
\]
We cannot use conditional disparity methods to account for outlying $k_i$ since
we have only one observation per horse. However, we can consider robustifying
the $p_i$ distribution by use of a negative exponential disparity:
\begin{align*}
g_n(\lambda;p_1,\ldots,p_n) & = \frac{1}{nc_n} \sum K\left( \frac{\lambda -
\mbox{logit}(p_i)}{c_n} \right) \\
l^N(\mu,\sigma|k,N) & = -\sum_{i=1}^n \left[ k_i \log p_i + (N_i-k_i) \log(1-p_i)
\right] - n D(g_n(\cdot;p_1,\ldots,p_n),\phi_{\mu,\sigma^2}(\cdot))
\end{align*}
In order to perform a Bayesian analysis, $\mu$ was given a $N(0,5)$ prior and
$\sigma^2$ an inverse Gamma prior with shape parameter 3 and scale parameter
0.5.  These were chosen as conjugates to the assumed Gaussian distribution and
are defuse enough to be relatively uninformative while providing reasonable
density at the maximum likelihood estimates. A random walk Metropolis algorithm
was run for this scheme with parameterization \\
$(\mu,\log(\sigma),\mbox{logit}(p_1),\ldots,\mbox{logit}(p_n))$ for 200,000
steps with posterior samples collected every 100 steps in the second half of
the chain. $c_n$ was chosen via the method in \citet{SheatherJones91} treating
the empirical probabilities as data.

The resulting posterior distributions, given in Figure \ref{parasite.example},
indicate a substantial difference between the two posteriors, with the
N-posterior having higher mean and smaller variance. This suggests some outlier
contamination and a plot of a sample of densities $g_n$ on the right of Figure
\ref{parasite.example} suggests a lower-outlier with $\mbox{logit}(p_i)$ around
-4. In fact, this corresponds to observation 5 which had unusually high
efficacy in this horse. Removing the outlier results in good agreement between
the posterior and the N-posterior. We note that, as also observed in
\cite{Stigler73}, trimming observations in this manner, unless done carefully,
may not yield accurate credible intervals.

\begin{figure}
\begin{center}
\begin{tabular}{ccc}
\includegraphics[height=4cm]{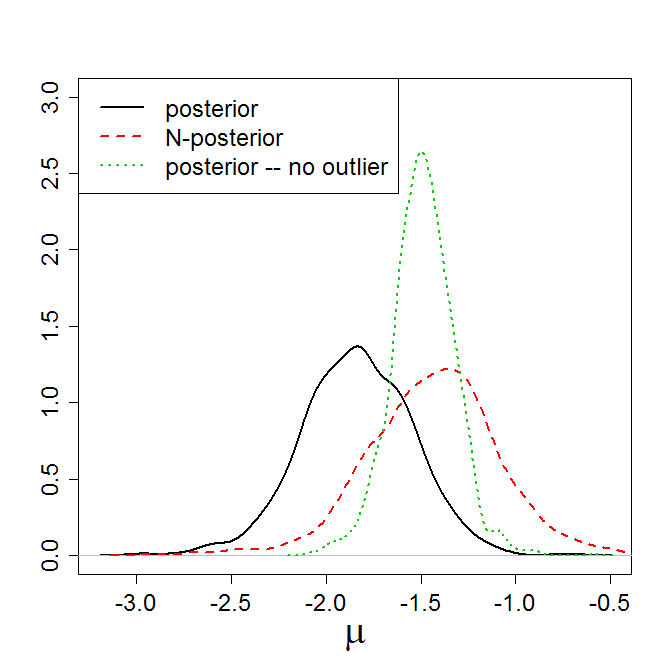} &
\includegraphics[height=4cm]{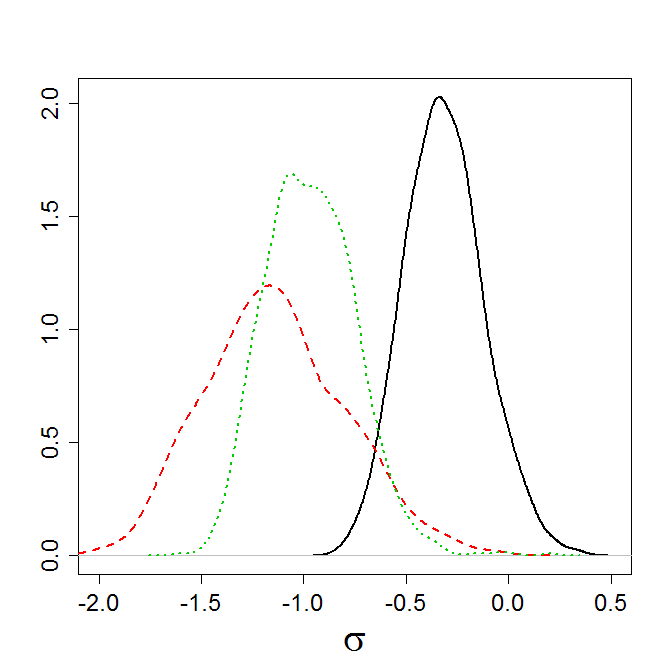} &
\includegraphics[height=4cm]{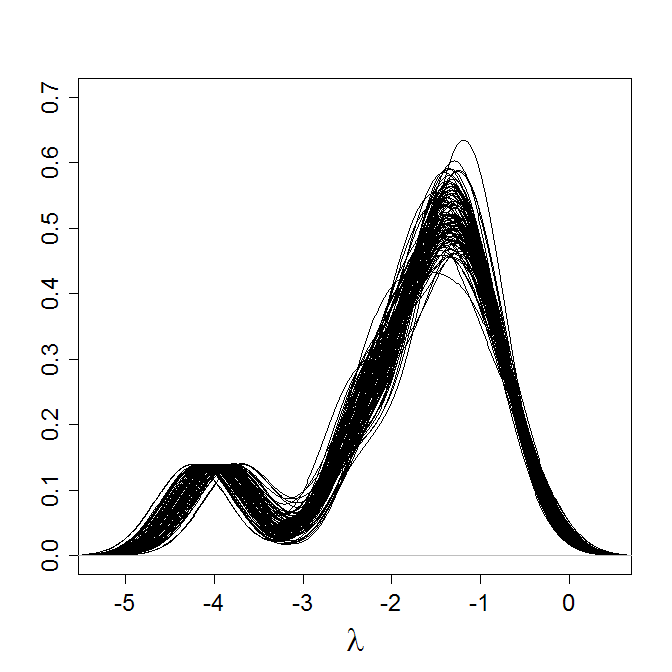}
\end{tabular}
\end{center}
\caption{ Posterior distributions for the parasite data. Left: posteriors for
$\mu$ with and without an outlier and the N-posterior. Middle: posteriors for
$\sigma$. Right: samples of $g_n$ based on draws from the posterior for
$\lambda_1,\ldots,\lambda_n$, demonstrating an outlier at -4.}
\label{parasite.example}
\end{figure}

\subsection{Class Survey Data} \label{sec:survey}

Our second data set are from an in-class survey in an introductory statistics
course held at Cornell University in 2009. Students were asked to specify their
expected income at ages 35, 45, 55 and 65.  Responses from 10 American-born and
10 foreign-born students in the class are used as data in this example; the
data are presented and plotted in  Online Appendix \ref{sec:data}. Our object
is to examine the expected rate of increase in income and any differences in
this rate or in the over-all salary level between American and foreign
students. From the plot of these data in Figure \ref{surveydata.fig} in Online
Appendix \ref{sec:data} some potential outliers in both over-all level of
expected income and in specific deviations from income trend are evident.

This framework leads to a longitudinal data model. We begin with a random
intercept model
\begin{equation} \label{randintercept}
Y_{ijk} = b_{0ij} + b_{1j} t_k + \epsilon_{ijk}
\end{equation}
where $Y_{ijk}$ is log income for the $i$th student in group $j$ (American
($a$) or foreign ($f$)) at age $t_k$. We extend to this the distributional
assumptions
\[
b_{0ij} \sim N(\beta_{0j},\tau_0^2), \ \epsilon_{ijk} \sim N(0,\sigma^2)
\]
leading to a complete data log likelihood given up to a constant by
\begin{equation} \label{randintercept.cdll}
l(Y,\beta,\sigma^2,\tau_0^2) = -\sum_{i=1}^n \sum_{j \in \{a,f\}} \sum_{k=1}^4
\frac{1}{2\sigma^2}\left(Y_{ijk} - b_{0ij} - \beta_{1j} t_k \right)^2 -
\sum_{i=1}^n \sum_{j \in \{a,f\}} \frac{1}{2 \tau_0^2} \left( b_{0ij} -
\beta_{0j} \right)^2
\end{equation}
to which we attach Gaussian priors centered at zero with standard deviations
150 and 0.5 for the $\beta_{0j}$ and $\beta_{1j}$ respectively and Gamma priors
with shape parameter 3 and scale 0.5 and 0.05 for $\tau_0^2$ and $\sigma^2$.
These are chosen to correspond to the approximate orders of magnitude observed
in the maximum likelihood estimates of the $b_{0ij}$, $\beta_{1j}$ and
residuals.

As in Section \ref{sec:plugin} we can robustify this likelihood in two
different ways: either against the distributional assumptions on the $\epsilon_{ijk}$ or on the
$b_{0ij}$. In the latter case we form the density estimate
\[
g_n(b;\betabold) = \frac{1}{-2nc_n}\sum_{i=1}^n \sum_{j \in \{a,f\}} K\left(
\frac{b-b_{0ij}+\beta_{0j}}{c_n} \right)
\]
and replace the second term in \eqref{randintercept.cdll} with
$-2nD(g_n(\cdot;\betabold),\phi_{0,\tau_0^2}(\cdot))$. Here we have used
\[
\betabold =
(\beta_{a0},\beta_{f0},\beta_{a1},\beta_{f1},b_{0a1},b_{0f1},\ldots,b_{0an},b_{0fn})
\]
as an argument to $g_n$ to indicate its dependence on the estimated parameters.
We have chosen to combine the $b_{0ai}$ and the $b_{0fi}$ together in order to
obtain the best estimate of $g_n$, rather than using a sum of disparities, one
for American and one for foreign students.

To robustify the residual distribution, we observe that we cannot replace the
first term with a single disparity based on the density of the combined
$\epsilon_{ijk}$ since the $b_{0ij}$ cannot be identified marginally.
Instead, we estimate a density at each $ij$:
\[
g_{ij,n}^{(c)}(e;\betabold) = \frac{1}{4nc_n} \sum_{k=1}^4 K \left( \frac{e - (Y_{ijk}
- b_{0ij} - \beta_{1j}t_k )}{c_n} \right)
\]
and replace the first term with $-\sum_{i=1}^n \sum_{j \in \{a,f\}} 4
D(g_{ij,n}^{(c)}(\cdot;\beta),\phi_{0,\sigma^2}(\cdot))$. This is the conditional
form of the disparity. Note that this reduces us to four points for each
density estimate; the limit of what could reasonably be employed. Naturally,
both replacements can be made.

Throughout our analysis, we used Hellinger distance as a disparity; we also
centered the $t_k$, resulting in $b_{0ij}$ representing the expected salary
of student $ij$ at age 50.  Bandwidths were fixed within a Metropolis sampling
procedures. These were chosen by estimating the $\hat{b}_{0ij}$ and
$\hat{\beta}_{1j}$ via least squares, and using these to estimate residuals and all
other parameters:
\begin{align*}
\hat{\beta}_{0j} & = \frac{1}{n} \hat{b}_{0i}, &
e_{ijk}  & = Y_{ijk} - \hat{b}_{0ij} - \hat{\beta}_{1j} t_k, \\
\hat{\sigma}^2 & = \frac{1}{8n-1} \sum_{ijk} e^2_{ijk}, &
\hat{\tau}_0^2 & = \frac{1}{2n-1} \sum_{ij} (\hat{b}_{0ij} - \hat{\beta}_{0j})^2.
\end{align*}
The bandwidth selector in \citet{SheatherJones91} was applied to the
$\hat{b}_{0ij}-\hat{b}_{0j}$ to obtain a bandwidth for $g_n(b;\betabold)$. The
bandwidth for $g_{ij,n}^{(c)}(e;\betabold)$ was chosen as the average of the
bandwidths selected for the $e_{ijk}$ for each $i$ and $j$. For each analysis,
a Metropolis algorithm was run for 200,000 steps and every 100th sample was
taken from the second half of the resulting Markov chain.  The results of this
analysis can be seen in Figure \ref{surveyintercept.fig}. Here we have plotted
only the differences $\beta_{f0}-\beta_{a0}$ and $\beta_{f1}-\beta_{a1}$ along
with the variance components. We observe that for posteriors that have not
robustified the random effect distribution, there appears to be a significant
difference in the rate of increase in income ($P(\beta_{f1} <\beta_{a1}) <
0.02$ for both posterior and replacing the observation likelihood with
Hellinger distance), however when the random effect likelihood is replaced with
Hellinger distance, the difference is no longer significant ($P(\beta_{f1}
<\beta_{a1}) > 0.145$ in both cases). We also observe that the estimated
observation variance for the model is significantly reduced for posteriors in
which the observation likelihood is replaced by Hellinger distance, but that
uncertainty in the difference $\beta_{f0}-\beta_{a0}$ is increased.

Investigating these differences, there were two foreign students who's over-all
expected rate of increase is negative and separated from the least-squares
slopes for all the other students. Removing these students increased the
posterior probability of $\beta_{a1} > \beta_{f1}$ to 0.11 and decreased the
estimate of $\sigma$ from 0.4 to 0.3. Removing the evident high outlier  with a
considerable departure from trend at age 45 in Figure \ref{surveydata.fig} in
Online Appendix \ref{sec:data} further reduced the EAP of $\sigma$ to 0.185, in
the same range as those obtained from robustifying the observation
distribution.

Further model exploration is possible. Online Appendix \ref{sec:randomslope} explores the use of a random slope model with additional modeling techniques, where a distinction in {\em average} slope between American and foreign students does not appear significant when the slope distribution is robustified via Hellinger distance.

\begin{figure}
\begin{center}
\begin{tabular}{cc}
\includegraphics[height=6.5cm,angle=0]{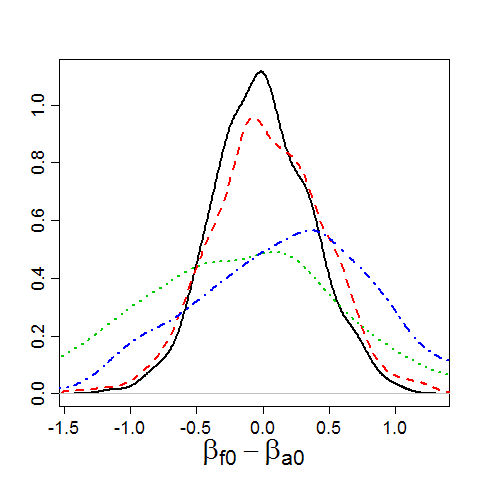}
&
\includegraphics[height=6.5cm,angle=0]{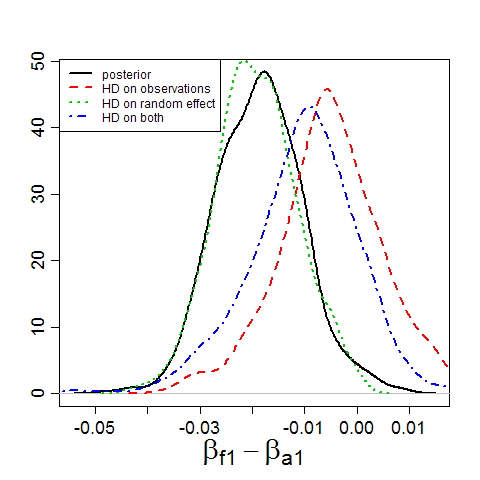}
\\
\includegraphics[height=6.5cm,angle=0]{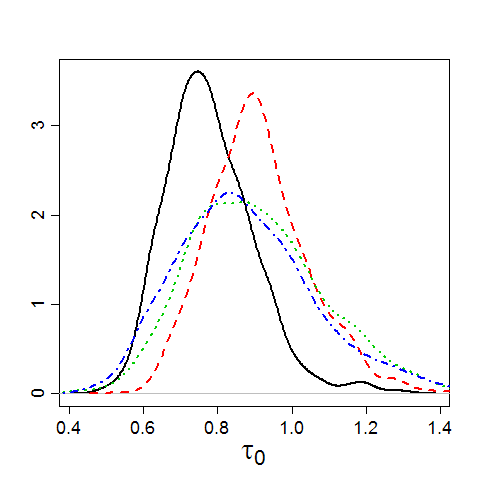}
&
\includegraphics[height=6.5cm,angle=0]{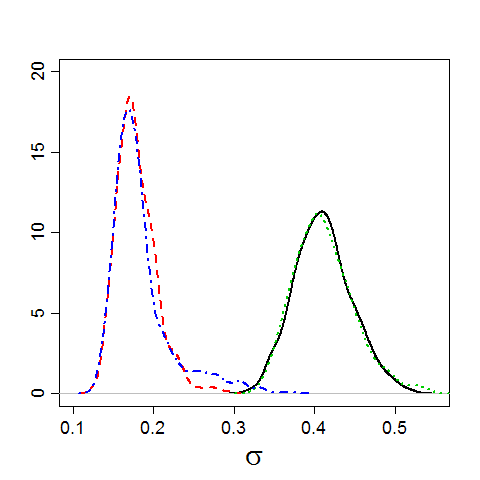}
\end{tabular}
\end{center}
\caption{Analysis of the class survey data using a random intercept model with
Hellinger distance replacing the observation likelihood, the random effect
likelihood or both. Top: differences in intercepts between foreign and American
students (left) and differences in slopes (right). Bottom: random effect
variance (left) and observation variance (right). Models robustifying the
random effect distribution do not show a significant difference in the slope
parameters. Those robustifying the observation distribution estimate a
significantly smaller observation variance. } \label{surveyintercept.fig}
\end{figure}

\section{Conclusions}

This paper combines disparity methods with Bayesian analysis to provide robust
and efficient inference across a broad spectrum of models. In particular, these
methods allow the robustification of any portion of a model for which the
likelihood may be written as a product of distributions for i.i.d. random
variables. This can be done without the need to modify either the assumed
data-generating distribution or the prior. In our experience, Metropolis
algorithms developed for the parametric model can be used directly to evaluate
the D-posterior and generally incur a modest increase in the acceptance rate
and computational cost. Our use of Metropolis algorithms in this context is
{\em deliberately naive} in order to demonstrate the immediate applicability of our
methods in combination with existing computational tools. We expect that a careful
study of the properties of these
methods will yield considerable improvements in both computational and sampling
efficiency.

The methods in this paper can be employed as a tool for model diagnostics; differences in
results by an application of posterior and D-posterior can indicate problematic
components of a hierarchical model. Further, estimated densities can indicate
how the current model may be improved \ghadd{by, for example, the addition of mixture  components}.
However, the D-posterior can also be
used directly to provide robust inference in an automated form.

Our mathematical results are given solely for i.i.d. data; ideas from
\citet{HookerVidyashankar09b} can be used to extend these
to the regression framework. Our proposal of hierarchical models remains under
mathematical investigation, but we expect that similar results can be
established in this case. The methodology can also be applied within a
frequentist context to define an alternative marginal likelihood for random
effects models, although the numerical estimation of such models is likely to
be problematic. Within this context, the choice of bandwidth $c_n$ can become
difficult. We have employed initial least-squares estimates above, but robust
estimators could also be used instead. Empirically, we have found our results to be
relatively insensitive
to the choice of bandwidth.

An opportunity for further development of the proposed methodology lies in
removing the boundedness of
many disparities in common use. These yield EDAP estimates with finite-sample breakdown
points of 1, indicating hyper-insensitivity to outliers. Theoretically, some
form of boundedness in $G$ has been used within proofs of the robustness of minimum
disparity estimators. \ghadd{However, transformations of $D(g_n,f_\ta)$ can yield non-bounded replacements for the log likelihood which retain both robustness and efficiency properties and this suggests an investigation of the relationship between appropriate transformations and the structure of the parameter space $f_\ta$.}

%However these results suggest an investigation of the
%necessity of this assumption and the development of new disparities which
%diverge at a rate slow enough to retain robustness.

The use of a kernel density estimate may also be regarded as inconsistent with
a Bayesian context and it may be desirable to employ non-parametric
Bayesian density estimates as an alternative.  Results for disparity estimation
are dependent on properties of kernel density estimates and this
extension will require significant mathematical development.
%; an initial study
%of the use of Dirichlet-process priors for density estimates in this context
%can be found in \citet{WHV11}.

There is considerable scope to extend these methods to further problems.
Robustification of the innovation distribution in time-series models, for
example, can be readily carried though through disparities and the hierarchical
approach will extend this to either the observation or the innovation process
in state-space models. The extension to continuous-time models such as
stochastic differential equations, however, remains an open and interesting
problem. More challenging questions arise in spatial statistics in which
dependence decays over some domain and where a collection of i.i.d. random
variables may not be available. There are also open questions in the
application of these techniques to non-parametric smoothing, and in functional
data analysis.

\subsection*{Acknowledgements}

Giles Hooker's research was supported by NSF grant
DEB-0813734 and the Cornell University Agricultural Experiment Station federal
formula funds Project No. 150446. Anand Vidyashankar's research was supported
in part by a grant from NSF DMS 000-03-07057 and also by grants from the
NDCHealth Corporation.

\appendix

\section{Proofs of Robustness} \label{robust.proofs}

\subsection{Proof of Theorem \ref{EDAP-MDE.thm}}

Before giving the proof we remark that it follows the lines of asymptotic
expansions for posterior distributions as outlined in, for example,
\citet{GDS06}. While we have provided explicit expressions only for the first
term of the expansion, further terms can be given analytically.

\begin{proof}
Let $\hat{\ta}(h)$ be the MDE for the density $h$.
Using a Taylor expansion of the prior density we have:
\begin{align*}
\pi\left(\hat{\ta}(h)+t/\sqrt{n}\right) & =
\pi\left(\hat{\ta}(h)\right)\left[1+n^{-1/2} t' \frac{\nabla_\ta
\pi(\hat{\ta}(h))}{\pi(\hat{\ta}(h))} + \frac{1}{2} n^{-1}t' \frac{\nabla_\ta^2
\pi(\hat{\ta}(h))}{\pi(\hat{\ta}(h))} t + o_p\left(n^{-1}\right) \right] \\
& = \pi\left(\hat{\ta}(h)\right)\left[1+n^{-1/2} t' b_1 + \frac{1}{2} n^{-1}t'
b_2 t + o_p\left(n^{-1}\right)  \right].
\end{align*}
Also, from the corresponding expansion of the disparity
\begin{align*}
& nD\left(h,f_{\hat{\ta}(h)+t/\sqrt{n}}\right) -
nD\left(h,f_{\hat{\ta}(h)}\right) \\
& \hspace{3cm} = \frac{1}{2}t' I^D(\hat{\ta}(h)) t + \frac{n^{-1/2}}{6}
\sum_{i,j,k} t_i t_j t_k   a_{3,ijk} + \frac{n^{-1}}{24} \sum_{ijkl}t_i t_j t_k
t_l a_{4,ijkl} + o_p\left(n^{-1}\right),
\end{align*}
 yielding
\begin{align*}
& \pi\left(\hat{\ta}(h)+t/\sqrt{n}\right)
e^{-nD\left(h,f_{\hat{\ta}(h)+t/\sqrt{n}}\right) +
nD\left(h,f_{\hat{\ta}(h)}\right)} \\ & \hspace{3cm}  = \pi\left(\hat{\ta}(h)\right)e^{- t'I^D(\hat{\ta}(h))
t/2}\left[1 + \frac{c_1(t)}{n^{1/2}} + \frac{c_2(t)}{n} + o_p(n^{-1}) \right],
\end{align*}
where $c_1(t)  =   \sum_i t_i b_{1,i} + \frac{1}{6} \sum_{ijk} t_i t_j t_k
a_{3,ijk},$ and
\[
c_2(t)  = \sum_{ij} \frac{b_{2,ij}}{2}t_i t_j + \sum_{ijkl}
\left( \frac{a_{3,ijk} b_{1,l}}{6} + \frac{a_{4,ijkl}}{24} \right)t^4 +
\sum_{ijk} \frac{a_{3,ijk}^2}{72} t_i^2 t_j^2 t_k^2.
\]
Here $b_1$ and $b_2$ provide constants in the Taylor of the prior and $a_3$ and $a_4$ provide the
corresponding third and four-order constants in a Taylor expansion of the disparity.

In particular,for the $i$th component of the EDAP vector we have that
\begin{align}
& T_n(h)_i - \hat{\ta}(h)_i = \frac{\int \left(\hat{\ta}(h)+t_i/\sqrt{n}\right)
e^{-nD\left(h,f_{\hat{\ta}(h)+t/\sqrt{n}}\right)+
nD\left(h,f_{\hat{\ta}(h)}\right)} \pi\left(\hat{\ta}(h)+t/\sqrt{n}\right)
dt}{\int e^{-nD\left(h,f_{\hat{\ta}(h)+t/\sqrt{n}}\right)+
nD\left(h,f_{\hat{\ta}(h)}\right)} \pi\left(\hat{\ta}(h)+t/\sqrt{n}\right) dt}  \nonumber \\
& \hspace{0.5cm} = \frac{ \left(\frac{\left|I^D(\hat{\ta}(h))\right|}{2 \pi}
\right)^{p/2} \left[ \hat{\ta}(h)_i +
n^{-1}\left[I^D(h)^{-1}\right]_{ii}\left(\sum_j
\frac{a_{3,jji}\left[I^D(h)^{-1}\right]_{jj}}{2} + \frac{\nabla_\ta
\pi(\hat{\ta}(h))_i}{\pi(\hat{\ta}(h))}\right) \right] +
o_p\left(n^{-1}\right)}{ \left(\frac{\left|I^D(\hat{\ta}(h))\right|}{2 \pi}
\right)^{p/2}  + o_p\left(n^{-1}\right)} \nonumber \\
& \hspace{0.5cm} = \hat{\ta}(h)_i +
n^{-1}\left[I^D(h)^{-1}\right]_{ii}\left(\sum_j
\frac{a_{3,jji}\left[I^D(h)^{-1}\right]_{jj}}{2} + \frac{\nabla_\ta
\pi(\hat{\ta}(h))_i}{\pi(\hat{\ta}(h))}\right) + o_p\left(n^{-1}\right) \label{expansion.3rdterm}
\end{align}
Here the boundedness of the third derivatives of $\pi$ ensures the integrability of the $o_p(n)$ terms.
\end{proof}

\subsection{Lemmas \ref{orthogonality.lem} and \ref{identifiability.lem}}

The following lemmas are needed in the proof of Theorem \ref{ass.breakdown.thm} below.

\begin{lem} \label{orthogonality.lem}
Under the conditions of Theorem \ref{ass.breakdown.thm}, for any $\delta$ there are $\ta_1^*>0$, $\ta_2^* > 0$ and $z^*>0$ such that
\begin{equation} \label{dzass1}
\sup_{|\ta| > \ta_1^*} \sup_{z > z^*} |D(h_{\alpha,z},\ta) - D(\alpha t_z,ta) - (1-\alpha)G'(\infty)| < \delta
\end{equation}
and
\begin{equation} \label{dzass2}
\sup_{|\ta| < \ta_1^*} \sup_{z > z^*} |D(h_{\alpha,z},\ta) - D((1-\alpha)g,ta) - \alpha G'(\infty)| < \delta.
\end{equation}
The same results hold for $D$ given by Hellinger distance taking $G'(\infty) = 0$.
\end{lem}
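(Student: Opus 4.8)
The plan is to exploit the near-disjointness of the supports of $g$ and $t_z$ that \eqref{tgortho} forces for large $z$, and to reduce the disparity integral over the region carrying $g$ (respectively $t_z$) to an explicit constant. First I would fix $\delta > 0$ and choose a compact set $S_g$ on which $g \ge \epsilon$ and $\int_{S_g} g > 1 - \eta$; by \eqref{tgortho} there is a $z^*$ so that for $z > z^*$ both the mass $\int_{S_g} t_z$ and, on the complementary ``outlier'' region $S_{t_z}$ carrying the bulk of $t_z$, the mass $\int_{S_{t_z}} g$, fall below any prescribed tolerance. This lets me split $\mathcal{R} = S_g \cup S_{t_z} \cup R$ into three pieces on which $h_{\alpha,z}$ equals $(1-\alpha) g$, $\alpha t_z$, and $0$ respectively, up to a controllably small perturbation.

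Next I would write both $D(h_{\alpha,z}, f_\ta)$ and $D(\alpha t_z, f_\ta)$ using this partition and compare them integrand-by-integrand. On $S_{t_z}$ and on $R$ the two integrands differ only through the presence of $(1-\alpha) g / f_\ta$ inside $G$; since $\int_{S_{t_z}} g$ and $\int_{R} g$ are small and $G'$ is bounded, the mean value theorem bounds these differences by $\|G'\|_\infty (1-\alpha) \int_{S_{t_z} \cup R} g$, which is below $\delta/3$. The entire discrepancy is therefore carried by $S_g$, where after a further $\|G'\|_\infty$-controlled error (from replacing $h_{\alpha,z}$ by $(1-\alpha) g$ and $\alpha t_z$ by $0$, each bounded by $\|G'\|_\infty \alpha \int_{S_g} t_z$) it reduces to $\int_{S_g} [ G( (1-\alpha) g / f_\ta - 1 ) - G(-1) ] f_\ta \, dx$.

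The heart of the argument is to show that this $S_g$-integral converges, uniformly over $\|\ta\| > \ta_1^*$, to $(1-\alpha) G'(\infty)$. By \eqref{fgortho}, for $\ta_1^*$ large the mass $\int_{S_g} f_\ta$ is uniformly small for $\|\ta\| > \ta_1^*$, so on $S_g$ the ratio $(1-\alpha) g / f_\ta$ is uniformly large. For bounded $G$ this forces the integral to $0$, matching $(1-\alpha) G'(\infty) = 0$ since a bounded, eventually monotone $G$ has $G'(\infty) = 0$; the crude bound $2 \|G\|_\infty \int_{S_g} f_\ta$ suffices. For the Hellinger case, where $G$ grows linearly so the $\|G\|_\infty$ bound is unavailable, I would instead compute the integral directly through the affinity $\int_{S_g} \sqrt{g f_\ta}$, which tends to $0$ as $\ta$ escapes to infinity; this reproduces a limiting constant of the stated form, with $G'(\infty)$ serving as the unifying symbol that vanishes precisely when $G$ is bounded. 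The second inequality \eqref{dzass2} is the mirror image: now $\|\ta\| \le \ta_1^*$ is bounded, the roles of $g$ and $t_z$ swap, the discrepancy $D(h_{\alpha,z}, f_\ta) - D((1-\alpha) g, f_\ta)$ is carried by $S_{t_z}$, and the limit is $\alpha G'(\infty)$.

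The main obstacle throughout is uniformity: the estimates must hold with the quantifier order $\sup_\ta \sup_z$. For \eqref{dzass1} the only $\ta$-free orthogonality I may use is \eqref{tgortho}, supplemented by the genuinely $\ta$-uniform \eqref{fgortho}; the merely pointwise \eqref{tfortho} cannot be invoked here. For \eqref{dzass2} the new difficulty is that I need $\int_{S_{t_z}} f_\ta \to 0$ uniformly over the compact ball $\|\ta\| \le \ta_1^*$ as $z \to \infty$, which I would obtain not from the pointwise \eqref{tfortho} but from the uniform tightness of the continuous family $\{ f_\ta : \|\ta\| \le \ta_1^* \}$, pushing the tail mass of every such $f_\ta$ beyond $z^*$ below the tolerance simultaneously. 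Securing these uniform bounds, together with the separate affinity computation that replaces the boundedness of $G$ in the Hellinger case, is where the real work lies.
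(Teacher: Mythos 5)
Your proposal is correct and follows essentially the same route as the paper's proof: partition the line into the region carrying $g$ and the region carrying $t_z$, use the boundedness of $G'$ and the mean value theorem to discard the cross contributions, use conditions (\ref{tgortho})--(\ref{fgortho}) to make the relevant masses uniformly small, identify the residual contribution with $(1-\alpha)G'(\infty)$ (resp.\ $\alpha G'(\infty)$) via the behaviour of $G(g/f-1)f$ as $f\to 0$, and treat Hellinger distance separately through the affinity and Cauchy--Schwarz. If anything you are more careful than the paper on the one delicate point, namely that (\ref{dzass2}) needs $\int f_\ta$ over the outlier region to be small \emph{uniformly} over the bounded ball $\|\ta\|\le\ta_1^*$, which you correctly obtain from uniform tightness of the compact continuous family rather than from the merely pointwise condition (\ref{tfortho}).
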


\begin{proof}
The arguments employed here largely follow those of \citet[][Theorem 4.1]{ParkBasu04}. For notational convenience throughout the following we will use $C(g,f) = G(g/f -1)f$ and in particular we note that for $f \rightarrow 0$, $C(g,f) \rightarrow G'(\infty)g$. We will also write $G^* = \sup_t \max(|G(t),|G'(t)|)$.

For \eqref{dzass1}, define
\[
A_{\ta^*_1,z^*} = \left\{ x: g(x) < \max\left( \sup_{z > z^*} t_z(x), \sup_{|\ta| > \ta^*_1} f_\ta(x) \right) \right\}
\]
by assumptions \eqref{tgortho} and \eqref{fgortho} for any $\epsilon > 0$ we can find $z^*$ and $\ta^*_1$ large enough that  $\int_{A_{\ta^*_1,z^*}} g(x) dx < \epsilon$ and $\sup_{|\ta| > \ta^*_1} \int_{A_{\ta^*_1,z^*}^c} f_{\ta}(x) dx < \epsilon$ so that writing
\[
\int_{A_{\ta^*_1,z^*}} C(h_{\alpha,z}(x),f_\ta(x)) dx = \int_{A_{\ta^*_1,z^*}} C(\alpha t_z(x),f_\ta(x)) dx +  \int_{A_{\ta^*_1,z^*}}  G'\left( \frac{h_{\alpha,z}^*(x)}{f_\ta(x)}-1\right)g(x)dx
\]
for $h^*{\alpha,z}(x)$ between $\alpha t_z(x)$ and $h_{\alpha,z}(x)$ yeilds
\[
\sup_{z>z^*} \sup_{|\ta|>\ta^*_1} \left| \int_{A_{\ta^*_1,z^*}} C(h_{\alpha,z}(x),f_\ta(x)) dx - \int C(\alpha t_z(x),f_\ta(x)) dx \right| \leq  2 G^* \epsilon.
\]
Similarly, on $A_{z^*,\theta^*}^c$ taking a Taylor expansion around $f = 0$.
\[
\sup_{z>z^*} \sup_{|\ta|>\ta^*_1} \left| \int_{A_{\ta^*_1,z^*}^c} C(h_{\alpha,z}(x),f_\ta(x)) dx - (1-\alpha) G'(\infty) \right| \leq  2 G^* \epsilon.
\]
Choosing $\epsilon < \delta/5G^*$ then yields the result.

For \eqref{dzass2} we proceed in a similar manner and define
\[
B_{\ta_2^*,z^*} = \left\{ x:  \sup_{z > z^*} t_z(x) < \max\left( g(x), \sup_{|\ta| \leq \ta_2^*} f_\ta(x) \right) \right\}
\]
where we observe that by assumptions \eqref{tgortho} and \eqref{tfortho}, for any $\epsilon>0$ and keeping $\ta^*_2$ bounded we can find $z^*$ large enough that
$\sup_{z > z^*} \int_{B_{\ta_2^*,z^*}} t_z(x)dx < \epsilon$ and $\sup_{|\ta| < \ta_2^*} \int_{B_{\ta_2^*,z^*}^c} f_\ta(x) dx < \epsilon$. Reversing the roles of $t_z$ and $g$ in the argument above then yields the result.

In the case of Hellinger distance, from the Cauchy-Schwartz inequality we observe that for $z^*$ sufficently large
\[
\sup_{\|\ta\| \in\Theta} \sup_{z > z^*} \left| \int  \sqrt{h_{\alpha,z}(x) f_\ta(x)} dx  - \int \sqrt{(1-\alpha)g(x) f_\ta(x)} dx - \int \sqrt{\alpha t_z(x) f_\ta(x)} dx\right| < \delta
\]
and the result follows from writing Hellinger distance as $HD(f,g) = 2 - 2 \int \sqrt{f(x) g(x)} dx$.

\end{proof}

\begin{lem} \label{identifiability.lem}
For any $\alpha \leq 1/2$, if there exists $\delta > 0$ such that
\[
\inf_z \inf_{\ta \in \Theta} D(\alpha t_z,\ta) >  \inf_{\ta \in \Theta} D((1-\alpha)g,\ta) + \delta,
\]
we can find $z^*$,  such that for every $\eta < \delta$ there exists $\theta^* > 0$ with
\begin{equation} \label{uniform.identification}
\inf_{z \geq z^*} \inf_{ \|\ta\| \geq \theta^* } \left( D(h_{\alpha,z},\ta) - D(h_{\alpha,z},\hat{\ta}(h_{\alpha,z})) \right) > \eta.
\end{equation}
\end{lem}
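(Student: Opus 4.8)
The plan is to pit the two regime-wise estimates of Lemma~\ref{orthogonality.lem} against the separation hypothesis. Fix $\eta < \delta$ and set a tolerance $\epsilon = (\delta - \eta)/3 > 0$. Applying Lemma~\ref{orthogonality.lem} with this $\epsilon$ in place of its $\delta$ produces thresholds $\ta_1^*$ and $z^*$ for which \eqref{dzass1} and \eqref{dzass2} both hold with error at most $\epsilon$. Using \eqref{fgortho}, which forces $D((1-\alpha)g,\ta) \to G(-1)$ as $\|\ta\| \to \infty$, I would further enlarge $\ta_1^*$ (adjusting $z^*$ accordingly) so that some $\ta_0$ with $\|\ta_0\| < \ta_1^*$ is an $\epsilon$-near-minimizer of $D((1-\alpha)g,\cdot)$; such a $\ta_0$ can be found on a bounded set precisely because the disparity to the uncontaminated part saturates at its maximal value for large parameters. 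Finally I set $\theta^*$ to be any value exceeding $\ta_1^*$, so that every $\|\ta\| \geq \theta^*$ falls in the outer regime of \eqref{dzass1}.

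For the lower bound, \eqref{dzass1} gives, for all $z \geq z^*$ and all $\|\ta\| \geq \theta^*$,
\[
D(h_{\alpha,z},\ta) \;\geq\; \inf_{\ta'} D(\alpha t_z, \ta') + (1-\alpha)G'(\infty) - \epsilon,
\]
and the hypothesis $\inf_z \inf_{\ta'} D(\alpha t_z, \ta') > \inf_{\ta'} D((1-\alpha)g, \ta') + \delta$, being uniform in $z$, upgrades this to $D(h_{\alpha,z},\ta) > \inf_{\ta'} D((1-\alpha)g, \ta') + \delta + (1-\alpha)G'(\infty) - \epsilon$. For the matching upper bound on the minimum, I evaluate at the bounded near-minimizer $\ta_0$ and apply \eqref{dzass2}:
\[
D(h_{\alpha,z},\hat{\ta}(h_{\alpha,z})) \;\leq\; D(h_{\alpha,z},\ta_0) \;\leq\; \inf_{\ta'} D((1-\alpha)g, \ta') + \alpha G'(\infty) + 2\epsilon.
\]
Subtracting, the common infimum $\inf_{\ta'} D((1-\alpha)g, \ta')$ cancels and the gap is at least $\delta + (1-2\alpha)G'(\infty) - 3\epsilon$. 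Since $\alpha \leq 1/2$ and $G'(\infty) \geq 0$ --- indeed $G'(\infty) = 0$ for the bounded $G$ and for Hellinger distance treated here --- the middle term is nonnegative, so the gap exceeds $\delta - 3\epsilon = \eta$, uniformly over $z \geq z^*$ and $\|\ta\| \geq \theta^*$; this is exactly \eqref{uniform.identification}.

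The main obstacle is the bookkeeping of the two additive constants produced by Lemma~\ref{orthogonality.lem}: the outer-region estimate carries $(1-\alpha)G'(\infty)$ while the inner-region estimate carries $\alpha G'(\infty)$, and it is their difference $(1-2\alpha)G'(\infty)$, kept nonnegative precisely by $\alpha \leq 1/2$, that must not be allowed to erode the separation $\delta$. A secondary technical point is locating the near-minimizer $\ta_0$ inside $\{\|\ta\| < \ta_1^*\}$ so that \eqref{dzass2} applies for the upper bound; this is where \eqref{fgortho} is essential, guaranteeing that $D((1-\alpha)g,\ta)$ grows toward its maximal value $G(-1)$ for large $\|\ta\|$ and hence that the relevant infimum is essentially attained on a bounded set. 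For Hellinger distance one takes $G'(\infty) = 0$ and substitutes the Cauchy--Schwarz splitting of Lemma~\ref{orthogonality.lem} for \eqref{dzass1}--\eqref{dzass2}; the same chain of inequalities then delivers the bound.
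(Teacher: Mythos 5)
Your proof is correct and follows essentially the same route as the paper's: both apply Lemma \ref{orthogonality.lem} to bound $D(h_{\alpha,z},\ta)$ from below by $D(\alpha t_z,\ta)+(1-\alpha)G'(\infty)$ on the outer region and from above by $D((1-\alpha)g,\cdot)+\alpha G'(\infty)$ at a (near-)minimizer in the inner region, and then invoke the separation hypothesis. If anything your version is slightly tidier: you use an $\epsilon$-near-minimizer rather than assuming the argmin of $D((1-\alpha)g,\cdot)$ exists inside the bounded region, you track the residual term $(1-2\alpha)G'(\infty)\geq 0$ explicitly, and your choice $\epsilon=(\delta-\eta)/3$ actually yields a gap exceeding $\eta$, whereas the paper's stated choice $\eta^*=\eta/2$ only gives $\delta-\eta$.
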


\begin{proof}
From Lemma \ref{orthogonality.lem}, for any $\eta^*$, we can find $\ta^*$ and $z^*$ so that for all $z > z^*$ and $|\ta| > \ta^*$
\[
D(h_{\alpha,z},\ta) > D(\alpha t_z,\ta) + (1-\alpha)G'(\infty) - \eta^*
\]
and defining $\ta_g = \argmin_{\ta \in \Theta} D((1-\alpha)g,\ta)$
\begin{align*}
D(h_{\alpha,z},\ta_g) & < D((1-\alpha)g,\ta_g) + \alpha G'(\infty) + \eta^*
\end{align*}
Thus
\[
\inf_{|\ta| \geq \ta^*} D(h_{\alpha,z},\ta)  \geq D( h_{\alpha,z},\ta_g) + \delta - 2\eta^*
\]
and taking $\eta^* = \eta/2$ yields the result.
\end{proof}

\subsection{Proof of Theorem \ref{ass.breakdown.thm}} \label{ass.breakdown.appendix}

\begin{proof}
We first observe that if the breakdown point of $\hat{\ta}(\cdot)$ is greater than $\alpha$, we have
\[
\sup_z | \hat{\ta}(h_{\alpha,z})| < \infty.
\]
We observe that for multivariate $\ta$ it is sufficient to prove the result for each co-ordinate and without loss of generality we will take $\hat{\ta}(h_{\alpha,z}) > \hat{\ta}(g)$.

Taking $\ta^*$ and $\eta$ and $z^*$ as in Lemma \ref{identifiability.lem}, we now observe that we can find $\epsilon^*$ so that for $|\theta - \hat{\ta}(h_{\alpha,z})| < \epsilon^*$ for all $z>z^*$ such that
\[
D(h_{\alpha,z},\ta) < D(h_{\alpha,z},\hat{\ta}(h_{\alpha,z})) + \eta/2
\]
then
\begin{align*}
T_n(h_{\alpha,z}) & = \frac{\int \ta e^{-n D(h_{\alpha,z},\ta)} \pi(\ta) d \ta}{\int e^{-n D(h_{\alpha,z},\ta)} \pi(\ta) d \ta} \\
& \leq (\hat{\ta}(h_{\alpha,z})+\epsilon)P_D(\ta < \hat{\ta}(h_{\alpha,z})+\epsilon) + \frac{e^{-nD(h_{\alpha,z},\hat{\ta}(h_{\alpha,z}))-n\eta}\int_{\hat{\ta}(h_{\alpha,z})+\epsilon}^{\infty} \ta \pi(\ta) d \ta}{e^{-nD(h_{\alpha,z},\hat{\ta}(h_{\alpha,z}))-n\eta/2} \int^{\hat{\ta}(h_{\alpha,z})+\epsilon^*}_{\hat{\ta}(h_{\alpha,z})} \pi(\ta) d \ta} \\
& \leq \hat{\ta}(h_{\alpha,z}) + e^{-n \eta/2} K(\hat{\ta}(h_{\alpha,z})),
\end{align*}
where $K(x) = \left( \int^{x+\epsilon^*}_{x} \pi(\ta) d \ta \right)^{-1} \int_{x+\epsilon}^{\infty} \ta \pi(\ta) d \ta $.
Since $\hat{\ta}(h_{\alpha,z})$ is bounded if the breakdown point is greater than $\alpha$, we observe that we can take
\[
K^*  = \sup \{ K(\ta): \ta < \sup_{z > z^*} \hat{\ta}(h_{\alpha,z}) \} < \infty
\]
and observe that
\[
\mbox{IF}_{\alpha,n}(z) < \mbox{IF}_{\alpha,\infty}(z) + K^*.
\]
Conversely, if the breakdown point is less than $\alpha$, we can find $z_n$ such that
\[
\hat{\ta}(h_{\alpha,z_n}) \rightarrow \infty
\]
and from Theorem \ref{EDAP-MDE.thm} for any $\delta$ we can choose a subsequence $k_n$ such that
\[ | T_{k_n}(h_{\alpha,z_{n}}) - \hat{\ta}(h_{\alpha,z_{n}}) | < \delta, \ \forall n. \]

\end{proof}

\subsection{Proof of Theorem \ref{zinfinite.thm}}

\begin{proof}

We observe that for any finite $M$ we have
\[
\sup_{\|\ta\| \leq M} \int f_\ta(x) t_z(x) dx \rightarrow 0
\]
by the limiting orthogonality of $f_\ta$ and $t_z$, hence we can find $M_k \rightarrow \infty$, $z_k \rightarrow \infty$ so that
\[
\sup_{\|\ta\| \leq M_k} \int f_\ta(x) t_{z_k} (x) dx \rightarrow 0
\]
and $\int g(x) t_{z_k}(x) dx \rightarrow 0$ and $\int_{\|\ta\| > M_k}  \pi(\ta) d \ta \rightarrow 0$. Hence
\[
T_n(h_{\alpha,z_k}) = \frac{\int_{\|\ta\| \leq M_k} \ta e^{-nD(h_{\alpha,z_k},\ta)} \pi(\ta) d\ta}{\int_{\|\ta\| \leq M_k} e^{-nD(h_{\alpha,z_k},\ta)} \pi(\ta) d\ta} + o_k(1)
\]
where $o_k(1)$ indicates a limit with respect to $k \rightarrow \infty$ and upon observing that $\sup_{h,\theta} e^{-nD(h,\theta)}  = 1$.

Now applying Lemma \ref{orthogonality.lem}, we have
\[
D(h_{\alpha,z_k},\ta) =  D((1-\alpha)g,\ta) + \alpha G'(\infty) + o_k(1)
\]
Combining these yields
\begin{align*}
T_n(h_{\alpha,z}) & = \frac{\int_{\|\ta\| \leq M_k} \ta e^{-nD((1-\alpha)g,\ta) -n\alpha G'(\infty)-no_k(1)} \pi(\ta) d\ta}{\int_{\|\ta\| \leq M_k} e^{-nD((1-\alpha)g,\ta) -n\alpha G'(\infty)-no_k(1)} \pi(\ta) d\ta} + o_k(1) \\
& = T_n( (1-\alpha)g ) + o_k(1)
\end{align*}
and the theorem follows.
\end{proof}

\subsection{Proof of Corollary \ref{breakdown.theorem}}

\begin{proof}
Under the assumptions, $\sup_{\ta,g} D(g,f_\ta) = R <\infty$ and $\inf_{\ta,g} D(g,f_\ta) = r > -\infty$.
Now let $h_{z,\alpha} = (1-\alpha)g + \alpha t_z$, then for all $\ta \in \Theta$,
$
e^{-n R} \leq e^{-n D(h_{z,\alpha},f_\ta)} < e^{-n r}, \hspace{0.2cm} \forall z,
\hspace{0.2cm} \forall \alpha \in [0 \ 1]
$
and therefore
\[
e^{n(r-R)} E_{\pi} \ta  = \frac{\int \ta e^{-n R} \pi(\ta)
d\ta}{ \int e^{-n r} \pi(\ta) d\ta } \leq E_{P_D(\ta|h_{z,\alpha})} \ta
\leq \frac{\int \ta e^{-n r} \pi(\ta) d\ta}{ \int e^{-n R} \pi(\ta) d\ta } =
e^{n(R-r)} E_{\pi} \ta.
\]
Where $E_{\pi}(\cdot)$ represents expectation with respect to the prior.
\end{proof}

\subsection{Proof of Theorem \ref{influence.theorem}}

\begin{proof}
It is sufficient to show that $\left| E_{P_D(\ta|g)} C_{z}(\ta,g) \right|
<\infty$ and $\left|E_{P_D(\ta|g)}\left[  \ta  C_{z}(\ta,g) \right]\right|  <
\infty. $ We will prove the first of these and the second will follow analogously.
\begin{align}
\left| E_{P_D(\ta|g)} C_{nz}(\ta,g) \right| &
\leq e^{n(R-r)} \left|\int C_{z}(\ta,g) \pi(\ta) d\ta\right| \nonumber \\
& \leq e^{n(R-r)}  \int \left| (g(x)-t_z(x)) \int
G'\left(\frac{g(x)}{f_\ta(x)}-1\right) \pi(\ta) d\ta \right|  dx \nonumber \\
& \leq e^{n(R-r)}  e_0 \int |g(x)-t_z(x)| dx \label{fromassumption}
\end{align}
where $\sup_{\ta,g} D(g,f_\ta) = R <\infty$ and $\inf_{\ta,g} D(g,f_\ta) = r >
-\infty$ and \eqref{fromassumption} follows from the assumption
\eqref{influence1} and the bound $\int |g(x)-t_z(x)| dx \leq 2$.
\end{proof}

Since $t_z(x)$ can be made to concentrate on regions where $r(x)$ is large, we
conjecture that the conditions in Theorem \ref{influence.theorem} are
necessary. In fact, this requirement means that the H-posterior influence
function will not be bounded for a large collection of parametric families.

\section{Proofs of Efficiency} \label{efficiency.proofs}

\subsection{Proof of Theorem \ref{l1convergence}}

We begin with the following Lemma:

\begin{lem} \label{wlemma}
Let
\[
w_n(t) = \pi(\hat{\ta}_n +
t/\sqrt{n})e^{-nD\left(g_n,f_{\hat{\ta}_n+t\sqrt{n}}\right) + n
D\left(g_n,f_{\hat{\ta}_n}\right)} - \pi(\ta_g) e^{-\frac{1}{2}t' I^D(\ta_g)t}
\]
then under A\ref{G.ass}-A\ref{MDE.efficiency}
\[
\int |w_n(t)| dt \convas 0 \mbox{ and }  \int \|t\|_{2} |w_n(t)| dt \convas 0.
\]
\end{lem}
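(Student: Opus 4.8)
The plan is to follow the standard Laplace-type argument for posterior asymptotic normality (as in \citet{GDS06}), adapted to the disparity setting: I would first establish pointwise almost-sure convergence $w_n(t) \convas 0$, and then upgrade this to $L_1$ and weighted-$L_1$ convergence by decomposing $\real^p$ into three regions and controlling each uniformly in $n$. First I would Taylor-expand the exponent about the minimum disparity estimator. Writing $E_n(t) = -nD(g_n,f_{\hat{\ta}_n + t/\sqrt{n}}) + nD(g_n,f_{\hat{\ta}_n})$, the fact that $\hat{\ta}_n$ solves the estimating equation $\nabla_\ta D(g_n,f_\ta) = 0$ annihilates the linear term, leaving $E_n(t) = -\frac{1}{2} t' \hat{I}^D_n(\hat{\ta}_n) t + R_n(t)$, where $R_n(t)$ is a third-order remainder controlled using A\ref{G.ass} together with the smoothness of $f_\ta$. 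By A\ref{MDE.efficiency} we have $\hat{\ta}_n \convas \ta_g$, and then A\ref{ass.smooth} gives $\hat{I}^D_n(\hat{\ta}_n) \convas I^D(\ta_g)$ (continuity of the Hessian in $\ta$ and in $g_n \to g$ in $L_1$); continuity and positivity of $\pi$ at $\ta_g$ give $\pi(\hat{\ta}_n + t/\sqrt{n}) \convas \pi(\ta_g)$. Combining these yields $w_n(t) \convas 0$ for each fixed $t$.

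To pass to $L_1$, I would fix $\eta > 0$ and split $\real^p = \{\|t\| \le M\} \cup \{M < \|t\| \le \delta\sqrt{n}\} \cup \{\|t\| > \delta\sqrt{n}\}$. On the compact region $\{\|t\|\le M\}$, since $\hat{\ta}_n$ minimizes $D(g_n,f_\ta)$ we have $E_n(t) \le 0$, so $|w_n(t)|$ is bounded by an integrable function and $\int_{\|t\|\le M}|w_n| \convas 0$ by dominated convergence. On the tail region $\{\|t\| > \delta\sqrt{n}\}$, equivalently $\{\|\ta - \hat{\ta}_n\| > \delta\}$, I would use the identifiability assumption A\ref{identifiability} combined with the $L_1$-continuity A\ref{l1continuity} and $g_n \to g$ in $L_1$ to obtain, almost surely for large $n$, a separation $D(g_n,f_\ta) - D(g_n,f_{\hat{\ta}_n}) > \epsilon$ uniformly over this region; after the change of variables $\ta = \hat{\ta}_n + t/\sqrt{n}$ (Jacobian $n^{p/2}$) the first term of $w_n$ is bounded there by $n^{p/2} e^{-n\epsilon}\int \pi(\ta)\,d\ta \to 0$.

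The intermediate region is the main obstacle, requiring a quadratic upper bound that is uniform in $n$. Choosing $\delta$ small enough and using positive definiteness of $\hat{I}^D_n$ from A\ref{ass.smooth} together with the uniform remainder bound, I expect to establish $E_n(t) \le -c\|t\|^2$ for some $c > 0$ on $\{M < \|t\| \le \delta\sqrt{n}\}$, so that the first term of $w_n$ is dominated there by $C e^{-c\|t\|^2}$; the Gaussian term obeys the same bound. Then $\int_{\|t\| > M}|w_n| \le \int_{\|t\|>M} C e^{-c\|t\|^2}\,dt + (\text{tail region})$, and choosing $M$ large makes the first integral arbitrarily small uniformly in $n$. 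Combining the three regions gives $\limsup_n \int|w_n| \le \eta$ almost surely, and since $\eta$ is arbitrary, $\int|w_n| \convas 0$.

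For the weighted statement the identical decomposition applies; the only new ingredient is that on the tail region the weight contributes a factor $\|t\|_2 = \sqrt{n}\|\ta - \hat{\ta}_n\| \le \sqrt{n}(\|\ta\| + \|\hat{\ta}_n\|)$, so the exponential decay $e^{-n\epsilon}$ must now dominate the polynomial factor $n^{(p+1)/2}$ against the finite first moment $\int \|\ta\|_2 \pi(\ta)\,d\ta < \infty$, which is precisely why that hypothesis is imposed. On the compact and intermediate regions the extra $\|t\|_2$ is absorbed harmlessly, since $\|t\|_2 C e^{-c\|t\|^2}$ remains integrable and its tail vanishes as $M \to \infty$, while the Gaussian limit has finite first moment. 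I therefore expect the same three-region bookkeeping to yield $\int \|t\|_2 |w_n(t)| \, dt \convas 0$.
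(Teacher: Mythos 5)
Your proposal is correct and follows essentially the same route as the paper: a Laplace-type expansion of the exponent about the minimum disparity estimator (linear term vanishing at $\hat{\ta}_n$), the identifiability condition A\ref{identifiability} together with $L_1$-continuity to kill the region $\|t\|_2 > \delta\sqrt{n}$, and a uniform Gaussian dominating function from the local positive-definiteness in A\ref{ass.smooth} on the complementary region, with the finite prior first moment absorbing the extra $\|t\|_2$ weight. The only differences are bookkeeping --- the paper uses two regions with dominated convergence (and an exact Lagrange-form second-order Taylor at an intermediate point $\ta_n'$) where you use three regions with an $\epsilon$--$M$ argument and an explicit third-order remainder --- and you are in fact slightly more careful than the paper in retaining the Jacobian factor $n^{p/2}$ in the tail bound.
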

\begin{proof}
We divide the integral into $A_1 = \{ \|t\|_{2} > \delta \sqrt{n}\}$ and $A_2 =
\{ \|t\|_{2} \leq \delta \sqrt{n}\}$:
\begin{equation} \label{wintegral}
\int |w_n(t)|dt = \int_{A_1} |w_n(t)| dt + \int_{A_2} |w_n(t)|dt
\end{equation}
and show that each vanishes in turn. First, since
$
\sup_{\ta \in \Ta} \left| D\left(g_n,f_{\ta}\right) - D\left(g,f_{\ta}\right) \right| \convas 0,
$
for some $\epsilon>0$ with
probability 1 it follows that by Assumption A\ref{identifiability},
\[
\exists N: \forall n \geq N,  \ \sup_{\|t\|_{2} > \delta}
D(g_n,f_{\hat{\ta}_n+t\sqrt{n}}) - D\left(g_n,f_{\hat{\ta}_n}\right) >
-\epsilon.
\]
This now allows us to demonstrate the convergence of the first term in
\eqref{wintegral}:
\begin{align}
\int_{A_1} |w_n(t)| dt & \leq  \int_{A_1} \pi(\hat{\ta}_n +
t/\sqrt{n})e^{-nD\left(g_n,f_{\hat{\ta}_n+t\sqrt{n}}\right) + n
D\left(g_n,f_{\hat{\ta}_n}\right)}dt \nonumber \\
& \hspace{1cm} + \int_{A_1} \pi(\ta_g)
e^{-\frac{1}{2}t' I^D(\ta_g) t}  dt \nonumber \\
& \leq  e^{-n \epsilon} + \pi(\ta_g)\left( \frac{|I^D(\ta_g)|}{2\pi}
\right)^{p/2} P(\|Z\|_2 > \sqrt{n} \delta) \label{wnas0}
\end{align}
where $Z$ is a $N(0,I^D(\ta_g))$ random variable and \eqref{wnas0} converges to
zero almost surely.

We now deal with the second term in \eqref{wintegral}. Notice that
\[
nD\left(g_n,f_{\hat{\ta}_n + t/\sqrt{n}}\right) -
nD\left(g_n,f_{\hat{\ta}_n}\right) = \frac{1}{2} t' I^D_n(\ta'_n)
\]
for $\ta'_n = \hat{\ta}_n + \gamma t/\sqrt{n}$ with $0\leq \gamma \leq 1$
and therefore
\[
w_n(t)  = \pi(\hat{\ta}_n + t/\sqrt{n}) e^{-\frac{1}{2} t' I^D_n(\ta'_n)} -
\pi(\ta_g) e^{-\frac{1}{2}t' I^D(\ta_g)t} \rightarrow 0
\]
for every $t$.

By Assumption
A\ref{ass.smooth} we can choose $\delta$ so that $I^D(\ta) \succ 2M$ if
$\|\ta-\ta_g\|_2 \leq 2\delta$ for some positive definite matrix $M$ where $A \succ B$ indicates $t'At
> t'Bt$ for all $t$.
Since $\|\ta'_n - \hat{\ta}_n\| \leq \delta$ with probability 1 for all $n$
sufficiently large $ \exp \left(-nD\left(g_n,f_{\hat{\ta}_n+t\sqrt{n}}\right) +
n D\left(g_n,f_{\hat{\ta}_n}\right)\right) \leq \exp\left(-\frac{1}{2} t'M
t\right). $ Therefore
\[
\int_{A_2} |w_n(t) | dt \leq \int_{A_2} \pi(\hat{\ta}_n +
t/\sqrt{n})e^{-\frac{1}{2} t' M t} + \pi(\ta_g) \int_{A_2}
e^{-\frac{1}{2} t I^D(\ta_g) t} dt < \infty.
\]
and the result follows from the pointwise convergence of $w(t)$ and the dominated convergence theorem.

We can prove
$
\int \|t\|_{2} |w_n(t)| dt \convas 0
$
in an analogous manner by observing that on $A_1$
\begin{align*}
\int_{A_1} \| t \|_2 |w_n(t)| dt & \leq  \int_{A_1} \|t\|_2 \pi(\hat{\ta}_n +
t/\sqrt{n})e^{-nD\left(g_n,f_{\hat{\ta}_n+t\sqrt{n}}\right) + n
D\left(g_n,f_{\hat{\ta}_n}\right)}dt \\
& \hspace{1cm} + \int_{A_1} \pi(\ta_g) \|t\|_2 e^{-\frac{1}{2}t' I^D(\ta_g) t}
dt
\end{align*}
and on $A_2$, $\|t\|_2 |w_n(t)| \convas 0$ and
\[
\int_{A_2} \|t\|_2 |w_n(t) | dt \leq \int_{A_2} \|t\|_2 \pi(\hat{\ta}_n +
t/\sqrt{n})e^{-\frac{1}{2} t' M t} + \pi(\ta_g) \int_{A_2}
\|t\|_2 e^{-\frac{1}{2} t I^D(\ta_g) t} dt < \infty.
\]
\end{proof}

Using this lemma, we prove Theorem \ref{l1convergence}.

\begin{proof}
First, from Assumption A\ref{MDE.efficiency},
$
\sqrt{n} \left( \hat{\ta}_n - \ta_g \right) \convd N(0,I^D(\ta_g)),
$
using that $\int|g_n(t)-f_{\ta_g}(t)|dt \convas 0$, and Assumption A\ref{l1continuity}, it follows that
\[
\sup_{\ta \in \Ta} \left| D\left(g_n,f_{\ta}\right) - D\left(g,f_{\ta}\right) \right| \convas 0
\]
and since $\hat{\ta}_n \convas \ta_g$ and Assumption A\ref{ass.smooth}
\[
D\left(g_n,f_{\hat{\ta}_n}\right) \convas D\left(g,f_{\ta_g}\right), \
\nabla_\ta D\left(g_n,f_{\hat{\ta}_n}\right) \convas
\nabla_\ta D\left(g,f_{\ta_g}\right), \ \nabla^2_\ta
D\left(g_n,f_{\hat{\ta}_n}\right) \convas \nabla^2_\ta
D\left(g,f_{\ta_g}\right).
\]

Now, we write that $ \pi^{*D}_n(t)  = \kappa_n^{-1} \pi(\hat{\ta}_n +
t/\sqrt{n})\exp\left[-nD\left(g_n,f_{\hat{\ta}_n+t\sqrt{n}}\right) + n
D(g_n,f_{\hat{\ta}})\right] $ where $\kappa_n$ is chosen so that $\int \pi^{*D}_n(t)
dt = 1$. Let
\[
w_n(t) = \pi(\hat{\ta}_n +
t/\sqrt{n})e^{-nD\left(g_n,f_{\hat{\ta}_n+t\sqrt{n}}\right) + n
D\left(g_n,f_{\hat{\ta}_n}\right)} - \pi(\ta_g) e^{-\frac{1}{2}t' I^D(\ta_g)t}.
\]
From Lemma \ref{wlemma}, it follows that
$
\int |w_n(t)| dt \convas 0
$
from which
\begin{align*}
\kappa_n = \int \pi(\hat{\ta}_n +
t/\sqrt{n})e^{-nD\left(g_n,f_{\hat{\ta}_n+t\sqrt{n}}\right) + n
D\left(g_n,f_{\hat{\ta}_n}\right)} dt & \convas \pi(\ta_g) \int
e^{-\frac{1}{2}t' I^D(\ta_g) t} dt \\
&  = \pi(\ta_g) \left( \frac{2\pi}{|I^D(\ta_g)|}
\right)^{p/2}
\end{align*}
and
\begin{align*}
& \lim_{n \rightarrow \infty} \int \left| \pi^{*D}_n(t) -
\left(\frac{I^D(\ta_g)}{2 \pi}\right)^{p/2} e^{-\frac{1}{2}t' I^D(\ta_g) t}
\right| dt \\
%& = \int \left| \kappa_n^{-1} w_n(t) +
%\left( \kappa_n^{-1} \pi(\ta_g) - \left( \frac{|I^D(\ta_g)|}{2\pi} \right)^{p/2} \right) e^{-\frac{1}{2}t' I^D(\ta_g) t} \right|dt \\
& \hspace{2.5cm} \leq \kappa_n^{-1} \int |w_n(t)| dt +  \left(
\frac{2\pi}{|I^D(\ta_g)|} \right)^{p/2} \left|
\kappa_n^{-1}\pi(\ta_g) - \left( \frac{|I^D(\ta_g)|}{2\pi} \right)^{p/2} \right| \\
& \hspace{2.5cm} \convas 0.
\end{align*}
That the result holds for $I^D(\ta_g)$ replaced with $\hat{I}^D_n(\hat{\ta}_n)$ follows
from the almost sure convergence of the latter to the former.
\end{proof}

\subsection{Proof of Theorem \ref{posteriorefficiency}}

\begin{proof}
Let $t = (t_1,\ldots,t_p)$, from Theorem \ref{l1convergence}
\[
\int t_i \pi^{*D}(t|x_1,\ldots,x_n) \convas \left(\frac{2 \pi}{|I^D(\ta_g)|}
\right)^{p/2} \int t_i e^{-\frac{1}{2} t' I^D(\ta_g) t} dt = 0.
\]
Since
$
\ta^*_n = E(\hat{\ta}_n + t/\sqrt{n}|X_1,\ldots,X_n)
$
we have
\[
\sqrt{n}\left( \ta^*_n - \hat{\ta}_n \right) \convas \left(\frac{2
\pi}{|I^D(\ta_g)|} \right)^{p/2} \int t e^{-\frac{1}{2} t' I^D(\ta_g) t} dt =
0.
\]
Since $\sqrt{n}\left(\hat{\ta}_n - \ta_g \right) \convd
N\left(0,I^D(\ta_g)\right)$, it follows that  $\sqrt{n}\left(\ta^*_n - \ta_g
\right) \convd N\left(0,I^D(\ta_g)\right)$; hence
$\ta^*_n$ is asymptotically normal, efficient as well as robust.
\end{proof}

\subsection{Efficiency Conditions for Minimum Disparity Estimators}
\label{sec:efficiency.conditions}

Here we provide conditions that ensure the consistency and asymptotic normailty
of the minimum-disparity estimator $\hat{\theta}_n$. There is a slight
variation through the literature in conditions required for efficiency (see
\citet{Beran77}, \citet{BSV97}, \citet{ParkBasu04} and
\citet{ChenVidyashankar06}). The conditions given below are adapted from
\citet{ChenVidyashankar06} for the specific case of Hellinger distance. A small
modification of these conditions will also provide the consistency and
asymptotic normality for more general disparities under appropriate conditions
on $G(\cdot)$. These are in addition to conditions A\ref{G.ass}-A\ref{identifiability}.

We first require conditions on the data and the proposed parametric family:
\begin{enumerate}[(D1)]
\item $X_1,\ldots,X_n$ are i.i.d. with distribution given by the density function $g(\cdot)$.

\item The parameter space $\Theta$ is locally compact.

\item $f_\ta(\cdot)$ is twice continuously differentiable with respect to $\ta$.

\item $\left\| \nabla_\ta \left(\sqrt{f_\ta(\cdot)} \right)\right\|_2$ is continuous and bounded.

\item $f^{-1}_\ta(x) \left[\nabla_\ta f_\ta(x)\right]$ is continuous and bounded in $L_2$ at $\ta = \ta_g$.

\item $f^{-1/2}_\ta(x) \left[\nabla^2_\ta f_\ta(x)\right] -
f^{-3/2}_\ta(x) \left[\nabla_\ta f_\ta(x)\right]  \left[\nabla_\ta
f_\ta(x)\right]^T$ is continuous and bounded in $L_2$ at $\ta = \ta_g$.

\item $f^{-1/2}_\ta(x) \left[\nabla_\ta f_\ta(x)\right] \left[\nabla_\ta f_\ta(x)\right]^T$
is continuous and bounded in $L_2$ at $\ta = \ta_g$.
\end{enumerate}

We also require conditions on the kernel function $K$ in the kernel density
estimate and its relationship to the parametric density family:
\begin{enumerate}[(K1)]
\item $K(\cdot)$ is symmetric about 0 and $\int K(t)dt = 1$.

\item The bandwidth is chosen so that $c_n \rightarrow 0$, $n c_n^2 \rightarrow 0$, $n c_n \rightarrow \infty$.

\item There is a sequence $a_n$, $n \geq 0$ diverging to infinity such that
\begin{enumerate}
\item For $X$ a random variable with density $f_{\ta_g}(\cdot)$
\[
\lim_{n \rightarrow \infty} n  P\left( |X| > a_n \right) = 0,
\]
\item
\[
\sup_{n \geq 1} \sup_{|x| < a_n} \sup_{t \in R} \left|K(t) \frac{f_{\ta_g}(x +
t c_n)}{f_{\ta_g}(x)} \right| < \infty,
\]
\item The parametric score functions have regular central behavior relative to the bandwidth:
\[
\lim_{n \rightarrow \infty} \frac{1}{n^{1/2} c_n} \int_{-a_n}^{a_n}
\frac{\nabla_\ta f_{\ta_g}(x)}{f_{\ta_g}(x)} dx = 0
\]
and
\[
\lim_{n \rightarrow \infty} \frac{1}{n^{1/2} c_n^4} \int_{-a_n}^{a_n}
\frac{\nabla_\ta f_{\ta_g}(x)}{f_{\ta_g}(x)} dx = 0,
\]
\item The score functions are smooth with respect to $K$ in an $L_2$ sense:
\[
\lim_{n \rightarrow \infty} \sup_{t \in R} \int_{\mathbb{R}} K(t) \left[ \frac{
\nabla_{\ta} f_{\ta_g}(x+c_nt)}{f_{\ta_g}(x+c_n t)} - \frac{\nabla_\ta
f_{\ta_g}(x)}{f_{\ta_g}(x)} \right]^2 f_{\ta_g}(x) dx = 0.
\]
\end{enumerate}
\end{enumerate}

This statement of assumptions in particular remove the condition that $K(t)$
has compact support, which was assumed in \citet{Beran77,BSV97,ParkBasu04}.
These assumptions significantly expand the class of kernels available for use
and hence expands the applicability of Theorem \ref{l1convergence} (see
\citet{HookerVidyashankar09} for formal details). In practice, it is
numerically more stable to use a Gaussian kernel or some other distribution
with support on the whole real line and we have used Gaussian kernels
throughout our numerical experiments. We also follow \citet{ChenVidyashankar06} in removing the
assumption of compactness of $\Theta$, replacing it with local compactness plus the identifiability
condition A\ref{identifiability}.

\section{Reducing the Kernel Density Dimension} \label{sec:reduction}

The conditional disparity formulation outlined above requires the estimation of
the density of a response conditioned on a potentially high-dimensional set of
covariates; this can result in asymptotic bias and poor performance in small
samples. In this section, we explore two methods for reducing the dimension of
the conditioning spaces.  The first is referred to as the
``marginal formulation'' and requires only a univariate, unconditional, density
estimate. This is a Bayesian extension of the approach suggested in
\citet{HookerVidyashankar09}.  It is more stable and computationally efficient
than schemes based on nonparametric estimates of conditional densities.
However, in a linear-Gaussian model with Gaussian covariates, it requires an
external specification of variance parameters for identifiability. For this
reason, we propose a two-step Bayesian estimate. The asymptotic analysis for
i.i.d. data can be extended to this approach by using the technical ideas in
\citet{HookerVidyashankar09}.

The second method produces a conditional formulation that relies on the
structure of a homoscedastic location-scale family $P(Y_i|X_i,\ta,\sigma) =
f_\sigma(y-\eta(X_i,\ta))$ and we refer to it as the
``conditional-homoscedastic'' approach. This method provides a full conditional
estimate by replacing a non-parametric conditional density estimate with a
two-step procedure as proposed in \citet{Hansen04}. The method involves first
estimating the mean function non-parametrically and then estimating a density
from the resulting residuals.

\subsection{Marginal Formulation}

\citet{HookerVidyashankar09} proposed basing inference on a marginal estimation
of residual density in a nonlinear regression problem. A model of the form
\[ Y_i = \eta(X_i,\ta) + \epsilon_i \]
is assumed for independent $\epsilon_i$ from a scale family with mean zero and
variance $\sigma^2$. $\ta$ is an unknown parameter vector of interest. A
disparity method was proposed based on a density estimate of the residuals
\[ e_i(\ta) = Y_i - \eta(X_i,\ta) \]
yielding the kernel estimate
\begin{equation} \label{marginal.density}
g_n^{(m)}(e,\ta,\sigma) = \frac{1}{n c_n} \sum K\left( \frac{ e - e_i(\ta)/\sigma}{
c_n} \right)
\end{equation}
and $\ta$ was estimated by minimizing
$D(\phi_{0,1}(\cdot),g_n^{(m)}(\cdot,\ta,\sigma))$ where $\phi_{0,1}$ is is the
postulated density. As described above, in a Bayesian context we replace the
log likelihood by \\ $-n D(\phi_{0,1}(\cdot),g_n^{(m)}(\cdot,\ta,\sigma))$.  Here we
note that although the estimated of $g_n^{(m)}(\cdot,\ta,\sigma)$ need not have
zero mean, it is compared to the centered density $\phi_{0,1}(\cdot)$ which
penalizes parameters for which the residuals are not centered.

This formulation has the advantage of only requiring the estimate of a
univariate, unconditional density $g_n^{(m)}(\cdot,\ta,\sigma)$. This reduces the
computational cost considerably as well as providing a density estimate that is
more stable in small samples.

\citet{HookerVidyashankar09} proposed a two-step procedure to avoid
identifiability problems in a frequentist context.  This involves replacing
$\sigma$ by a robust external estimate $\tilde{\sigma}$. It was observed that
estimates of $\ta$ were insensitive to the choice of $\tilde{\sigma}$. After an
estimate $\hat{\ta}$ was obtained by minimizing
$D(\phi_{0,1}(\cdot),g_n^{(m)}(\cdot,\ta,\tilde{\sigma}))$, an efficient estimate
of $\sigma$ was obtained by re-estimating $\sigma$ based on a disparity for the
residuals $e_i(\hat{\ta})$. A similar process can be undertaken here.

In a Bayesian context a plug-in estimate for $\sigma^2$ also allows the use of
the marginal formulation: an MCMC scheme is undertaken with the plug-in value
of $\sigma^2$ held fixed. A pseudo-posterior distribution for $\sigma$ can then
be obtained by plugging in an estimate for $\ta$ to a Disparity-Posterior for
$\sigma$. More explicitly, the following scheme can be undertaken:
\begin{enumerate}
\item Perform an MCMC sampling scheme for $\ta$ using a plug-in estimate for
$\sigma^2$.

\item Approximate the posterior distribution for $\sigma^2$ with an MCMC scheme
to sample from the D-posterior $P_D(\sigma^2|y) =
e^{-nD(g_n(\cdot,\hat{\ta},\sigma),\phi_{0,1}(\cdot))} \pi(\sigma^2)$ where
$\hat{\ta}$ is the EDAP estimate calculated above.
\end{enumerate}
This scheme is not fully Bayesian in the sense that fixed estimators of
$\sigma$ and $\ta$ are used in each step above. However, we conjecture that, as in
\citet{HookerVidyashankar09}, the two-step procedure will result in statistically efficient estimates
and asymptotically correct credible regions. We note that while we have
discussed this formulation with respect to regression problems, it can also be
employed with the plug-in procedure for random-effects models and we use this
in Section \ref{sec:survey}, below.

The formulation presented here resembles the methods proposed in
\citet{PakBasu98} based on a sum of disparities between  weighted density
estimates of the residuals and their expectations assuming the parametric
model. For particular combinations of kernels and densities, these estimates
are efficient, and the sum of disparities, appropriately scaled, should also be
substitutable for the likelihood in order to achieve an alternative
D-posterior.

\subsection{Nonparametric Conditional Densities for Regression Models in Location-Scale Families}

Under a homoscedastic location-scale model (where the errors are assumed to be i.i.d.) $p(Y_i|X_i,\ta,\sigma) =
f_{\sigma}(Y_i-\eta(X_i,\ta))$ where $f_\sigma$ is a distribution with zero
mean, an alternative density estimate may be used. We first define a
non-parametric estimate of the mean function
\[ m_n(x) = \frac{\sum Y_i K\left(\frac{x-X_i}{c_{n2}} \right) }{ \sum K\left(\frac{x-X_i}{c_{n2}} \right) } \]
and then a non-parametric estimate of the residual density
\[ g_n^{(c2)}(e) = \frac{1}{n c_{n1}} \sum K \left( \frac{e - y_i + m_n(X_i)}{c_{n2}} \right). \]
We then consider the disparity between the proposed $f_{\ta,\sigma}$ and $g_n$:
\[ D^{(c2)}(g_n,\ta,\sigma) = \sum D(g_n^{(c2)}(\cdot+m(X_i)),f_\sigma(\cdot+\eta(X_i,\ta))). \]
As before, $- D^{(c2)}(g_n,f)$ can be substituted for the log likelihood in an
MCMC scheme.

\citet{Hansen04} remarks that in the case of a homoscedastic conditional
density, $g_n^{(c2)}$ has smaller bias than $g_n^{(c)}$. This formulation does not
avoid the need to estimate the high-dimensional function $m_n(x)$. However, the
shift in mean does allow the method to escape the identification problems of
the marginal formulation while retaining some of its stabilization.

Online Appendix \ref{sim:linreg} gives details of a simulation study of both
conditional formulations and the marginal formulation above for a regression
problem with a three-dimensional covariate.  All disparity-based methods
perform similarly to using the posterior with the exception of the conditional
form in Section \ref{sec:conditional} when Hellinger distance is used which
demonstrates a substantial increase in variance. We speculate that this is due
to the sparsity of the data in high dimensions creating inliers; negative
exponential disparity is less sensitive to this problem \citep{BSV97}.

\section{Computational Considerations and Implementation} \label{computation}

Our experience is that the computational cost of employing Disparity-based
methods as proposed above is comparable to employing an MCMC scheme for the
equivalent likelihood and generally requires an increase in computation time by
a factor of between 2 and 10. Further, the comparative advantage of employing
estimates \eqref{mcestimate} versus \eqref{GHestimate} depends on the context
that is used.

Formally, we assume $N$ Monte Carlo samples is \eqref{mcestimate} and $M$
Gauss-Hermite quadrature points in \eqref{GHestimate} where typically $M < N$.
In this case, the cost of evaluating $g_n(z_i)$ in \eqref{mcestimate} is
$O(nN)$, but this may be pre-computed before employing MCMC, and the cost of
evaluating \eqref{mcestimate} for a new value of $\theta$ is $O(N)$. In
comparison, the use of \eqref{GHestimate} requires the evaluation of
$g_n(\xi_i(\theta))$ at each iteration at a $O(nM)$ each evaluation.

Within the context of conditional disparity metrics, we assume $N$ Monte Carlo
points used for each $X_i$ in the equivalent verion of \eqref{mcestimate} for
\eqref{cdisp} and note that in this context $N$ can be reduced due to the
additional averaging over the $X_i$. The cost of evaluating $g_n^{(c)}(z_j|X_i)$
from \eqref{conditional.kernel} for all $z_j$ and $X_i$ is $O(n^2N)$ for
\eqref{mcestimate} and $O(n^2M)$ for \eqref{GHestimate}. Here again the
computation can be carried out before MCMC is employed for \eqref{mcestimate},
requiring $O(nN)$ operations. In \eqref{GHestimate} the denominator of
\eqref{conditional.kernel} can be pre-computed, reducing the computational cost
of each iteration to $O(nM)$; however, in this case we will not necessarily
expect $M<N$.  Similar calculations apply to estimates based on $g_n^{(c2)}$.

For marginal disparities $g_n^{(m)}$ in \eqref{marginal.density} changes for each
$\theta$, requiring $O(nM)$ calculations to evaluate \eqref{GHestimate}.
Successful use of \eqref{mcestimate} would require the $z_i$ to vary smoothly
with $\theta$ and would also require the re-evaluation of $g_n^{(m)}(z_i)$ at a
cost of $O(nN)$ each iteration. Within the context of hierarchical models
above, $g_n$ varies with latent variables and this the use of
\eqref{mcestimate} will generally be more computationally efficient. The cost
of evaluating the likelihood is always $O(n)$.

While these calculations provide general guidelines to computational costs, the
relative efficiency of \eqref{mcestimate} and \eqref{GHestimate} strongly
depends on the implementation of the procedure. Our simulations have been
carried out in the \texttt{R} programming environment where we have found
\eqref{mcestimate} to be computationally cheaper anywhere it can be employed.
However, this will be platform-dependent -- changing with what routines are
given in pre-compiled code, for example -- and will also depend strongly on the
details of our implementation.

\section{Simulation Studies}

While we have established attractive asymptotic properties for these estimators their
finite sample properties remain an important source of investigation. Since these
estimates are based on nonparametric density estimation, we may suspect that they require
large sample sizes before their asymptotic properties become apparent. In fact, our studies
below demonstrate good performance even for small sample sizes.

\subsection{Gaussian and exponential-Gamma Distributions -- The i.i.d. Case} \label{sim:iid}

We undertook a simulation study for the normal mean example in Figure
\ref{likapprox} to examine the efficiency and robustness of Hellinger and
Negative-Exponential posterior samples.  1,000 sample data sets of size 20 from
a $N(5,1)$ population were generated. For each sample data set, a random walk
Metropolis algorithm was run for 20,000 steps using a $N(0,0.5)$ proposal
distribution and a $N(0,25)$ prior, placing the true mean one prior standard deviation above the prior mean. The kernel bandwidth was selected by the
bandwidth selection in \citet{SheatherJones91}. H- and N-posteriors were easily
calculated by combining the \texttt{KernSmooth} \citep{KernSmooth} and
\texttt{LearnBayes} \citep{LearnBayes} packages in \texttt{R}. Expected {\em a
posteriori} estimates for the sample mean were obtained along with 95\%
credible intervals from every second sample in the second half of the MCMC chain.
Outlier contamination was investigated by reducing the last one, two or five
elements in the data set by 3, 5 or 10. This choice was made so that both outliers and
prior influence the EDAP in the same direction. The analytic
posterior without the outliers is normal with mean 4.99 (equivalently, bias
of -0.01) and standard deviation 0.223. The results of this simulation are
summarized in Table \ref{simtable1}. As can be expected, the standard Bayesian
posterior suffers from sensitivity to large negative values whereas the disparity-based
methods remain nearly unchanged. Near-outliers at the smaller value of -3
resulted in similar biases across all methods. A comparison of CPU time
indicates that the use of disparity methods required a little more than twice
the computational effort as compared to using the likelihood within an MCMC
method.

In order to provide a comparison with classical robust methods, we conducted a comparison with an estimate based on a Huberized mean. Table \ref{hubsimtable1} provides the results of these estimates using the Huber loss as a log likelihood within an MCMC scheme and by the minimizer of the Huber loss. In both cases, we tried cut-points between quadratic and linear costs at the normal 0.8 and 0.99 critical values. We also considered replacing the log likelihood within the MCMC scheme with a loss corresponding to Tukey's biweight function with cut point 4.685; this has the appeal of being similar to disparity methods in both having a re-descending influence function and tails of the likelihood that do not go to zero at infinity. We notice here that Huberized methods with cut point at the 0.8 critical value and Tukey's biweight method both showed noticeable increases in standard deviation compared to Hellinger methods and the likelihood when no outliers were added to the data. The negative exponential disparity had a similar increase in variance in this case which we speculate is due to having relatively heavier tails than Hellinger distance and therefore greater influence from the prior (the maximal value of NED is $e^{-1}$ where as that of 2HD is 4).  While Huber both methods exhibited less bias due to outliers than the likelihood, the bias still grew substantially with large outliers. By contrast, both biweight and disparity methods demonstrate redescending influence as outliers are placed at more extreme values, but we observe that the biweight estimates were more strongly affected for larger numbers of contaminated observations.  Disparity-based methods had credible intervals between 5\% and 15\% wider than those based on the likelihood, but these were smaller than those for Huberized and biweight methods when a prior was employed.  The computing times for MCMC methods using Huber and biweight likelihoods are somewhat greater than for the use of the true likelihood, although computing the minimizing value is unsurprisingly faster.

The problem of estimating a Gaussian mean is made relatively straightforward by
the symmetry of the distribution. We therefore conducted a further study,
estimating both shape and scale parameters in an exponential-gamma distribution (ie, $\exp(X_i)$ has a $Gamma$ distribution, and the domain of the distribution is the whole real line).
In this
case, the shape parameter was chosen at 5 and the scale parameter at 0.25 and
these were given $\chi^2$ priors with degrees of freedom 3 and 0.3
respectively. 5,000 data sets were simulated of 20 points each and the
D-posteriors were calculated as above both with and without an outlier placed
at ln(20). For this chain a random walk Metropolis algorithm was again
conducted with the random walk on the log shape and log scale parameters again
using the \texttt{LearnBayes} package. Table \ref{simtable2} reports the
tabulated results. We note in particular that the efficiency of the H-posterior
has been considerably reduced, as have its coverage properties; additionally
two simulation runs were removed due to poor convergence. This is explained as
being due to the inlier effect; the skewness of the gamma distribution produces
density estimates $g_n$ that tend to have ``holes'' and the Hellinger disparity
is sensitive to these; an example is give in Figure \ref{gamma.holes}. By
contrast, the negative exponential disparity is much less sensitive.

\begin{table}
\begin{center}
{\Large No Outliers}\\
\vspace{0.2cm}

\begin{tabular}{|l|ccccc|}\hline
                     &   Bias   & SD    & Coverage & Length & CPU Time \\ \hline
Posterior            & -0.015   & 0.222 &   0.956  & 0.873  & 3.393 \\
Hellinger            & -0.015   & 0.225 &   0.954  & 0.920  & 7.669 \\
Negative Exponential & -0.018   & 0.229 &   0.973  & 1.022  & 7.731 \\
\hline
\end{tabular}
\vspace{0.5cm}

{\Large Outliers} \\
\vspace{0.2cm}

\begin{tabular}{|lc||ccc|ccc|ccc|} \hline
& & \multicolumn{3}{c}{1 Outlier} &  \multicolumn{3}{c}{2 Outliers} &
\multicolumn{3}{c}{5 Outliers} \\ \hline
&  Loc  & Bias    & SD    & Coverage & Bias   & SD    & Coverage & Bias   & SD    & Coverage \\ \hline \hline
Posterior
& -3     & -0.164  & 0.219  & 0.883   & -0.300 & 0.206 & 0.722    & -0.637 & 0.182 & 0.100 \\
& -5     & -0.264  & 0.219  & 0.778   & -0.490 & 0.206 & 0.375    & -1.053 & 0.182 & 0.001 \\
& -10    & -0.513  & 0.219  & 0.360   & -0.965 & 0.207 & 0.004    & -2.093 & 0.182 & 0.000 \\ \hline
HD
& -3     &  -0.109 & 0.246 & 0.920    & -0.194 & 0.275 & 0.859    & -0.237 & 0.299  & 0.770 \\
& -5     &  -0.027 & 0.238 & 0.942    & -0.040 & 0.257 & 0.928    & -0.024 & 0.305  & 0.865 \\
& -10    &  -0.014 & 0.234 & 0.948    & -0.019 & 0.249 & 0.935    &  0.018 & 0.286  & 0.883 \\ \hline
NED
& -3     &  -0.080 & 0.256 & 0.959    & -0.133 & 0.279 & 0.933    & -0.166 & 0.308  & 0.893 \\
& -5     &  -0.020 & 0.238 & 0.977    & -0.025 & 0.243 & 0.968    & -0.015 & 0.264  & 0.948 \\
& -10    &  -0.017 & 0.237 & 0.973    & -0.020 & 0.241 & 0.970    & -0.007 & 0.260  & 0.952 \\ \hline
\end{tabular}
\end{center}
\caption{A simulation study for a normal mean using the usual posterior, the
Hellinger posterior and the Negative Exponential posterior. Columns in the
first table give the bias and variance of the posterior mean, coverage and
average CPU time of the central 95\% credible interval based on 1,000
simulations. The second table provides results for 1, 2, and 5 outliers (large
columns) located at -3, -5, -10 (column Loc) for the posterior, Hellinger
distance (HD) and negative exponential disparity (NED).}  \label{simtable1}
\end{table}

\begin{table}
\begin{center}
{\Large No Outliers}\\
\vspace{0.2cm}

\begin{tabular}{|l|ccccc|}\hline
             & Bias   & SD    & Coverage & Length & CPU Time \\ \hline \hline
Bayes 80     & -0.017 & 0.229 &   0.973  & 0.978  & 5.337 \\
Bayes 99     & -0.015 & 0.223 &   0.956  & 0.877  & 4.962 \\ \hline
Freq 80      & -0.004 & 0.229 &   0.948  & 0.894  & 0.002 \\
Freq 99      & -0.004 & 0.223 &   0.960  & 0.894  & 0.001 \\ \hline
Biweight     & -0.017 & 0.228 &   0.977  & 1.007  & 3.523 \\ \hline
\end{tabular}
\vspace{0.5cm}

{\Large Outliers}\\
\vspace{0.2cm}

\begin{tabular}{|lc||ccc|ccc|ccc|} \hline
& & \multicolumn{3}{c}{1 Outlier} &  \multicolumn{3}{c}{2 Outliers} &
\multicolumn{3}{c}{5 Outliers} \\ \hline
Loc &   & Bias   & SD    & Coverage & Bias   & SD    & Coverage  & Bias   & SD    & Coverage \\ \hline \hline
Bayes
& 3     & -0.102 & 0.237 & 0.948    & -0.188 & 0.235 & 0.904     & -0.450 & 0.241 & 0.584 \\
80 & 5     & -0.101 & 0.237 & 0.950    & -0.188 & 0.235 & 0.907     & -0.451 & 0.241 & 0.582 \\
& 10    & -0.101 & 0.237 & 0.946    & -0.188 & 0.236 & 0.907     & -0.450 & 0.241 & 0.574 \\
\hline
Bayes
& 3     & -0.149 & 0.228 & 0.894    & -0.282 & 0.221 & 0.757     & -0.635 & 0.192 & 0.153 \\
99 & 5     & -0.151 & 0.231 & 0.893    & -0.290 & 0.228 & 0.754     & -0.704 & 0.231 & 0.147 \\
& 10    & -0.151 & 0.231 & 0.887    & -0.290 & 0.228 & 0.760     & -0.704 & 0.231 & 0.148 \\

\hline
Freq
& 3     & -0.087 & 0.238 & 0.922    & -0.172 & 0.236 & 0.872     & -0.429 & 0.242 & 0.454 \\
80 & 5     & -0.087 & 0.238 & 0.922    & -0.172 & 0.236 & 0.872     & -0.429 & 0.242 & 0.454 \\
& 10    & -0.087 & 0.238 & 0.922    & -0.172 & 0.236 & 0.872     & -0.429 & 0.242 & 0.454 \\

\hline
Freq
& 3     & -0.140 & 0.230 & 0.898    & -0.275 & 0.223 & 0.753     & -0.633 & 0.188 & 0.115 \\
99 & 5     & -0.140 & 0.231 & 0.897    & -0.279 & 0.229 & 0.751     & -0.693 & 0.231 & 0.115 \\
& 10    & -0.140 & 0.231 & 0.897    & -0.279 & 0.229 & 0.751     & -0.693 & 0.231 & 0.115 \\
\hline
Biweight
& 3     & -0.091 & 0.246 & 0.954    & -0.175 & 0.252 & 0.915     & -0.443 & 0.275 & 0.645 \\
& 5     & -0.018 & 0.237 & 0.974    & -0.019 & 0.236 & 0.972     & -0.022 & 0.243 & 0.967 \\
& 10    & -0.017 & 0.236 & 0.977    & -0.018 & 0.234 & 0.971     & -0.018 & 0.236 & 0.969 \\
\hline
\end{tabular}
\end{center}
\caption{Simulation results corresponding to Table \ref{simtable1} using Huberized and Tukey Biweight estimates
Bayes 80 and Bayes 99 correspond to replacing the log posterior with a Huber loss function using
cut points at the 0.8 and 0.99 critical values for a normal distribution. Freq 80 and Freq 99
correspond to obtaining estimators by minimizing the corresponding Huber loss functions.   In this
case, coverage corresponds to confidence intervals calculated as the estimator plus or minus
two estimated standard errors. Biweight correspond to replacing the log posterior with Tukey's biweight loss with cutpoint 4.685. } \label{hubsimtable1}
\end{table}

\begin{table}
\begin{center}
\begin{tabular}{|l|rrr|rrr|} \hline
\multicolumn{7}{|c|}{No Outliers} \\
\hline & \multicolumn{3}{|c|}{Shape} & \multicolumn{3}{|c|}{Scale} \\
                     & Bias   & Variance & Coverage & Bias   & Variance & Coverage\\ \hline
Posterior            & -0.088 & 0.571    & 0.9516 &  0.004 & 0.0005   & 0.9562 \\
Hellinger            & -0.081 & 1.005    & 0.850 &  0.010 & 0.0011   & 0.839 \\
Negative Exponential & -0.182 & 0.739    & 0.9436 &  0.008 & 0.0007   & 0.9556 \\
\hline \hline \multicolumn{7}{|c|}{Outlier at 20} \\
\hline & \multicolumn{3}{|c|}{Shape} & \multicolumn{3}{|c|}{Scale} \\
                     & Bias   & Variance & Coverage & Bias   & Variance & Coverage \\ \hline
Posterior            & -3.068 & 0.001    & 0 &  0.988 & 0.00006  & 0 \\
Hellinger            & -0.013 & 1.046    & 0.8508 & 0.010  & 0.0011   & 0.8440 \\
Negative Exponential & -0.210 & 0.725    & 0.9496 & 0.010  & 0.0008   & 0.9586 \\
\hline
\end{tabular}
\end{center}
\caption{A simulation study for an exponential-gamma using the usual posterior, the
Hellinger posterior and the Negative Exponential posterior. Columns give the
bias and variance of the posterior mean and the coverage of the central 95\%
credible interval based on 5000 simulations. Note that the same data sets are
used in for both tables, an outlier being added to the data sets when
calculating the quantities in the bottom table. The shape parameter is given in
the first column and the scale parameter in the second in each entry.  }
\label{simtable2}
\end{table}

\begin{figure}
\begin{center}
\includegraphics[height=8cm]{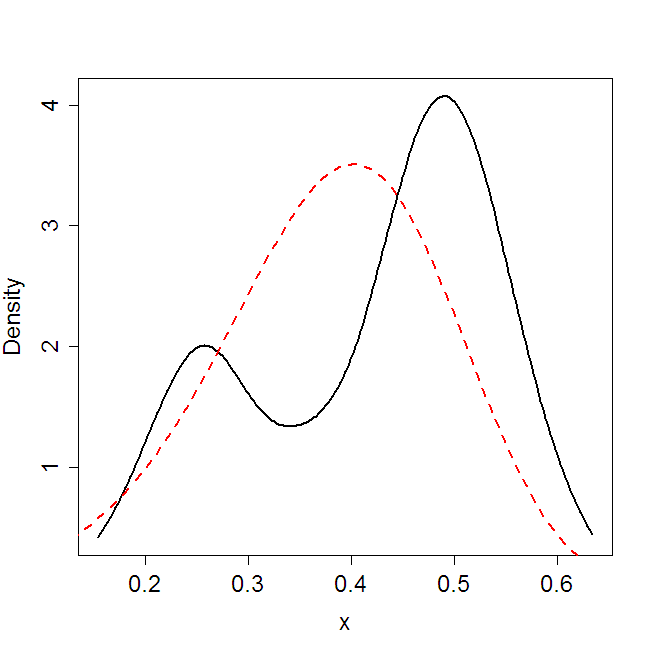}
\end{center}
\caption{Comparisons of $g_n(x)$ (dashed) and the true exponential-$\Gamma(5,0.25)$
density generating the data (solid). Hellinger distance estimators are
sensitive to valleys in the density that are due to a density estimate naively
applied to skewed data. Negative exponential disparity are robust towards these
effects as well.} \label{gamma.holes}
\end{figure}

\subsection{Linear Regression Models} \label{sim:linreg}

Figure \ref{linreg.fig} provides example D-posterior distributions of all
regression disparities described in Sections \ref{sec:conditional} and
\ref{sec:reduction} along with the posterior for an example 3-dimensional
linear regression based on 30 points. Both Hellinger and negative exponential
disparities were used. Covariates were generated from a standard normal
distribution with errors also generated from a standard normal distribution.
The likelihood is noticeably more concentrated than the disparity-based
posteriors, but all exhibit broadly similar shapes.

We have supplemented this experiment with a simulation. 1,000 data sets were
simulated from a linear regression model on three covariates. The covariates
were chosen from a three dimensional normal distribution with unit marginal
variances and correlation 0.5 between each pair of covariates. These were held
fixed across all simulations. The parameters in the model were chosen as
$(\beta_0,\beta_1,\beta_2,\beta_3,\sigma^2) = (1,1,1,1,1)$. Bandwidths were
chosen using the criterion in \citet{SheatherJones91} based on the observed
covariates and maximum likelihood estimates of the residuals. Both Hellinger
distance and negative exponential disparity were considered, and the
conditional formulation, marginal formulation and conditional-homoscedastic
form were used to estimate the five parameters in the model. Additionally, we
obtained Huberized estimators by replacing the log likelihood with a Huber loss
function and minimizing the Huber loss function. This resulted in
seven estimators including a Gaussian posterior. For each estimate a random-walk
Metropolis algorithm was run for 10,000 steps and EDAP estimates were
calculated from every fifth sample in the second half of the chain. The results
from this study are given in Table \ref{linreg.table}. Here we observe good
agreement between all disparity-based methods and the likelihood. The choice
between these will therefore depend on the amount of data available and the
dimension of the covariate space. Computing time is reported in this simulation
as well where disparity methods increase computing costs by a factor of between
3 and 5 while the use of a Huber loss within MCMC corresponds to about a 1.5
factor increase in cost. We note, however, that relative computing costs in
interpreted languages such as \texttt{R} are strongly dependent on the details
of implementation.

\begin{figure}
\begin{center}
\begin{tabular}{cc}
\includegraphics[height=8cm]{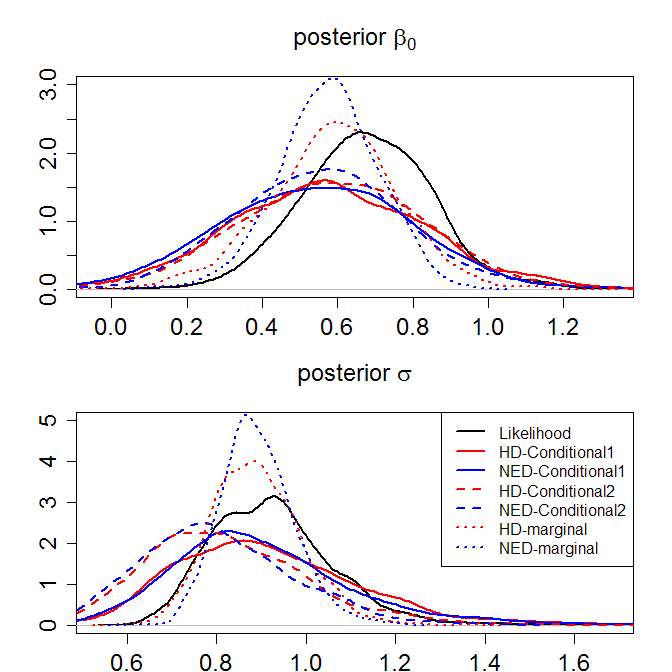} &
\includegraphics[height=8cm]{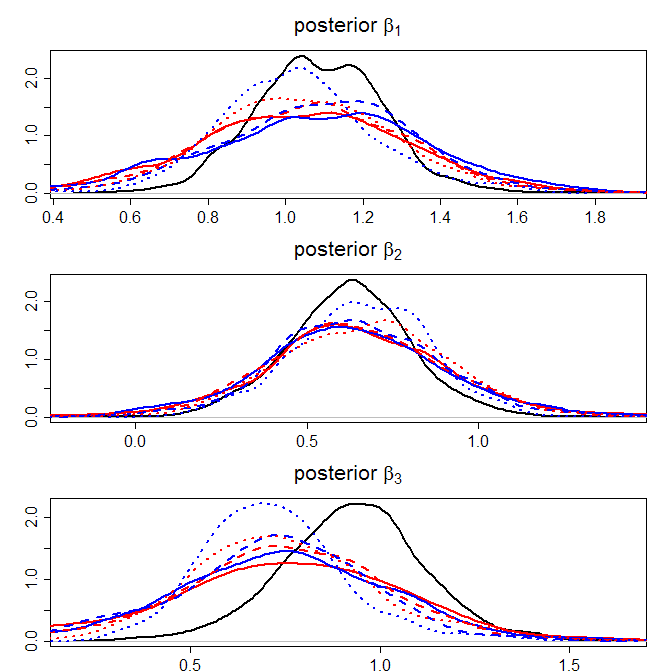}
\end{tabular}
\end{center}
\caption{D-posterior inference and linear regression. Top left: posterior for
$\beta_0$, bottom left: posterior for $\sigma$. Right: posterior for each
$\beta_i$, $i=1,\ldots,3$. Thick lines: posterior based on likelihood. Solid
lines: based on $g_n^{(c)}$, dashed: based on $g_n^{(c2)}$, dotted: marginal
formulation. Black: Hellinger distance, grey: negative exponential disparity.
Here 'Conditional 1' indicates the conditional density estimate in Section
\ref{sec:conditional}, 'Conditional 2' refers to the conditional-homoscedastic
approach.} \label{linreg.fig}
\end{figure}

\begin{table}
\begin{center}
\begin{tabular}{|l|c|c|c|c|c|c|} \hline
            &  Sigma        & beta 0         & beta 1         & beta 2         & beta 3 & time \\ \hline
Likelihood  & 1.030 (0.137) & 0.996  (0.186) & 0.993  (0.194) & 0.998  (0.217) & 1.006  (0.301)  & 20.251 \\
HD          & 0.974 (0.154) & 1.000  (0.213) & 0.993  (0.204) & 0.994  (0.239) & 0.980  (0.301)  & 62.315 \\
NED      & 0.926 (0.136) & 0.997  (0.201) & 0.994  (0.206) & 1.001  (0.224) & 0.964  (0.294)  & 70.812 \\
\hline
HD hom      & 0.881 (0.181) & 0.995  (0.338) & 0.987  (0.198) & 0.995  (0.217) & 0.980  (0.296)  & 62.184 \\
NED hom  & 0.829 (0.157) & 0.993  (0.235) & 0.988  (0.211) & 0.996  (0.228) & 0.974  (0.313)  & 70.649 \\
\hline
HD marg     & 1.062 (0.133) & 0.997  (0.193) & 0.994  (0.203) & 0.997  (0.225) & 1.003  (0.309)  & 102.959 \\
NED marg & 1.087 (0.137) & 1.000  (0.199) & 0.994  (0.210) & 0.998  (0.232) & 1.005  (0.321)  & 102.966 \\
\hline
Huber EAP & 0.878 (0.123)  & 0.998  (0.192)  & 0.994  (0.198)  & 1.000  (0.221)  & 1.005  (0.306) & 34.861 \\
Huber min & 0.919 (0.207)  & 0.999  (0.191)  & 0.995  (0.198)  & 1.000  (0.222)  & 1.006  (0.307) & 0.013 \\
\hline
\end{tabular}
\end{center}
\caption{Simulation results for linear regression. Columns give mean and
standard deviation of EDAP estimates of parameters  and CPU time based on 1,000 simulated
data sets. Rows correspond to posteriors using Likelihood, using conditional
density estimates with Hellinger distance (HD) and negative exponential disparity
(NED), using conditional-homoscedastic density estimates with Hellinger
distance (HD hom) and negative exponential disparity (NED hom), using the marginal
formulation of Hellinger distance (HD marg) and negative exponential disparity (NED marg) and
using a Huber loss scaled by $\sigma$ in place of the log likelihood (Huber EAP) and minimizing the Huber loss (Huber min).
In both the later case the cut point was chosen at the 0.8 critical value of a standard normal.}
\label{linreg.table}
\end{table}

\subsection{One-Way Random Effects Models} \label{sim:randeffect}

Figure \ref{randeffect} demonstrates the differences resulting from
robustifying different distributional assumptions in the one-way random effects
model described in Section \ref{sec:plugin}. We simulated a set of five groups
of twenty observations. Each group had variance 0.2 with a mean drawn from a
$N(\mu,1)$ population. This produces a random-effects model and the goal is to
estimate $\mu$. The plots in Figure \ref{randeffect} show that the use of
disparity methods in either the random effect or on the residual process or on
both provide very similar distributions to the correct posterior. When an
outlying group is added with mean 40, those methods that replace the random
effects distribution with a disparity are unaffected while those that do not
are substantially biased.

\begin{figure}
\begin{center}
\begin{tabular}{cc}
\includegraphics[height=5cm]{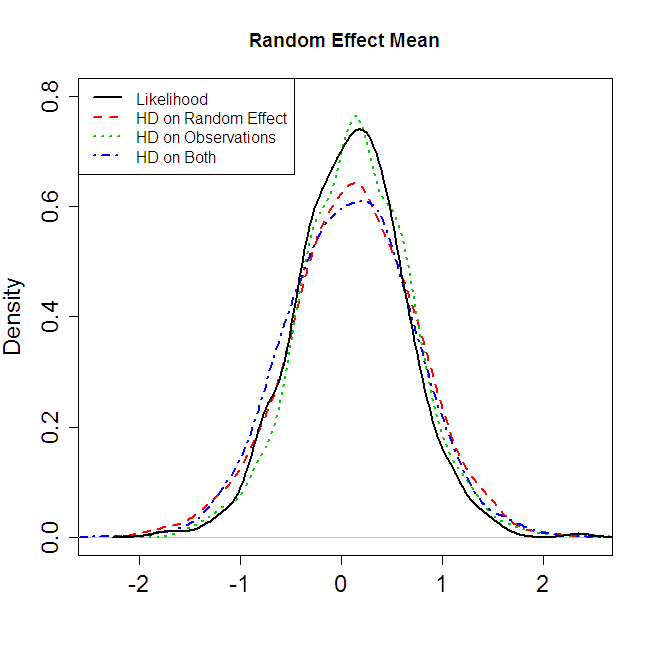}
&
\includegraphics[height=5cm]{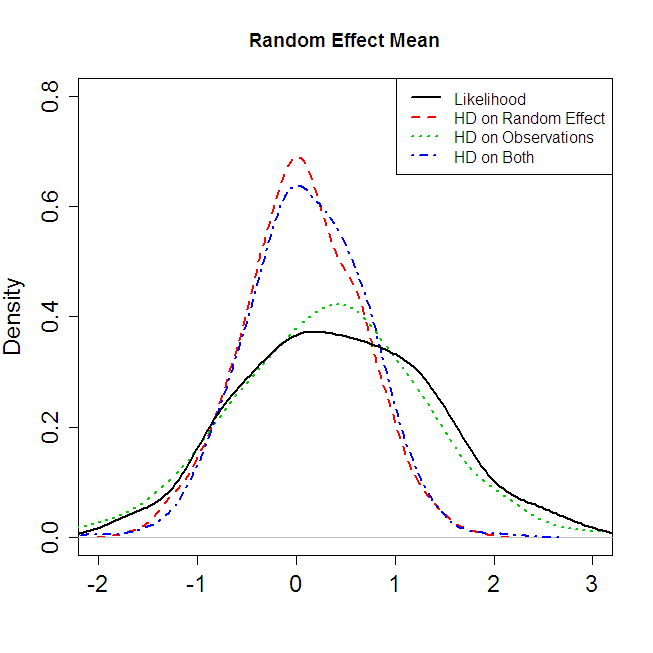}
\\
\includegraphics[height=5cm]{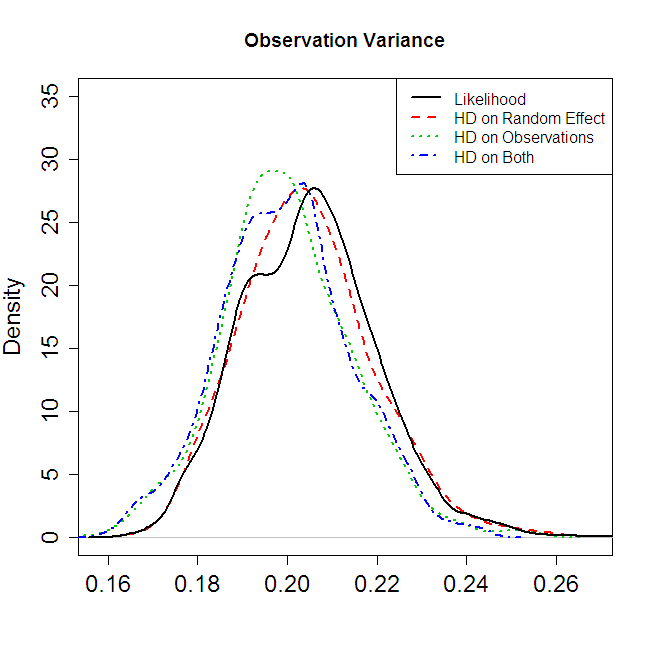}
&
\includegraphics[height=5cm]{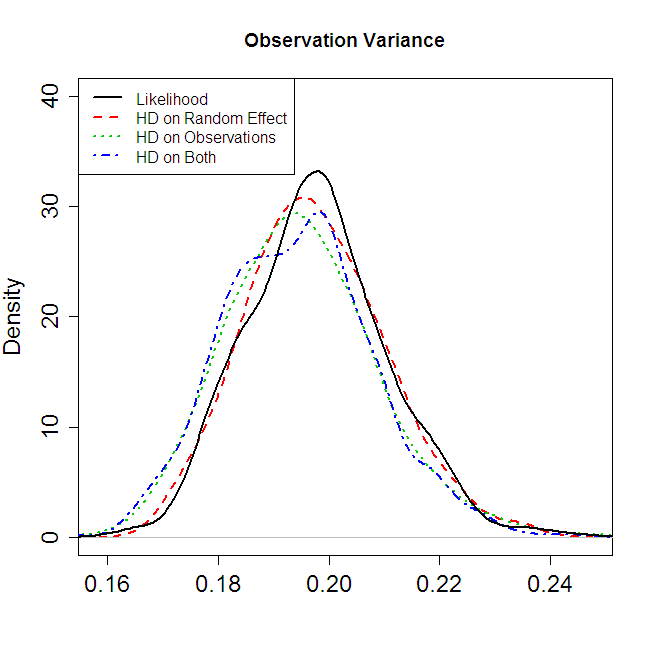}
\\
\includegraphics[height=5cm]{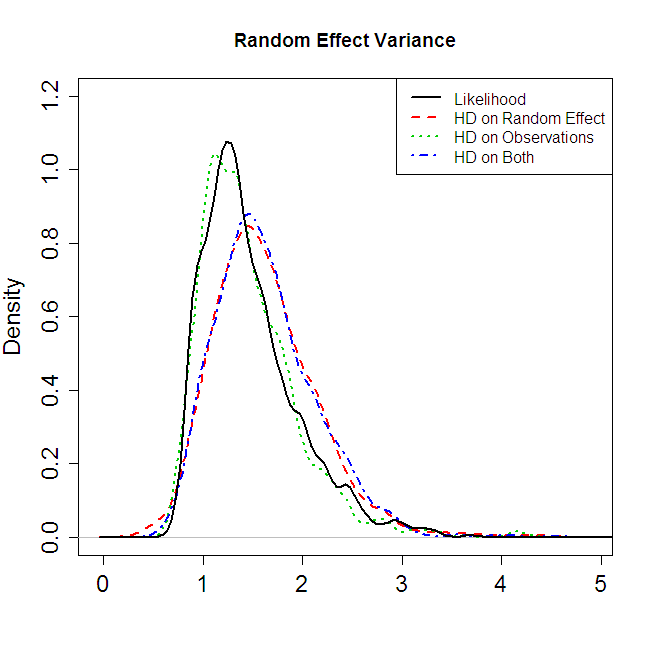}
&
\includegraphics[height=5cm]{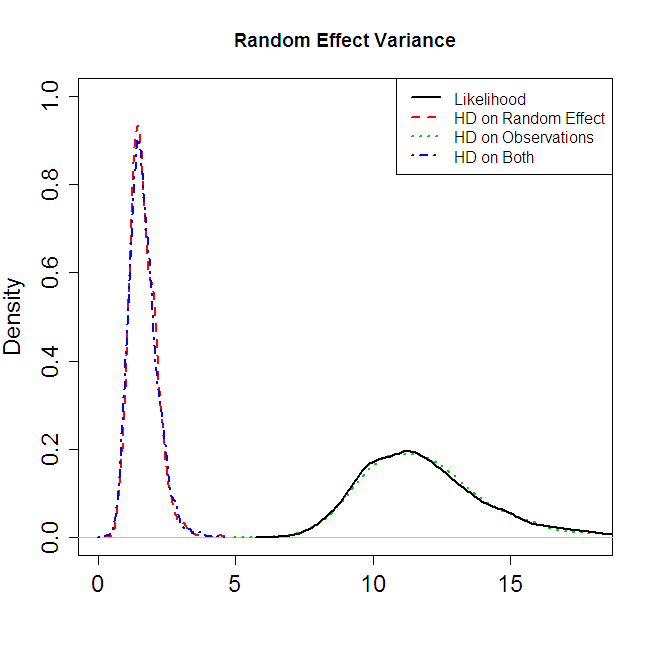}
\\
\end{tabular}
\end{center}
\caption{Posterior distributions for a random-effects model. 5 groups of 20
observations each with mean drawn from $N(\mu,1)$ and variance 0.2. $\mu$ was
given a $N(0,1)$ prior. Posterior densities for $\mu$ were estimated from every
50th observation in the last half of a random-walk Markov chain run for 100,000
steps. Hellinger distance was used to replace the component of the likelihood
representing $Y_{ij}-X_i$, representing $X_i$ and both. Left: posterior
densities for $\mu$ from the original data when the model is correct. Right:
posterior densities after adding a further group of 5 observation generated
from $N(40,0.2)$. First row: estimates for $\mu$, second: $\sigma^2$, third:
$\tau^2$.} \label{randeffect}
\end{figure}

In order to verify the apparent success of this method, we conducted a
simulation study of a one-way random effects model with ten random effects and
five observations per random effect. The random effects were simulated from a
standard normal distribution, while the observations were Gaussian, centered on
the random effect and with standard deviation 0.2.  We simulated 1,000 versions
of these data. For each version a random-walk Metropolis scheme was run to
sample from the posterior, the posterior with the observation likelihood
replaced by H-posterior, the posterior with the random effect likelihood
replaced by H-posterior and the posterior with both replacements. All MCMC
schemes were run for 10,000 steps with posterior distributions calculated based
on every 5th sample from the second half of the chain. We additionally added a
further random effect with five observations distributed around the value 40
with standard deviation 0.2.  Bandwidth parameters were chosen by the selection
criterion of \citet{SheatherJones91} based on maximum-likelihood estimates of
random effects and residual errors. The results of this simulation are
summarized in Table \ref{randeffect.sim}. Here we see that the estimation of
$\sigma$ is biased downwards by the estimation of a conditional density for
each random effect, based on only five observations and there is more
uncertainty in the estimate of $\tau$ when Hellinger distance is used in place
of the random effect log likelihood. The disparity-based methods otherwise
perform very similarly to the true likelihood. When an outlying random effect
is added, replacing the random effect likelihood with Hellinger distance
robustified inference, where those posteriors without this replacement were
severely biased.

\begin{table}
\begin{center}
\begin{tabular}{|l|rrr|rrr|rrr|} \hline
\multicolumn{10}{|c|}{No Outliers} \\ \hline
           &  $\mu$ &   sd($\mu$) & coverage &  $\sigma$ & sd($\sigma$) & coverage & $\tau$ & sd($\tau$) & coverage
           \\ \hline
Likelihood & 0.0013 & 0.286 & 0.947 & 0.200 & 0.0230 & 0.940 & 0.979 & 0.234 & 0.919 \\
HD - obs   & 0.0035 & 0.285 & 0.937 & 0.200 & 0.0231 & 0.939 & 1.070 & 0.342 & 0.900 \\
HD - rand  & 0.0021 & 0.286 & 0.936 & 0.191 & 0.0259 & 0.686 & 0.982 & 0.231 & 0.928 \\
HD - both  & 0.0024 & 0.291 & 0.933 & 0.191 & 0.0258 & 0.692 & 1.068 & 0.342 & 0.899 \\
\hline \multicolumn{10}{|c|}{Outlying Random Effect} \\ \hline
           &  $\mu$ &   sd($\mu$) & coverage &  $\sigma$ & sd($\sigma$) & coverage & $\tau$ & sd($\tau$) & coverage
           \\ \hline
Likelihood & 0.365 & 0.197 & 1.000 & 0.200 & 0.0222 & 0.930 & 10.11 & 0.163 & 0.000 \\
HD - obs   & 0.002 & 0.288 & 0.926 & 0.200 & 0.0223 & 0.922 & 1.088 & 0.389 & 0.887 \\
HD - rand  & 0.353 & 0.252 & 1.000 & 0.191 & 0.0249 & 0.684 & 10.13 & 0.196 & 0.000 \\
HD - both  & 0.002 & 0.295 & 0.917 & 0.191 & 0.0249 & 0.674 & 1.066 & 0.325 & 0.885 \\
\hline
\end{tabular}
\end{center}
\caption{ A simulation study for a one-way random effect model from using the
posterior, replacing the observation likelihood with a conditional Hellinger
distance, replacing the random effect likelihood with Hellinger distance and
making both replacements. The columns give the mean, standard deviation and
coverage of $\mu$, $\sigma^2$ and $\tau^2$ based on 1,000 simulations. The
lower table indicates the effect of adding an outlying random effect at 40. A
total of 8 simulations were excluded due to poor convergence of the MCMC
chain.} \label{randeffect.sim}
\end{table}

\section{Data} \label{sec:data}

Table \ref{parasite.data} provides the values of the parasitology data set used
in Section \ref{sec:parasite}. The data used for the class survey data in
Section \ref{sec:survey} are provided in Table \ref{survey.data}; they are
graphed in Figure \ref{surveydata.fig}.

\begin{table}
\begin{center}
\begin{tabular}{|l|ccccccc|} \hline
Horse & 1 & 2 & 3 & 4 & 5 & 6 & 7 \\ \hline
Pre-treatment & 2440 & 1000 & 1900 & 1820 & 3260 & 300 & 660 \\
Post-treatment & 580 & 320  & 400  & 160  & 60   & 40  & 120 \\ \hline
\end{tabular}
\end{center}
\caption{ Data used in Section \ref{sec:parasite}: pre- and post-treatment
fecal egg count for seven horses on one farm.} \label{parasite.data}
\end{table}

\begin{table}
\begin{center}
\begin{tabular}{|ccccc|} \hline
Status & 35 & 45 & 55 & 65 \\ \hline
a & 11.51293 & 11.775290 & 11.98293 & 12.10071 \\
a & 11.91839 & 12.206073 & 12.61154 & 12.20607 \\
a & 11.69525 & 12.100712 & 12.42922 & 12.42922 \\
a & 11.69525 & 12.100712 & 12.38839 & 12.79386 \\
a & 10.71442 & 10.819778 & 10.81978 & 10.81978 \\
a & 11.15625 & 11.512925 & 11.84940 & 11.98293 \\
a & 10.51867 & 10.596635 & 10.71442 & 11.00210 \\
a & 12.20607 & 12.206073 & 12.20607 & 12.20607 \\
a & 9.21034  &  9.903488 & 10.12663 & 10.30895 \\
a & 10.59663 & 11.002100 & 11.28978 & 11.40756 \\
a & 14.22098 & 12.100712 & 14.91412 & 15.60727 \\
f & 11.51293 & 11.918391 & 11.51293 & 10.81978 \\
f & 11.00210 & 11.002100 & 11.15625 & 11.15625 \\
f & 11.91839 & 12.206073 & 12.42922 & 12.42922 \\
f & 11.40756 & 11.695247 & 11.73607 & 11.77529 \\
f & 11.91839 & 13.122363 & 13.30468 & 13.30468 \\
f & 11.51293 & 11.695247 & 11.91839 & 12.20607 \\
f & 11.15625 & 11.512925 & 11.51293 & 11.69525 \\
f & 12.20607 & 12.611538 & 12.76569 & 11.00210 \\
f & 10.81978 & 11.002100 & 11.00210 & 11.00210 \\
f & 10.46310 & 11.002100 & 11.08214 & 11.08214 \\ \hline
\end{tabular}
\end{center}
\caption{ Class survey data used in Section \ref{sec:survey}; columns give
American (a) or foreign (f) status, and log expected salary at ages 35, 45, 55,
and 65.} \label{survey.data}
\end{table}

\begin{figure}
\begin{center}
\includegraphics[height=10cm]{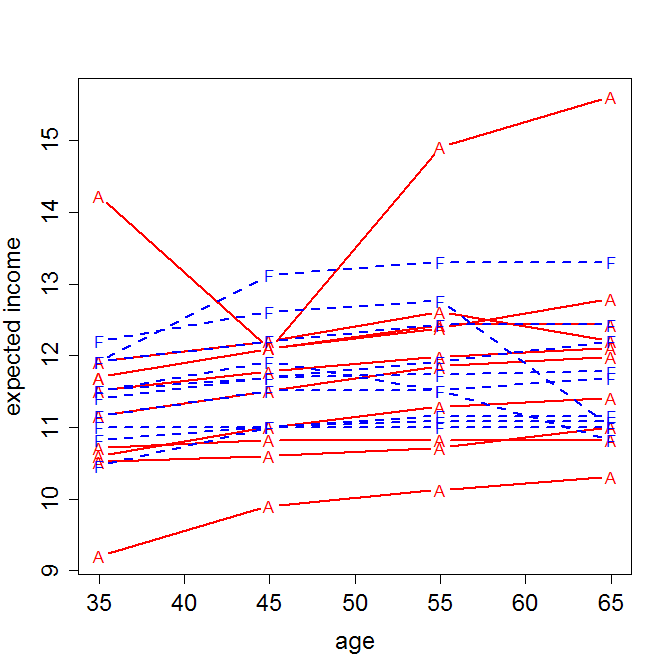}
\end{center}
\caption{Responses to an in-class survey on expected income at ages 35, 45, 55
and 65. Students were either foreign born (dashed) or American (solid).}
\label{surveydata.fig}
\end{figure}

\subsection{Random Slope Models for Class Survey Data} \label{sec:randomslope}

A further model exploration  of the class survey data allows a random slope for each student in addition
to the random offset. The model now becomes
\begin{equation} \label{randslope}
Y_{ijk} = b_{0ij} + b_{1ij} t_k + \epsilon_{ijk}
\end{equation}
with additional distributional assumptions
\[
b_{1ij} \sim N(\beta_{1j},\tau_1^2)
\]
and an additional term
\[
-\frac{1}{2 \tau_1^2} \sum_{i=1}^n \sum_{j \in \{a,f\}}
(b_{1ij}-\beta_{1j})^2
\]
added to \eqref{randintercept.cdll}. Here, this term can be robustified in a
similar manner to the robustification of the $b_{0ij}$. However, we note
that a robustification of the error terms would require the estimation of a
conditional density for each $ij$ -- based on only four data points.  We viewed
this as being too little to achieve reasonable results and therefore employed
the marginal formulation described in Section \ref{sec:reduction}.
Specifically, we first obtained residuals $e_{ijk}$ for the random slope model
from the maximum likelihood estimates for each subject-specific effect and
estimated
\[
\hat{\sigma}^2 = \frac{1}{0.674 \sqrt{2}} | e_{ijk} - \mbox{median}(e) |.
\]
Following this, we estimated a combined density for all residuals, conditional
on the random effects
\[
g_n^{(m)}(e;\betabold) = \frac{1}{8n c_n} \sum_{i=1}^n \sum_{j \in \{a,f\}}
\sum_{k=1}^4 K\left( \frac{e - (Y_{ijk}-b_{0ij} -b_{1ij}t_k)}{c_n}
\right)
\]
and replaced the first term in \eqref{randintercept.cdll} with
$-8nD(g_n^{(m)}(\cdot;\betabold),\phi_{0,\hat{\sigma}^2}(\cdot))$. Following the
estimation of all other parameters, we obtained new residuals $\tilde{e}_{ijk} =
Y_{ijk} - \tilde{b}_{0ij} - b_{1ij} t_k$
where the $\tilde{b}_{0ij}$ and $\tilde{b}_{1ij}$ are the EDAP
estimators. We then re estimated $\sigma^2$ based on its H-posterior using the
$\tilde{e}_{ijk}$ as data. In this particular case a large number of outliers
from a concentrated peak (see Figure \ref{randomslope.fig}) meant that
the use of Gauss-Hermite quadrature in the evaluation of
\[
HD(g_n^{(m)}(\cdot,\tilde{\betabold}),\phi_{0,\sigma^2}) = 2 - 2\int \left(
\sqrt{g_n^{(m)}(e;\tilde{\betabold})}/\sqrt{\phi_{0,\sigma^2}(e)} \right)
\phi_{0,\sigma^2}(e)  de
\]
suffered from large numerical errors and we therefore employed a Monte Carlo
integral based on 400 data points drawn from $g_n^{(m)}$ instead, using the
estimate \eqref{mcestimate}. To estimate both $\sigma^2$ and the other
parameters we used a Metropolis random walk algorithm which was again run for
200,000 iterations with estimates based on every 100th sample in the second
half of the chain.

Some results from this analysis are displayed in Figure \ref{randomslope.fig}.
The residual distribution of the $\tilde{e}_{ijk}$ show a very strong peak and
a number of isolated outliers. The estimated standard deviation of the residual
distribution is therefore very different between those methods that are robust
to outliers and those that are not; the mean posterior $\sigma$ was increased
by a factor of four between those methods using a Hellinger disparity and those
using the random effect log likelihood. The random slope variance was estimated
to be small by all methods --  we speculate that the distinction between random
effect log likelihoods and Hellinger methods is bias due to bandwidth size --
but this was not enough to overcome the differences between the methods
concerning the distinction between $\beta_{f1}$ and $\beta_{a1}$.

\begin{figure}
\begin{center}
\begin{tabular}{cc}
\includegraphics[height=6.5cm,angle=0]{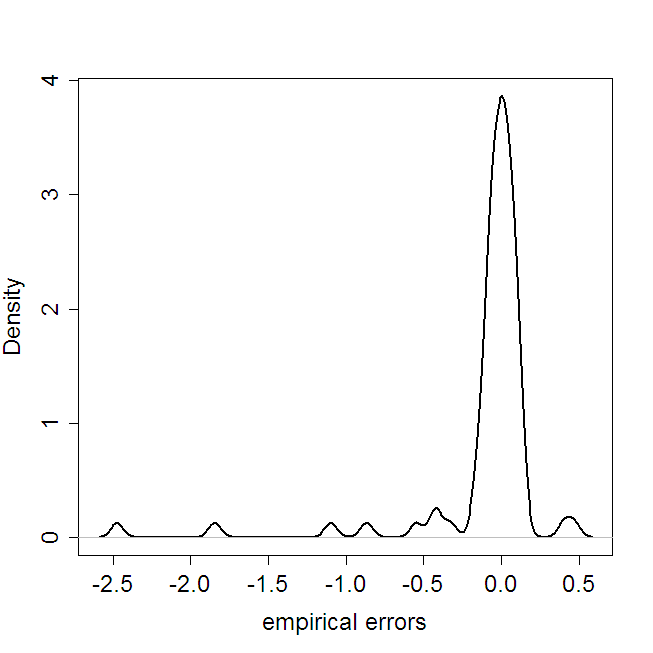} &
\includegraphics[height=6.5cm,angle=0]{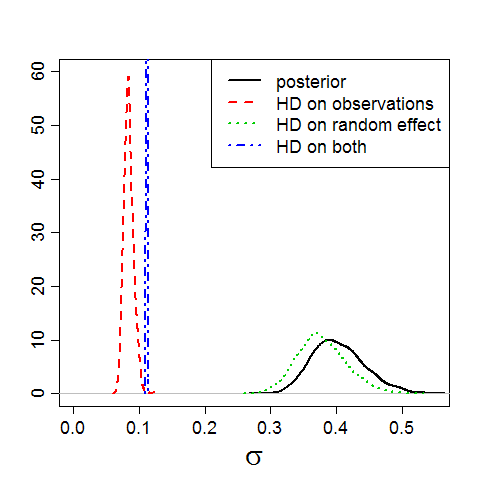} \\
\includegraphics[height=6.5cm,angle=0]{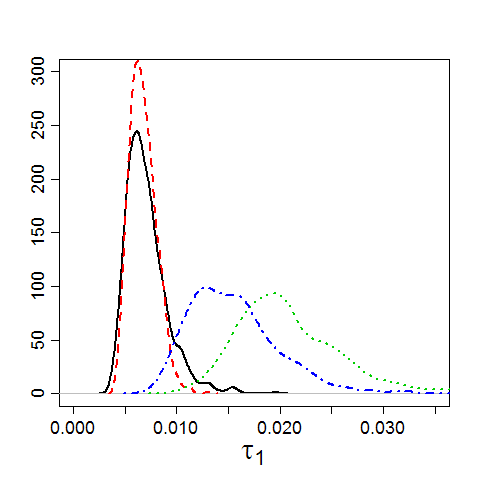} &
\includegraphics[height=6.5cm,angle=0]{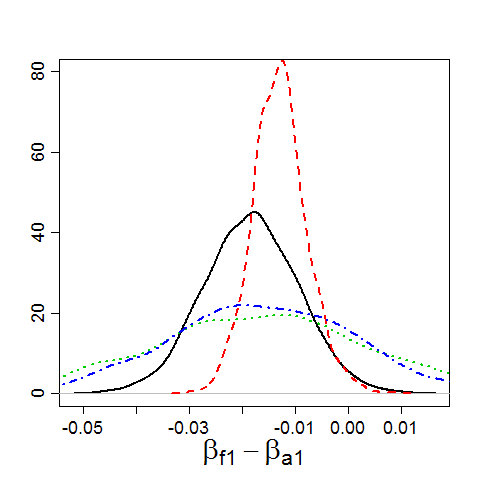}
\end{tabular}
\end{center}
\caption{Analysis of a random-slope random-intercept model for the class survey data.
Top left: a density estimate of the errors following a two-step procedure with the error variance
held constant. This shows numerous isolated outliers than create an ill-conditioned problem for
Gauss-Hermite quadrature methods. Top right: estimates of residual standard deviation replacing
various terms in the log likelihood with Hellinger distance. The effect of outliers is clearly apparent
in producing an over-estimate of variance. Bottom left: estimated variance of the random slope. Bottom
right: estimated difference in mean slope between American and foreign students.} \label{randomslope.fig}
\end{figure}

\end{document}